\newcommand{\fullgridgraph}{G^\mathrm{f}}\newcommand{\bindinggraph}{G^\mathrm{b}}
\newif\ifabstract
\newif\iffull
\newtoks\magicAppendix
\newtoks\magictoks
\newif\iflater
\long\def\later#1{\magictoks={#1}%
  \edef\magictodo{\noexpand\magicAppendix={\the\magicAppendix \par
    \the\magictoks%
  }}
  \magictodo}
\long\def\both#1{\magictoks={#1}%
  \edef\magictodo{\noexpand\magicAppendix={\the\magicAppendix \par
    \noexpand\setcounter{theorem-preserve}{\noexpand\arabic{theorem}}%
    \noexpand\setcounter{theorem}{\arabic{theorem}}%
    \noexpand\setcounter{section-preserve}{\noexpand\arabic{section}}%
    \noexpand\setcounter{section}{\arabic{section}}%
	\noexpand\let\noexpand\oldsection=\noexpand\thesection
	\noexpand\def\noexpand\thesection{\thesection}
	\noexpand\let\noexpand\oldlabel=\noexpand\label
	\noexpand\let\noexpand\label=\noexpand\blank
    \the\magictoks%
    \noexpand\setcounter{theorem}{\noexpand\arabic{theorem-preserve}}%
    \noexpand\setcounter{section}{\noexpand\arabic{section-preserve}}%
	\noexpand\let\noexpand\thesection=\noexpand\oldsection
	\noexpand\let\noexpand\label=\noexpand\oldlabel
  }}
  \magictodo
  \the\magictoks}
\long\def\later#1{#1}
\long\def\both#1{#1}
\long\def\magicappendix{
	\latertrue%
	\the\magicAppendix%
}
\title{The Power of Duples (in Self-Assembly): It's Not So Hip To Be Square}
\institute{}
\author{
  Jacob Hendricks%
        \thanks{Department of Computer Science and Computer Engineering, University of Arkansas,
      \protect\url{jhendric@uark.edu}
      Supported in part by National Science Foundation Grant CCF-1117672.}
\and
  Matthew J. Patitz%
    \thanks{Department of Computer Science and Computer Engineering, University of Arkansas,
      \protect\url{patitz@uark.edu}
      Supported in part by National Science Foundation Grant CCF-1117672.}
\and
 Trent A. Rogers%
        \thanks{Department of Mathematical Sciences, University of Arkansas,
      \protect\url{tar003@uark.edu}
      Supported in part by National Science Foundation Grant CCF-1117672.}
\and
  Scott M. Summers%
    \thanks{Department
of Computer Science, University of Wisconsin--Oshkosh, Oshkosh, WI 54901, USA.
\protect\url{summerss@uwosh.edu}.}
}
\date{}
\begin{document}

\maketitle
\vspace{-6ex}
\begin{abstract}
In this paper we define the Dupled abstract Tile Assembly Model (DaTAM), which is a slight extension to the abstract Tile Assembly Model (aTAM) that allows for not only the standard square tiles, but also ``duple'' tiles which are rectangles pre-formed by the joining of two square tiles.  We show that the addition of duples allows for powerful behaviors of self-assembling systems at temperature $1$, meaning systems which exclude the requirement of cooperative binding by tiles (i.e., the requirement that a tile must be able to bind to at least $2$ tiles in an existing assembly if it is to attach).  Cooperative binding is conjectured to be required in the standard aTAM for Turing universal computation and the efficient self-assembly of shapes, but we show that in the DaTAM these behaviors can in fact be exhibited at temperature $1$.  We then show that the DaTAM doesn't provide asymptotic improvements over the aTAM in its ability to efficiently build thin rectangles.  Finally, we present a series of results which prove that the temperature-$2$ aTAM and temperature-$1$ DaTAM have mutually exclusive powers.  That is, each is able to self-assemble shapes that the other can't, and each has systems which cannot be simulated by the other.  Beyond being of purely theoretical interest, these results have practical motivation as duples have already proven to be useful in laboratory implementations of DNA-based tiles.
\end{abstract}

\vspace{-30pt}
\section{Introduction}
\label{sec:intro}
\vspace{-8pt}
The abstract Tile Assembly Model (aTAM) \cite{Winf98} is a simple yet elegant mathematical model of self-assembling systems.  Despite the simplicity of its formulation, theoretical results within the aTAM have provided great insights into many fundamental properties of self-assembling systems.  These include results showing the power of these systems to perform computations \cite{Winf98,jCCSA,jSADS}, the ability to build shapes efficiently (in terms of the number of unique types of components, i.e. tiles, needed) \cite{AdChGoHu01,RotWin00,SolWin07}, limitations to what can be built and computed \cite{jCCSA,jSSADST}, and many other important properties (see \cite{PatitzSurvey,DotCACM} for more comprehensive surveys).  From this broad collection of results in the aTAM, one property of systems that has been shown to yield enormous power is \emph{cooperation}.  Cooperation is the term used to specify the situation where the attachment of a new tile to a growing assembly requires it to bind to more than one tile (usually 2) already in the assembly.  The requirement for cooperation is determined by a system parameter known as the \emph{temperature}, and when the temperature is equal to $1$ (a.k.a. temperature-$1$ systems), there is no requirement for cooperation.  A long-standing conjecture is that temperature-$1$ systems are in fact not capable of universal computation or efficient shape building (although temperature $\ge 2$ systems are) \cite{ManuchTemp1,jLSAT1,CooFuSch11,SingleNegative}.  However, in actual laboratory implementations of DNA-based tiles \cite{RothTriangles,OrigamiSeed,SchWin07,MaoLabReiSee00,WinLiuWenSee98}, the self-assembly performed by temperature-$2$ systems does not match the error-free behavior dictated by the aTAM, but instead, a frequent source of errors is the binding of tiles using only a single bond.  Thus, temperature-$1$ behavior erroneously occurs and can't be completely prevented.  This has led to the development of a number of error-correction and error-prevention techniques \cite{SolCookWin08,ChenKao10,WinBek03,SchWin07,ReiSahYin04} for use in temperature-$2$ systems.

Despite the conjectured weakness of temperature-$1$ systems, an alternative approach has been to try to find ways of modifying them in the hope of developing systems which can operate at temperature-$1$ while exhibiting powers of temperature-$2$ systems but without the associated errors.  Research along this path has resulted in an impressive variety of alternatives in which temperature-$1$ systems are capable of Turing universal computation:  using 3-D tiles \cite{CooFuSch11}, allowing probabilistic computations with potential for error \cite{CooFuSch11}, including glues with repulsive forces \cite{SingleNegative}, and using a model of staged assembly \cite{StepUniversality}.  While these are theoretically very interesting results, the promise for use in the laboratory of each is limited by current technologies.  Therefore, in this paper we introduce another technique for improving the power of temperature-$1$ systems, but one which makes use of building blocks which are already in use in laboratory implementations:  \emph{duples} (a.k.a. ``double tiles'' \cite{SchulYurWinfEvolution,OrigamiSeed,SchWin07,CheSchGoeWin07}).

We first introduce the \emph{Dupled abstract Tile Assembly Model} (DaTAM), which is essentially the aTAM extended to allow both square and rectangular, duple, tile types.  We then show a series of results within the DaTAM which prove that at temperature $1$ it is quite powerful: it is computationally universal and able to build $N \times N$ squares using $O(\log N)$ tile types.  We next demonstrate that, while the addition of duples does provide significant power to temperature-$1$ systems, it doesn't allow for asymptotic gains over the aTAM in terms of the tile complexity required to self-assemble thin rectangles, with the lower bound for an $N \times k$ rectangle being $\Omega\left(\frac{N^{1/k}}{k}\right)$.  We then provide a series of results which show that the neither the aTAM at temperature-$2$ nor the DaTAM at temperature-$1$ is strictly more powerful than the other, namely that in each there are shapes which can be self-assembled which are impossible to self-assemble in the other, and that there are also systems in each which cannot be simulated by the other.  These mutually exclusive powers provide a very interesting framework for further study of the unique abilities provided by the incorporation of duples into self-assembling systems.  Furthermore, as previously mentioned, the use of duples has already been proven possible in laboratory experiments, providing even further motivation for the model.

\vspace{-15pt}
\section{Preliminaries}\label{sec:prelims}
\vspace{-10pt}
In this section, due to space restrictions we provide high-level sketches of definitions used throughout the paper.  Please see the appendix for detailed definitions.
\vspace{-15pt}
\vspace{-10pt}
\subsection{Informal description of the Dupled abstract Tile Assembly Model}
\vspace{-10pt}
In this section, we give a very brief, informal description of the abstract Tile Assembly Model (aTAM) and the Dupled abstract Tile Assembly Model (DaTAM).  For a more detailed, technical definition please refer to Section~\ref{sec:datam-formal}.

The abstract Tile Assembly Model (aTAM) was introduced by Winfree \cite{Winf98}.  In the aTAM, the basic components are translatable but non-rotatable \emph{tiles} which are unit squares with \emph{glues} on their edges.  Each glue consists of a string \emph{label} value and an integer \emph{strength} value.  A \emph{tile type} is a unique mapping of glues (including possibly the \emph{null} glue) to $4$ sides, and a tile is an instance of a tile type.  Assembly begins from a specially designated \emph{seed} which is usually a single tile but maybe be a pre-formed collection of tiles, and continues by the addition of a single tile at a time until no more tiles can attach.  A tile is able to bind to an adjacent tile if the glues on their adjacent edges match in label and strength, and can attach to an assembly if the sum of the strengths of binding glues meets or exceeds a system parameter called the \emph{temperature} (which is typically set to either $1$ or $2$).  A \emph{tile assembly system} (TAS) is an ordered 3-tuple $(T,\sigma,\tau)$ where $T$ is the set of tile types (i.e. tile set), $\sigma$ is the seed configuration, and $\tau$ is the temperature.

The Dupled abstract Tile Assembly Model (DaTAM) is an extension of the aTAM which allows for systems with square tiles as well as rectangular tiles. The rectangular tiles are $2 \times 1$ or $1 \times 2$ rectangles which can logically be thought of as two square tiles which begin pre-attached to each other along an edge, hence the name \emph{duples}.  A \emph{dupled tile assembly system} (DTAS) is an ordered 5-tuple $(T,S,D,\sigma,\tau)$ where $T$, $\sigma$, and $\tau$ are as for a TAS, and $S$ is the set of singleton (i.e. square) tiles which are available for assembly, and $D$ is the set of duple tiles.  The tile types which make up $S$ and $D$ all belong to $T$, with those in $D$ each being a combination of two tile types from $T$.

\ifabstract
\later{
\section{Formal definition of the Dupled abstract Tile Assembly Model}
\label{sec:datam-formal}

This section gives a formal definition of the ``Dupled'' abstract Tile Assembly Model (DaTAM). The dupled aTAM is a mild extension of Winfree's abstract tile assembly model \cite{Winf98}. For readers unfamiliar with the aTAM,~\cite{Roth01} gives an excellent introduction to the model.

Given $V \subseteq \Z^2$, the \emph{full grid graph} of $V$ is the undirected graph $\fullgridgraph_V=(V,E)$,
and for all $\vec{x}, \vec{y}\in V$, $\left\{\vec{x},\vec{y}\right\} \in E \iff \| \vec{x} - \vec{y}\| = 1$; i.e., if and only if $\vec{x}$ and $\vec{y}$ are adjacent on the $2$-dimensional integer Cartesian space. Fix an alphabet $\Sigma$. $\Sigma^*$ is the set of finite strings over $\Sigma$. Let $\Z$, $\Z^+$, and $\N$ denote the set of integers, positive integers, and nonnegative integers, respectively.

A \emph{square tile type} is a tuple $t \in (\Sigma^* \times \N)^{4}$; i.e. a unit square, with four sides, listed in some standardized order, and each side having a \emph{glue} $g \in \Sigma^* \times \N$ consisting of a finite string \emph{label} and nonnegative integer \emph{strength}. Let $T\subseteq (\Sigma^* \times \N)^{4}$ be a set of tile types. We define a set of \emph{singleton types} to be any subset $S \subseteq T$. Let $t = ((g_N,s_N),(g_S,s_S),(g_E,s_E),(g_W,s_W)) \in T$, $d\in \{N,S,E,W\} = \mathcal{D}$, and write $Glue_d(t) = g_d$ and $Strength_d(t) = s_d$. A \emph{duple type} is defined as an element of the set
$\{ (x,y,d) \mid x,y\in T, \; d\in\mathcal{D}, \; Glue_d(x) = Glue_{-d}(y), \; \textmd{ and }Strength_d(x)=Strength_{-d}(y)\geq\tau \}$.

A {\em configuration} is a (possibly empty) arrangement of tiles on the integer lattice $\Z^2$, i.e., a partial function $\alpha:\Z^2 \dashrightarrow T$.
Two adjacent tiles in a configuration \emph{interact}, or are \emph{attached}, if the glues on their abutting sides are equal (in both label and strength) and have positive strength.
Each configuration $\alpha$ induces a \emph{binding graph} $\bindinggraph_\alpha$, a grid graph whose vertices are positions occupied by tiles, according to $\alpha$, with an edge between two vertices if the tiles at those vertices interact. An \emph{assembly} is a connected, non-empty configuration, i.e., a partial function $\alpha:\Z^2 \dashrightarrow T$ such that $\fullgridgraph_{\dom \alpha}$ is connected and $\dom \alpha \neq \emptyset$. The \emph{shape} $S_\alpha \subseteq \Z^d$ of $\alpha$ is $\dom \alpha$. Let $\alpha$ be an assembly and $B \subseteq
\mathbb{Z}^2$. $\alpha$ \emph{restricted to} $B$, written as $\alpha
\upharpoonright B$, is the unique assembly satisfying $\left(\alpha
\upharpoonright B\right) \sqsubseteq \alpha$, and $\dom{\left(\alpha
\upharpoonright B\right)} = B$

Given $\tau\in\Z^+$, $\alpha$ is \emph{$\tau$-stable} if every cut of~$\bindinggraph_\alpha$ has weight at least $\tau$, where the weight of an edge is the strength of the glue it represents. When $\tau$ is clear from context, we say $\alpha$ is \emph{stable}.
Given two assemblies $\alpha,\beta$, we say $\alpha$ is a \emph{subassembly} of $\beta$, and we write $\alpha \sqsubseteq \beta$, if $S_\alpha \subseteq S_\beta$ and, for all points $p \in S_\alpha$, $\alpha(p) = \beta(p)$. Let $\mathcal{A}^T$ denote the set of all assemblies of tiles from $T$, and let $\mathcal{A}^T_{< \infty}$ denote the set of finite assemblies of tiles from $T$. 

A \emph{dupled tile assembly system} (DTAS) is a tuple $\mathcal{T} = (T,S,D,\sigma,\tau)$, where $T$ is a finite tile set, $S \subseteq T$ is a finite set of singleton types, $D$ is a finite set of duple tile types, $\sigma:\Z^2 \dashrightarrow T$ is the finite, $\tau$-stable, \emph{seed assembly}, and $\tau\in\Z^+$ is the \emph{temperature}.

Given two $\tau$-stable assemblies $\alpha,\beta$, we write $\alpha \to_1^{\mathcal{T}} \beta$ if $\alpha \sqsubseteq \beta$, $0 < |S_{\beta} \setminus S_{\alpha}| \leq 2$. In this case we say $\alpha$ \emph{$\mathcal{T}$-produces $\beta$ in one step}. The \emph{$\mathcal{T}$-frontier} of $\alpha$ is the set $\partial^\mathcal{T} \alpha = \bigcup_{\alpha \to_1^\mathcal{T} \beta} S_{\beta} \setminus S_{\alpha}$, the set of empty locations at which a tile could stably attach to $\alpha$.

A sequence of $k\in\Z^+ \cup \{\infty\}$ assemblies $\alpha_0,\alpha_1,\ldots$ over $\mathcal{A}^T$ is a \emph{$\mathcal{T}$-assembly sequence} if, for all $1 \leq i < k$, $\alpha_{i-1} \to_1^\mathcal{T} \alpha_{i}$.
The {\em result} of an assembly sequence is the unique limiting assembly (for a finite sequence, this is the final assembly in the sequence).
If $\vec{\alpha} = (\alpha_0,\alpha_1,\ldots)$ is an assembly sequence in $\mathcal{T}$ and $\vec{m} \in \mathbb{Z}^2$, then the $\vec{\alpha}$\emph{-index} of $\vec{m}$ is $i_{\vec{\alpha}}(\vec{m}) = $min$\{ i \in \mathbb{N} | \vec{m} \in \dom \alpha_i\}$.  That is, the $\vec{\alpha}$-index of $\vec{m}$ is the time at which any tile is first placed at location $\vec{m}$ by $\vec{\alpha}$.  For each location $\vec{m} \in \bigcup_{0 \leq i \leq l} \dom \alpha_i$, define the set of its input sides IN$^{\vec{\alpha}}(\vec{m}) = \{\vec{u} \in U_2 | \mbox{str}_{\alpha_{i_{\alpha}}(\vec{m})}(\vec{u}) > 0 \}$.

We write $\alpha \to^\mathcal{T} \beta$, and we say $\alpha$ \emph{$\mathcal{T}$-produces} $\beta$ (in 0 or more steps) if there is a $\mathcal{T}$-assembly sequence $\alpha_0,\alpha_1,\ldots$ of length $k$ such that
(1) $\alpha = \alpha_0$,
(2) $S_\beta = \bigcup_{0 \leq i < k} S_{\alpha_i}$, and
(3) for all $0 \leq i < k$, $\alpha_{i} \sqsubseteq \beta$.
If $k$ is finite then it is routine to verify that $\beta = \alpha_{k-1}$.

We say $\alpha$ is \emph{$\mathcal{T}$-producible} if $\sigma \to^\mathcal{T} \alpha$, and we write $\prodasm{\mathcal{T}}$ to denote the set of $\mathcal{T}$-producible assemblies. An assembly $\alpha$ is \emph{$\mathcal{T}$-terminal} if $\alpha$ is $\tau$-stable and $\partial^\mathcal{T} \alpha=\emptyset$.
We write $\termasm{\mathcal{T}} \subseteq \prodasm{\mathcal{T}}$ to denote the set of $\mathcal{T}$-producible, $\mathcal{T}$-terminal assemblies. If $|\termasm{\mathcal{T}}| = 1$ then  $\mathcal{T}$ is said to be {\em directed}.

We say that a DTAS $\mathcal{T}$ \emph{strictly (a.k.a. uniquely) self-assembles} a shape $X \subseteq \Z^2$ if, for all $\alpha \in \termasm{\mathcal{T}}$, $S_{\alpha} = X$; i.e., if every terminal assembly produced by $\mathcal{T}$ places tiles on -- and only on -- points in the set $X$.

In this paper, we consider scaled-up versions shapes. Formally, if $X$ is a shape and $c \in \mathbb{N}$, then a $c$-\emph{scaling} of $X$ is defined as the set $$X^c = \left\{ (x,y) \in \mathbb{Z}^2 \; \left| \; \left( \left\lfloor \frac{x}{c} \right\rfloor, \left\lfloor \frac{y}{c} \right\rfloor \right) \in X \right.\right\}.$$ Intuitively, $X^c$ is the shape obtained by replacing each point in $X$ with a $c \times c$ block of points. We refer to the natural number $c$ as the \emph{scaling factor} or \emph{resolution loss}.
} 
\vspace{-15pt}
\subsection{Zig-zag tile assembly systems}
\vspace{-10pt}
Originally defined in \cite{CookFuSch11}, we define zig-zag tile assembly systems and compact zig-zag tile assembly systems in the same manner as \cite{SingleNegative}.  In \cite{SingleNegative} they called a system $\mathcal{T} = (T, \sigma, \tau)$ a zig-zag tile assembly system provided that $\mathcal{T}$ is directed with a single assembly sequence, and for any producible assembly $\alpha$ of $\mathcal{T}$, $\alpha$ does not contain a tile with an exposed south glue. More intuitively, a zig-zag tile assembly system is a system which grows to the left or right, grows up some amount, and then continues growth again to the left or right.  Moreover, we call a zig-zag tile assembly system $\mathcal{T} = (T, \sigma, \tau)$ a \emph{compact zig-zag tile assembly system} if and only if for every tile $t$ in any assembly $\alpha$ of $\mathcal{T}$, the sum of the strengths of the north and south glues of $t$ is less than $2\tau$. Informally, this can be thought of as a zig-zag tile assembly system which is only able to travel upwards one tile at a time before being required to zig-zag again. For more rigorous definitions see Section~\ref{sec:zigzag_def_details}.

\ifabstract
\later{

\subsection{Zig-zag tile assembly system definition details}\label{sec:zigzag_def_details}

In \cite{SingleNegative} they called a system $\mathcal{T} = (T, \sigma, \tau)$ a zig-zag tile assembly system provided that (1) $\mathcal{T}$ is directed, (2) there is a single sequence $\vec{\alpha} \in \mathcal{T}$ with $\termasm{\calT} = \{\vec{\alpha}\}$, and (3) for every $\vec{x} \in \dom \alpha, (0,1) \not\in$ IN$^{\vec{\alpha}}(\vec{x})$. More intuitively, a zig-zag tile assembly system is a system which grows to the left or right, grows up some amount, and then continues growth again to the left or right.  Again, as defined in \cite{SingleNegative}, we call a tile assembly system $\mathcal{T} = (T, \sigma, \tau)$ a \emph{compact zig-zag tile assembly system} if and only if $\termasm{\calT} = \{\vec{\alpha}\}$ and for every $\vec{x} \in \dom \alpha$ and every $\vec{u} \in U_2$, $\textmd{str}_{\alpha(\vec{x})}(\vec{u}) + \textmd{str}_{\alpha(\vec{x})}(-\vec{u}) < 2\tau$. Informally, this can be thought of as a zig-zag tile assembly system which is only able to travel upwards one tile at a time before being required to zig-zag again.

}

\vspace{-15pt}
\subsection{Simulation}
\vspace{-8pt}
\label{sec:simulation_def}

In this section, we present a high-level sketch of what we mean when saying that one system \emph{simulates} another.  Please see Section~\ref{sec:simulation_def_details} for complete, technical definitions, which are based on those of \cite{IUNeedsCoop}.

For one system $\mathcal{S}$ to simulate another system $\mathcal{T}$, we allow $\mathcal{S}$ to use square (or rectangular when simulating duples) blocks of tiles called \emph{macrotiles} to represent the simulated tiles from $\mathcal{T}$.  The simulator must provide a scaling factor $c$ which specifies how large each macrotile is, and it must provide a \emph{representation function}, which is a function mapping each macrotile assembled in $\mathcal{S}$ to a tile in $\mathcal{T}$.  Since a macrotile may have to grow to some critical size (e.g. when gathering information from adjacent macrotiles about the simulated glues adjacent to its location) before being able to compute its identity (i.e. which tile from $\mathcal{T}$ it represents), it's possible for non-empty macrotile locations in $\mathcal{S}$ to map to empty locations in $\mathcal{T}$, and we call such growth \emph{fuzz}.  In standard simulation definitions (e.g. those in \cite{IUNeedsCoop,2HAMIU,Signals3D,IUSA}), fuzz is restricted to being laterally or vertically adjacent to macrotile positions in $\mathcal{S}$ which map to non-empty tiles in $\mathcal{T}$.  We follow this convention for the definition of simulation of aTAM systems by DaTAM systems.  However, since duples occupy more than a unit square of space, for our definition of aTAM systems simulating DaTAM systems, we allow fuzz to extend to a Manhattan distance of $2$ from a macrotile which maps to a non-empty tile in $\mathcal{T}$.  As a further concession to the size of duples, for that simulation definition we also allow empty macrotile locations in $\mathcal{S}$ to map to tiles in $\mathcal{T}$, provided they are half of a duple for which the other half has sufficiently grown.  Thus, while our result for aTAM systems simulating DaTAM systems (Theorem~\ref{thm:aTAM_cannot_sim_DaTAM}) shows its impossibility in general, our intent with the simulation definitions is to relax them sufficiently that, if simulation equivalent to the standard notions of simulation were possible, these definitions would allow it.

Given the notion of block representations, we say that $\mathcal{S}$ simulates $\mathcal{T}$ if and only if (1) for every producible assembly in $\mathcal{T}$, there is an equivalent producible assembly in $\mathcal{S}$ when the representation function is applied, and vice versa (thus we say the systems have \emph{equivalent productions}), and (2) for every assembly sequence in $\mathcal{T}$, the exactly equivalent assembly sequence can be followed in $\mathcal{S}$ (modulo the application of the representation function), and vice versa (thus we say the systems have \emph{equivalent dynamics}).  Thus, equivalent production and equivalent dynamics yield a valid simulation.

\ifabstract
\later{

\section{Simulation definition details}\label{sec:simulation_def_details}

In this section we present the formal definitions of simulation between aTAM and DaTAM systems.

\subsection{DTAS simulation of a TAS}

Here we formally define what it means for a DTAS to ``simulate'' a TAS.  The definition of a DTAS lends itself to a simulation definition statement that is equivalent to the definition of simulation for a TAS simulating another TAS. Therefore, our definitions come from \cite{IUNeedsCoop}.

From this point on, let $T$ be a tile set, and let $m\in\Z^+$.
An \emph{$m$-block supertile} or \emph{macrotile} over $T$ is a partial function $\alpha : \Z_m^2 \dashrightarrow T$, where $\Z_m = \{0,1,\ldots,m-1\}$.
Let $B^T_m$ be the set of all $m$-block supertiles over $T$.
The $m$-block with no domain is said to be $\emph{empty}$.
For a general assembly $\alpha:\Z^2 \dashrightarrow T$ and $(x_0, x_1)\in\Z^2$, define $\alpha^m_{x_0,x_1}$ to be the $m$-block supertile defined by $\alpha^m_{x_0, x_1}(i_0, i_1) = \alpha(mx_0+i_0, mx_1+i_1)$ for $0 \leq i_0,i_1< m$.
For some tile set $S$, a partial function $R: B^{S}_m \dashrightarrow T$ is said to be a \emph{valid $m$-block supertile representation} from $S$ to $T$ if for any $\alpha,\beta \in B^{S}_m$ such that $\alpha \sqsubseteq \beta$ and $\alpha \in \dom R$, then $R(\alpha) = R(\beta)$.

For a given valid $m$-block supertile representation function $R$ from tile set~$U$ to tile set $T$, define the \emph{assembly representation function}\footnote{Note that $R^*$ is a total function since every assembly of $U$ represents \emph{some} assembly of~$T$; the functions $R$ and $\alpha$ are partial to allow undefined points to represent empty space.}  $R^*: \mathcal{A}^{U} \rightarrow \mathcal{A}^T$ such that $R^*(\alpha') = \alpha$ if and only if $\alpha(x_0, x_1) = R\left(\alpha'^m_{x_0,x_1}\right)$ for all $(x_0,x_1) \in \Z^2$.
For an assembly $\alpha' \in \mathcal{A}^{U}$ such that $R(\alpha') = \alpha$, $\alpha'$ is said to map \emph{cleanly} to $\alpha \in \mathcal{A}^T$ under $R^*$ if for all non empty blocks $\alpha'^m_{x_0,x_1}$, $(x_0,x_1)+(u_0,u_1) \in \dom \alpha$ for some $u_0,u_1 \in \mathbb{Z}^2$ such that $u_0^2 + u_1^2 \leq 1$, or if $\alpha'$ has at most one non-empty $m$-block~$\alpha^m_{0, 0}$.
In other words, $\alpha'$ may have tiles on supertile blocks representing empty space in $\alpha$, but only if that position is adjacent to a tile in $\alpha$.  We call such growth ``around the edges'' of $\alpha'$ \emph{fuzz} and thus restrict it to be adjacent to only valid supertiles, but not diagonally adjacent (i.e.\ we do not permit \emph{diagonal fuzz}).


In the following definitions, let $\mathcal{T} = \left(T,\sigma_T,\tau_T\right)$ be a TAS, let $\mathcal{U} = \left(U,S,D,\sigma_U,\tau_U\right)$ be a DTAS, and let $R$ be an $m$-block representation function $R:B^U_m \rightarrow T$.

\begin{definition}
\label{def-equiv-prod-d-to-s} We say that $\mathcal{U}$ and $\mathcal{T}$ have \emph{equivalent productions} (under $R$), and we write $\mathcal{U} \Leftrightarrow \mathcal{T}$ if the following conditions hold:
\begin{enumerate}
        \item $\left\{R^*(\alpha') | \alpha' \in \prodasm{\mathcal{U}}\right\} = \prodasm{\mathcal{T}}$.
        \item $\left\{R^*(\alpha') | \alpha' \in \termasm{\mathcal{U}}\right\} = \termasm{\mathcal{T}}$.
        \item For all $\alpha'\in \prodasm{\mathcal{U}}$, $\alpha'$ maps cleanly to $R^*(\alpha')$.
\end{enumerate}
\end{definition}

\begin{definition}
\label{def-t-follows-d} We say that $\mathcal{T}$ \emph{follows} $\mathcal{U}$ (under $R$), and we write $\mathcal{T} \dashv_R \mathcal{U}$ if $\alpha' \rightarrow^\mathcal{U} \beta'$, for some $\alpha',\beta' \in \prodasm{\mathcal{U}}$, implies that $R^*(\alpha') \to^\mathcal{T} R^*(\beta')$.
\end{definition}

\begin{definition}
\label{def-d-models-t} We say that $\mathcal{U}$ \emph{models} $\mathcal{T}$ (under $R$), and we write $\mathcal{U} \models_R \mathcal{T}$, if for every $\alpha \in \prodasm{\mathcal{T}}$, there exists $\Pi \subset \prodasm{\mathcal{U}}$ where $R^*(\alpha') = \alpha$ for all $\alpha' \in \Pi$, such that, for every $\beta \in \prodasm{\mathcal{T}}$ where $\alpha \rightarrow^\mathcal{T} \beta$, (1) for every $\alpha' \in \Pi$ there exists $\beta' \in \prodasm{\mathcal{U}}$ where $R^*(\beta') = \beta$ and $\alpha' \rightarrow^\mathcal{U} \beta'$, and (2) for every $\alpha'' \in \prodasm{\mathcal{U}}$ where $\alpha'' \rightarrow^\mathcal{U} \beta'$, $\beta' \in \prodasm{\mathcal{U}}$, $R^*(\alpha'') = \alpha$, and $R^*(\beta') = \beta$, there exists $\alpha' \in \Pi$ such that $\alpha' \rightarrow^\mathcal{U} \alpha''$.
\end{definition}

The previous definition essentially specifies that every time $\mathcal{U}$ simulates an assembly $\alpha \in \prodasm{\mathcal{T}}$, there must be at least one valid growth path in $\mathcal{U}$ for each of the possible next steps that $\mathcal{T}$ could make from $\alpha$ which results in an assembly in $\mathcal{U}$ that maps to that next step.

\begin{definition}
\label{def-d-simulates-t} We say that $\mathcal{U}$ \emph{simulates} $\mathcal{T}$ (under $R$) if $\mathcal{U} \Leftrightarrow_R \mathcal{T}$ (equivalent productions), $\mathcal{T} \dashv_R \mathcal{U}$ and $\mathcal{U} \models_R \mathcal{T}$ (equivalent dynamics).
\end{definition}

\subsection{TAS simulation of a DTAS}\label{sec:tasimdtas}

In a DTAS, the binding of a duple results in an assembly step where two tile locations simultaneously become part of the domain of an assembly resulting from this step. Because of this, we give a definition of what it means for a TAS to ``simulate'' a DTAS that is a slight modification of the usual definitions of simulation. The definition of simulation is still based on a block replacement scheme; however, the definition that we give is less restrictive than the standard definitions of simulation due to the fact that we allow for the domain of the representation function to be larger than just one single block.

From this point on, let $T$ be a tile set, and let $m\in\Z^+$.
An \emph{$m$-plus supertile} over $T$ (or $m$-plus when the context is clear) is a partial function $\alpha : \Z_m\times\Z_{3m} \cup \Z_{3m}\times \Z_{m} \dashrightarrow T$, where $\Z_m = \{0,1,\ldots,m-1\}$ and $\Z_{3m} = \{0,1,\ldots,3m-1\}$.
Let $P^T_m$ be the set of all $m$-plus supertiles over $T$.
The $m$-plus with no domain is said to be $\emph{empty}$.
For a general assembly $\alpha:\Z^2 \dashrightarrow T$ and $(x_0, x_1)\in\Z^2$, define $\alpha^m_{x_0,x_1}$ to be the $m$-plus supertile defined by $\alpha^m_{x_0, x_1}(i_0, i_1) = \alpha(mx_0+i_0, mx_1+i_1)$ for either $-m \leq i_0 < 2m$ and $0\leq i_1 < m$, or $-m \leq i_1 < 2m$ and $0\leq i_0 < m$.
For some tile set $U$, a partial function $R: P^{U}_m \dashrightarrow T$ is said to be a \emph{valid $m$-plus supertile representation} from $U$ to $T$ if for any $\alpha,\beta \in P^{U}_m$ such that $\alpha \sqsubseteq \beta$ and $\alpha \in \dom R$, then $R(\alpha) = R(\beta)$.

For a given valid $m$-plus supertile representation function $R$ from tile set~$U$ to tile set $T$, define the \emph{assembly representation function}\footnote{Note that $R^*$ is a total function since every assembly of $U$ represents \emph{some} assembly of~$T$; the functions $R$ and $\alpha$ are partial to allow undefined points to represent empty space.}  $R^*: \mathcal{A}^{U} \rightarrow \mathcal{A}^T$ such that $R^*(\alpha') = \alpha$ if and only if $\alpha(x_0, x_1) = R\left(\alpha'^m_{x_0,x_1}\right)$ for all $(x_0,x_1) \in \Z^2$.
For an assembly $\alpha' \in \mathcal{A}^{U}$ such that $R(\alpha') = \alpha$, $\alpha'$ is said to map \emph{cleanly} to $\alpha \in \mathcal{A}^T$ under $R^*$ if for all non empty $m$-plus supertiles $\alpha'^m_{x_0,x_1}$, $(x_0,x_1)+(u_0,u_1) \in \dom \alpha$ for some $u_0,u_1 \in \mathbb{Z}^2$ such that
$u_0^2 + u_1^2 \leq 2$, or if $\alpha'$ has at most one non-empty $m$-plus~$\alpha^m_{0, 0}$.
In other words, $\alpha'$ may have tiles on plus supertiles representing empty space in $\alpha$, but only if that position is a Manhattan distance of $2$ or less from a tile in $\alpha$.  We call such growth ``around the edges'' of $\alpha'$ \emph{fuzz}.


For a TAS simulation of a DTAS, Definitions~\ref{def-equiv-prod-d-to-s},\ref{def-t-follows-d}, \ref{def-d-models-t}, and \ref{def-d-simulates-t} remain unchanged and can be applied by letting $\mathcal{T} = \left(T,S,D,\sigma_T,\tau_T\right)$ be a DTAS, $\mathcal{U} = \left(U,\sigma_U,\tau_U\right)$ be a TAS, and $R$ be an $m$-plus representation function $R:P^U_m \rightarrow T$.

} 

\vspace{-12pt}
\section{The Dupled aTAM is Computationally Universal}\label{sec:CU}
\vspace{-5pt}
In this section, we show constructively that for every compact zig-zag tile assembly system, there exists a DTAS which simulates it.  It will then follow from \cite{CookFuSch11} that the DaTAM can simulate an arbitrary Turing machine.
\vspace{-5pt}
\begin{theorem} \label{thm:sim_zig}
Let $\mathcal{T}=(T, \sigma, 2)$ be a compact zig-zag TAS and let $G_N$ be the set consisting of all glues that appear on the north side of a tile in $T$.  Then there exists an DTAS $\mathcal{T'} = (T', S, D, \gamma, 1)$ such that $\mathcal{S}$ simulates $\calT$ at scale factor $O(\log|G_N|)$ with $|S|+|D| = O(|T||G_N|)$.
\end{theorem}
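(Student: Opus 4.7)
The plan is to design, for each compact zig-zag TAS $\mathcal{T}=(T,\sigma,2)$, a temperature-$1$ DaTAM $\mathcal{T}'$ in which the placement of each simulated tile $t\in T$ is carried out by an $O(\log|G_N|)\times O(\log|G_N|)$ \emph{macrotile}. Each glue in $\mathcal{T}$ will be encoded as a $c = \lceil\log|G_N|\rceil$-bit binary string, and the macrotile edges will expose the bit encodings of the corresponding simulated glues. Because $\mathcal{T}$ is compact zig-zag, every simulated tile is placed by cooperation between exactly one strength-$1$ south bond and one strength-$1$ horizontal bond, so simulating $\mathcal{T}$ reduces to turning each such cooperation event into a single-glue attachment inside the macrotile at $\tau=1$.

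Inside a macrotile, growth proceeds by a snaking assembly path that first reads the south-glue bits of the macrotile below (one per column along the bottom) and then reads the horizontal-glue bits of the preceding macrotile (one per row along the west or east side), maintaining an evolving ``state'' glue that records everything read so far. At each bit position the reading step is implemented with a duple-based gadget: two candidate duples are designed (one per possible bit value) so that, for a given exposed bit glue from the neighbouring macrotile, only one of the two duples can attach and extend the snake, forcing a unique continuation. After both the south and horizontal glues have been fully read, the accumulated state uniquely identifies the simulated $(g_W, g_S)$ pair and hence $t$; a small lookup sub-assembly then writes the bit encodings of $t$'s north and east glues along the corresponding edges of the macrotile.

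The duple types will be reused across macrotile families for different simulated tiles, contributing $O(|G_N|)$ to the tile count (a constant number of duples per bit glue in $G_N$), while the lookup sub-assembly uses $O(|T|)$ further types (one ``transition path'' per simulated tile). Because the bit-reading tiles are shared across macrotile families, the overall count works out to $|S|+|D|=O(|T|\,|G_N|)$ at scale $c = O(\log|G_N|)$, matching the stated bound. Combining this simulation with the Turing universality of compact zig-zag systems from \cite{CookFuSch11} will then yield computational universality of the DaTAM at $\tau=1$.

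The main obstacle will be showing that the simulation is clean and directed: at $\tau=1$ stray duples might attach via unintended glue matches, and the snaking reading path must not reach the lookup region before all bits have been read. To handle this I plan to (a) use per-column-unique bit glues so that only one duple type can ever match the actual exposed bit at a given column, (b) keep the snake's state glues unique at every step so that a duple cannot bind to the snake via a side other than the intended one, and (c) sequence the design so that, at each column, placement of the correct duple is the only way for the reading path to extend past that column. Verifying equivalent productions and equivalent dynamics with $\mathcal{T}$ under the natural block-representation function (in the sense of Section~\ref{sec:simulation_def_details}) will then complete the simulation.
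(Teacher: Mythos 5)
Your high-level framing (macrotiles at scale $O(\log|G_N|)$, a snaking internal path, duple-based bit reading, tile count $O(|T||G_N|)$) matches the paper, but the core reading mechanism you describe has a genuine flaw. You propose that the neighbouring macrotile ``exposes bit glues'' along its edge and that two candidate duples are designed so that ``for a given exposed bit glue \ldots only one of the two duples can attach.'' At temperature~$1$ that cannot work: if a duple carries one glue matching the snake's state glue and another glue intended to match the exposed bit, it will still attach whenever \emph{either} glue alone matches, because a single strength-$1$ bond suffices. So the duple reading the $0$-bit will also bind to the snake when the neighbour actually exposes the $1$-bit, and the bit is never actually read; enforcing that both glues agree is exactly the kind of cooperation temperature $1$ forbids. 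The paper sidesteps this by encoding the neighbouring glue as \emph{geometry} (a ``north geometry'' of raised/lowered bumps, one per bit) rather than as glue labels: the reader places either a singleton or a duple at each bit position, and which of them can go is determined by whether the duple physically \emph{fits} around the bump, not by a glue match. That geometric exclusion is what lets a single bond carry the combined information, and it is the missing idea in your proposal.

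Two smaller problems are worth flagging. First, you assert that ``every simulated tile is placed by cooperation between exactly one strength-$1$ south bond and one strength-$1$ horizontal bond''; that is false for compact zig-zag systems, which also place tiles via single strength-$2$ bonds (e.g. when the path turns upward), and the paper handles those macrotiles separately as simple single-tile-wide paths with no reader. Second, the paper only reads \emph{one} of the two cooperating glues (the vertical one) geometrically; the horizontal glue is carried directly in the label of the arriving single-tile-wide path, so there is no need to read it bit by bit. Reading both glues as geometry, as you propose, would force the reader to probe along two perpendicular edges and would need separate justification that the two reads compose cleanly; it also omits the paper's palindromic encoding of each bit string, which is needed because the zig-zag reader alternately approaches the geometry from the east and from the west.
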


We now provide a brief sketch of our construction. See Section~\ref{sec:sim_zig_proof} for the full proof.  Suppose that $\mathcal{T}=(T, \sigma, 2)$ is a compact zig-zag TAS.  We construct a $\tau = 1$ DTAS which simulates $\calT$ using macrotiles.  Since $\calT$ is a compact zig-zag TAS, we need to only consider the assembly of a handful of different genres of macrotiles.  In Figure~\ref{fig:Macro_assembly_exH} we see all of the genres of macrotiles up to reflection which we will need to be able to assemble in order to simulate a compact zig-zag TAS.  We can separate these macrotiles into two categories: macrotiles which are simulating tile types in $T$ that bind with strength $2$ glues and macrotiles which are simulating tile types in $T$ which require cooperation to bind.

\begin{figure}[htp]
\begin{center}
\vspace{-15pt}
\includegraphics[width=2.0in]{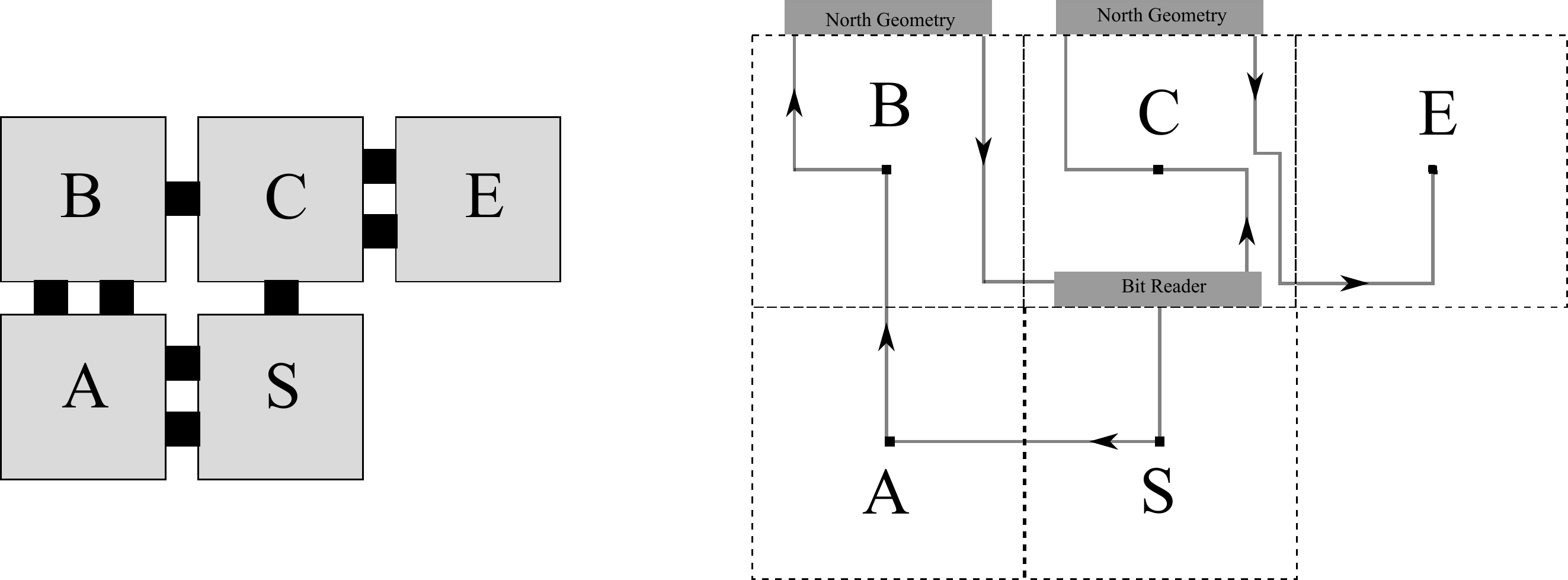}
\caption{(Left) A simple assembly produced by a compact zig-zag system. (Right) A system consisting of macrotiles which simulates the system on the left and demonstrates the genres of macrotiles involved in simulating compact zig-zag TASes up to rotation. The dashed boxes represent the boundaries of the macrotiles and the solid lines through the macrotiles represent single-tile wide paths which build the macrotiles.}
\label{fig:Macro_assembly_exH}
\end{center}
\vspace{-30pt}
\end{figure}

Assembling the macrotiles which are simulating tile types in $T$ that bind with a single strength $2$ glue is straight forward. The interesting part of the construction is the assembly of macrotiles which are simulating tile types in $T$ which require cooperation to bind.  These macrotiles consist of two parts: 1) a north geometry and 2) a bit reader.  The north geometry section of the macrotile encodes the information about the north glue of the tile which it is simulating.  This is done by assigning each glue in $T$ a palindromic binary string (assigning $0$ to the null glue) and then encoding the glue's binary representation using the bit encoding scheme shown in Figure~\ref{fig:bit_readH}. Our use of the palindrome is just a convention so that the bits encode the same value from east to west that they do from west to east.

\begin{figure}[htp]
\begin{center}
\vspace{-15pt}
\includegraphics[width=2.0in]{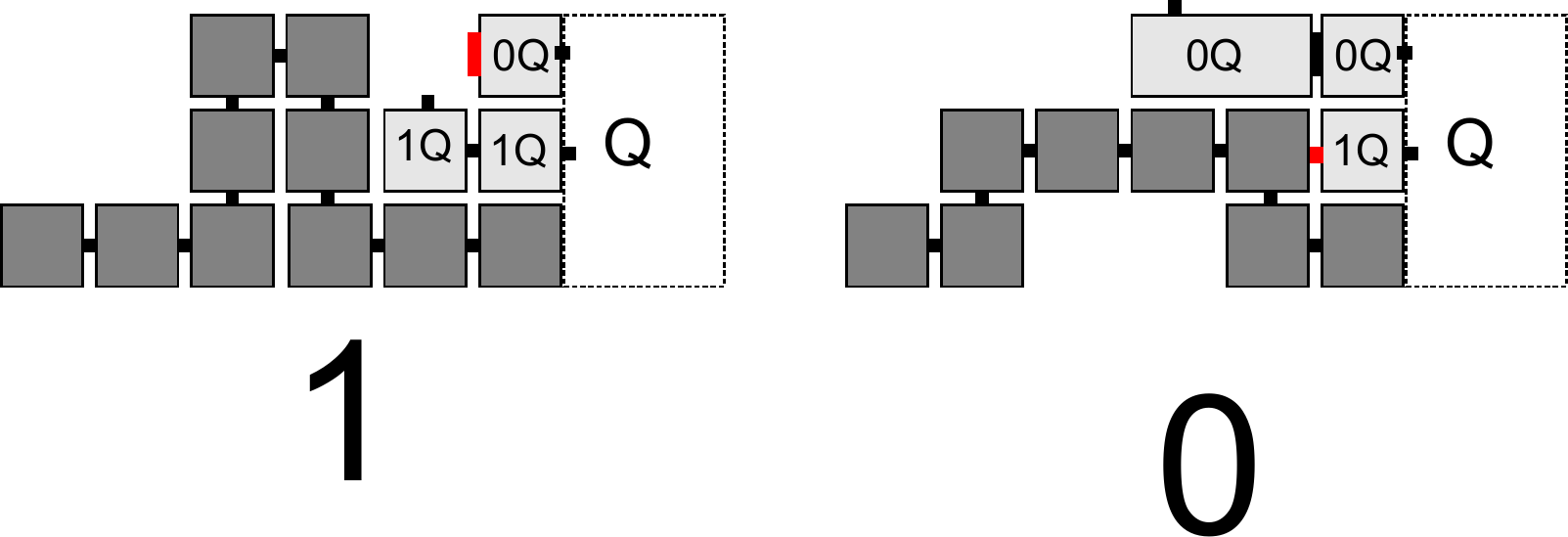}
\caption{A single bit example of how the assembly is able to read geometry to gain information about a north glue and still retain information about the west glue.  We use the following conventions. The small black rectangles represent glues which allow singletons to bind.  The longer black rectangles represent glues that can potentially bind to a duple (note that these glues are the same types of glues as the others, just drawn differently for extra clarity).  The red rectangles represent glues that have mismatched. }
\label{fig:bit_readH}
\vspace{-30pt}
\end{center}
\end{figure}

The bit reader is able to ``read'' bits by means of the bit reading gadget shown in Figure~\ref{fig:bit_readH} and works by trying to place a singleton and a duple.  By way of our construction, it is the case that only one of them can be placed, and this allows the bit reader to distinguish between bits. Together, the north geometry and the bit reader of the macrotiles allow them to recreate the cooperation that takes place in $\calT$ by passing information about the east and west glues of the simulated tiles through the glues of the tile wide paths while encoding information about the north glues as geometry.  The overall growth pattern of these macrotiles follows the same assembly sequence as $C$ in Figure~\ref{fig:Macro_assembly_exH}.

Notice that the scale factor of simulation will depend on the number of bits required to represent the number of north glues in $T$.  Also, for each tile in $T$ we must have a tile in our simulator, say $t$, which has $|G_N|$ tiles associated with it so that $t$ may grow a path and read the north geometry of the next tile.  Hence, $|S|+|D| = |T||G_N|$.

\ifabstract
\later{
\section{Proof of Theorem~\ref{thm:sim_zig}}\label{sec:sim_zig_proof}

We now provide a sketch of our construction. Suppose that $\mathcal{T}=(T, \sigma, 2)$ is a compact zig-zag TAS.  We construct a $\tau = 1$ DTAS which simulates $\calT$ using macrotiles.  Since $\calT$ is a compact zig-zag TAS, we need to only consider the assembly of a handful of different genres of macrotiles.  In Figure~\ref{fig:Macro_assembly_ex} we see all of the genres of macrotiles up to reflection which we will need to be able to assemble in order to simulate a compact zig-zag TAS.  We can separate these macrotiles into two categories: macrotiles which are simulating tile types in $T$ that bind with strength $2$ glues and macrotiles which are simulating tile types in $T$ which require cooperation to bind.  It is important to note that, in zig-zag systems, every tile type has well-defined input and output sides (i.e. every tile of a given type which attaches to an assembly will do so by initially binding with the exact same side or pair of sides).  This makes it possible to have exactly one possible path of formation for each macrotile, first growing the input side(s) and then the output sides.

Assembling the macrotiles which are simulating tile types in $T$ that bind with a single strength $2$ glue is straight forward.  In Figure~\ref{fig:Macro_assembly_ex}, the macrotiles labeled $S, A, B$, and $E$ are all simulating tiles in $T$ which bind with a single strength $2$ bond.  As seen in Figure~\ref{fig:Macro_assembly_ex}, assembling these genres of macrotiles is simply a matter of growing a single tile wide path which places a center tile so that our representation function knows which tile in $T$ to map the macrotile onto and grows any north geometry (described below) which the macrotile may have.

The interesting part of the construction is the assembly of macrotiles which are simulating tile types in $T$ which require cooperation to bind.  These macrotiles consist of two parts: 1) a north geometry and 2) a bit reader.  The north geometry section of the macrotile encodes the information about the north glue of the tile which it is simulating.  This is accomplished by assigning each glue in $T$ a palindromic binary string (assigning $0$ to the null glue) and then encoding the glue's binary representation using the bit encoding scheme shown in Figure~\ref{fig:bit_scheme}. It is important that the binary string assigned to a glue be a palindrome.  If this was not the case, then the number read by the bit reader would vary depending on the direction in which it read the string.  The bit reader is able to ``read'' bits by means of the bit reading gadget shown in Figure~\ref{fig:bit_read}.  The bit reader works by trying to place a singleton and a duple.  By the way we constructed our north geometry, it is the case that only one of them can be placed, and this allows the bit reader to distinguish between bits.  An example of the bit reader reading multiple bits can be seen in Figure~\ref{fig:bit_longEx}.  Together, the north geometry and the bit reader of the macrotiles allow them to recreate the cooperation that takes place in the $\calT$ by passing information about the east and west glues of the simulated tiles through the glues of the single tile wide paths while encoding information about the north glues as geometry.  The overall growth pattern of these macrotiles follows the same assembly sequence as $C$ in Figure~\ref{fig:Macro_assembly_ex}.

Notice that the scale factor of the simulation will depend only on the size of the north geometry sections of the macrotiles.  The length of the palindromic binary strings will be $O(\log|G_N|)$.  This is most easily seen by enumerating all of the glues, and then just concatenating the mirrored binary representation of each glue with itself.  To see that the tile complexity of the simulator is $O(|T||G_N|)$, first observe that for each tile in $T$, we must have a tile in the simulator as well.  Furthermore, for some of these tiles, we must be able to grow a path that reads $O(\log|G_N|)$ bits of the north geometry.  This requires $|G_N|$ tile types since we must have a tile for each node in the binary prefix tree which has $2^{\log |G_N|} + 1 = O(|G_N|)$ nodes.  This implies that the tile complexity required for the simulator is $O(|T||G_N|)$.

\begin{figure}[htp]
\begin{center}
\includegraphics[width=3.5in]{images/Macro_assembly_ex}
\caption{On the left, we see a simple assembly produced by a compact zig-zag system. On the right is a system consisting of macrotiles which simulates the system on the left and demonstrates the genres of macrotiles involved in simulating compact zig-zag TASs up to rotation. The dashed boxes represent the boundaries of the macrotiles and the solid lines through the macrotiles represent single-tile wide paths which build the macrotiles.}
\label{fig:Macro_assembly_ex}
\end{center}
\end{figure}

\begin{figure}[htp]
\begin{center}
\includegraphics[width=1.5in]{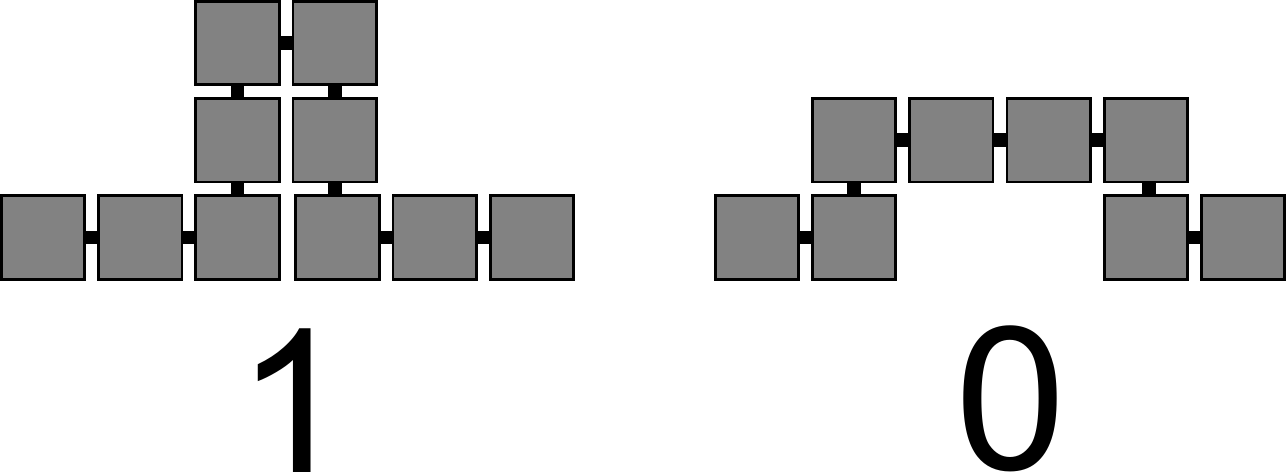}
\caption{The encoding scheme we use so that our bit reader is able to detect whether the bit is a 1 or a 0.}
\label{fig:bit_scheme}
\end{center}
\end{figure}

\begin{figure}[htp]
\begin{center}
\includegraphics[width=2.5in]{images/bit_read}
\caption{A single bit example of how the assembly is able to read geometry to gain information about a north glue and still retain information about the west glue.  We use the following conventions: The small black rectangles represent glues which allow singletons to bind.  The longer black rectangles represent glues that can potentially bind to a duple (note that these glues are the same types of glues as the others, just drawn differently for extra clarity).  The red rectangles represent glues that have mismatched. }
\label{fig:bit_read}
\end{center}
\end{figure}

\begin{figure}[htp]
\begin{center}
\includegraphics[width=4.5in]{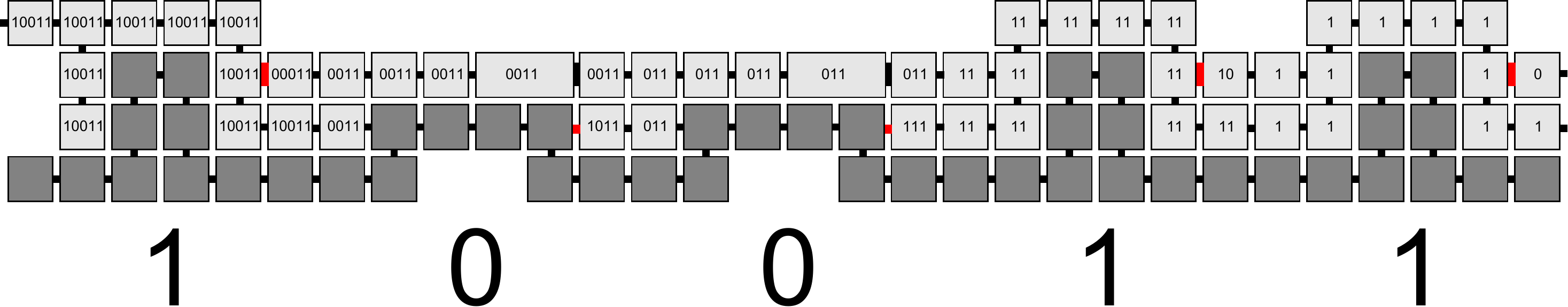}
\caption{An example of how the bit reader in the DaTAM is able to read multiple bits at a time. (Note that this is just an example bit sequence that could be read by bit readers, while in our construction each bit sequence will be a palindrome.)}
\label{fig:bit_longEx}
\end{center}
\end{figure}

} 
\vspace{-5pt}
\begin{corollary}
For every standard Turing Machine $M$ and input $w$, there exists an DTAS that simulates $M$ on $w$.
\end{corollary}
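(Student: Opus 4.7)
The plan is to obtain the corollary by composing Theorem~\ref{thm:sim_zig} with a known universality result for compact zig-zag tile assembly systems. First I would invoke the result of Cook, Fu, and Schweller \cite{CookFuSch11}, which establishes that for every standard Turing machine $M$ and input $w$, there is a compact zig-zag TAS $\mathcal{T}_{M,w} = (T,\sigma,2)$ at temperature $2$ whose terminal assembly encodes the computation of $M$ on $w$ (with an assembly being finite if and only if $M$ halts on $w$). This gives the starting point: an aTAM system of exactly the restricted form that Theorem~\ref{thm:sim_zig} is designed to handle.

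Next I would apply Theorem~\ref{thm:sim_zig} directly to $\mathcal{T}_{M,w}$ to produce a temperature-$1$ DTAS $\mathcal{T}'_{M,w} = (T',S,D,\gamma,1)$ that simulates $\mathcal{T}_{M,w}$ at scale factor $O(\log|G_N|)$ and with $|S|+|D| = O(|T||G_N|)$. Since ``simulation'' here is the block-representation notion defined in Section~\ref{sec:simulation_def} (equivalent productions plus equivalent dynamics), the producible assemblies of $\mathcal{T}'_{M,w}$ are in a faithful dynamical correspondence with those of $\mathcal{T}_{M,w}$ through the $m$-block representation function $R$. In particular, the tape contents and halting behavior of $M$ on $w$ that are encoded by $\mathcal{T}_{M,w}$ are recoverable from the macrotile structure of every terminal assembly of $\mathcal{T}'_{M,w}$.

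Finally I would observe that composing these two reductions yields a DTAS that simulates $M$ on $w$ in the usual aTAM sense of ``simulating a Turing machine'': the representation function pulled back along $R$ identifies, within the macrotile layout of any producible assembly of $\mathcal{T}'_{M,w}$, the successive configurations of $M$ during its computation on $w$. No new construction is required; the corollary is purely a transitivity statement. The only mildly technical point is to note that the composition of the two representation functions (the one inherent in the Cook--Fu--Schweller zig-zag encoding of a Turing machine, and the $m$-block representation from Theorem~\ref{thm:sim_zig}) is again a valid representation that preserves the productions and dynamics needed to read off the computation of $M$ on $w$, which follows immediately because each layer of simulation preserves them individually.
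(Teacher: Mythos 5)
Your proposal takes exactly the paper's route: invoke the Cook--Fu--Schweller result (the paper cites their Lemma 7) to obtain a compact zig-zag TAS simulating $M$ on $w$, then apply Theorem~\ref{thm:sim_zig} to simulate that system with a temperature-$1$ DTAS, and observe that the two simulation layers compose. The paper states this in one line; your elaboration of why the composition preserves the Turing-machine encoding is a reasonable unpacking of the same argument.
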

\vspace{-5pt}
This follows directly from Lemma 7 of \cite{CookFuSch11} and Theorem~\ref{thm:sim_zig}.
\vspace{-3pt}
\begin{corollary}\label{thm:eff_sq}
For every $N \in \mathbb{N}$, there exists a DTAS which assembles an $N \times N$ square with $O(\log N)$ tile complexity and constant scale factor.
\end{corollary}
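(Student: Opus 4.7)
The plan is to apply Theorem~\ref{thm:sim_zig} to a well-known temperature-$2$ aTAM construction for squares. Specifically, Rothemund and Winfree (and subsequent refinements) give a directed temperature-$2$ aTAM system that builds an $N \times N$ square using $O(\log N)$ tile types: a seed row of length $\Theta(\log N)$ encoding an initial counter value, plus a constant-size ``counter block'' tile set that increments a binary counter row-by-row until it reaches $N-1$, after which constant-size filler tiles complete the square. So the first step is to recall this construction and fix it as our target system $\mathcal{T}_N=(T_N,\sigma_N,2)$ with $|T_N|=O(\log N)$.

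Next I would verify that (a suitable variant of) $\mathcal{T}_N$ is a \emph{compact zig-zag} TAS in the sense defined in Section~\ref{sec:zigzag_def_details}. The counter construction naturally zig-zags: each row is produced deterministically from west-to-east or east-to-west by tiles that cooperate on a south glue (carrying the bit from the row below) and a west/east glue (propagating the carry along the row), so the unique assembly sequence never exposes a south glue above a completed row and the system is directed. For compactness, it suffices to choose the standard encoding in which every tile used above the seed row has a north glue of strength $1$ and a south glue of strength $1$, giving north$+$south $=2<2\tau=4$; the seed row can be prepended by strength-$2$ initiator glues on its east/west edges rather than on the north/south edges, preserving compactness everywhere in the producible assembly.

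The third step is simply accounting. The counter uses a constant number of distinct north glues (essentially the bit values $0$ and $1$ plus a few control glues for the boundary and the halting row), so $|G_N|=O(1)$. Applying Theorem~\ref{thm:sim_zig} to $\mathcal{T}_N$ yields a DTAS $\mathcal{T}'_N=(T',S,D,\gamma,1)$ that simulates $\mathcal{T}_N$ at scale factor $O(\log|G_N|)=O(1)$ with $|S|+|D|=O(|T_N|\cdot|G_N|)=O(\log N)$. Since the scale factor $c$ is an absolute constant, to obtain an $N\times N$ square in the DaTAM we instantiate the aTAM construction for an $(N/c)\times(N/c)$ square; this still takes $O(\log(N/c))=O(\log N)$ tile types, and its $c$-scaled simulation is exactly an $N\times N$ square.

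The only non-trivial point, and the one I would be most careful about, is the compact zig-zag property: standard write-ups of the $O(\log N)$-tile square mix strength-$2$ glues into the north/south direction for the seed and for the ``stop counting'' row, which would violate compactness. So the real work is in choosing a variant whose cooperation is done exclusively between a strength-$1$ south glue and a strength-$1$ horizontal glue above the seed (and using a $\tau$-stable pre-formed seed for the initial row). Once that variant is in hand, the corollary follows immediately from Theorem~\ref{thm:sim_zig}.
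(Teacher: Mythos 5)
Your proposal follows essentially the same route as the paper: build the square with a zig-zag binary counter (plus constant-size filler) at temperature $2$ using $O(\log N)$ tile types, observe the counter portion is a compact zig-zag system with a constant number of distinct north glues so Theorem~\ref{thm:sim_zig} gives constant scale factor and preserves $O(\log N)$ tile complexity. The paper's appendix proof explicitly instantiates this with three rotated copies of the RotWin zig-zag counter framing the square; you take the standard single counter-plus-fill construction and are slightly more explicit about (i) checking the compactness condition by avoiding strength-$2$ north/south glues above the seed row, and (ii) correcting the target dimension to $(N/c)\times(N/c)$ so that the scale-$c$ simulation yields exactly $N\times N$ — a detail the paper leaves implicit. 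Neither of these changes the substance; both approaches bottom out in the same application of Theorem~\ref{thm:sim_zig}.
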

\vspace{-5pt}
See Section~\ref{sec:eff_sq} for the full proof.

\ifabstract
\later{
\section{Proof of Corollary~\ref{thm:eff_sq}}\label{sec:eff_sq}

\begin{proof}
We now describe the system $\calT = (T, \sigma, 2)$ that assembles an $N \times N$ square in $O(\log N)$ tile types which we can simulate with a DTAS. The counters that we use in this construction are the zig-zag counters described in \cite{RotWin00}.  Growth of $\calT$ begins with $S_0$ as shown in Figure~\ref{fig:squares}.  From $S_0$, we grow out an assembly of length $N$ to the north using a zig-zag counter, denoted counter $CW$ in the figure.  After completing, counter $CW$ places $S_1$ which seeds another counter, counter $CN$, that assembles a counter of length $N-\log N$ to the east.  After counter $CN$ finishes assembly, it seeds a new counter, call this counter $CE$, by placing $S_2$.  Counter $CE$ grows an assembly of length $N-\log N$ to the south.  The last tile which counter $CE$ places is a tile that allows for the attachment of a tile which grows a path to the west by attaching to itself repeatedly until it collides with counter $S_0$.  Now a constant number of tile types can be used to fill in the middle region by having columns grow upward from the path along the bottom.  All but the rightmost column will be formed of repetitions of the same tile type (with the same glue on the north and south, and no glues on the east and west sides) which grow upward until colliding with one of the counters.  The rightmost column will also grow upward until hitting the counter to the north, but will use a tile type which also has a glue to the east, allowing for eastward growing rows to fill out the remaining portion of the inside of the square to the right.

Since this construction only relies on zig-zag counters (which are compact zig-zag systems up to rotation), we can simulate this using a DTAS by Theorem~\ref{thm:sim_zig}.  Furthermore, the zig-zag counter which we use for this construction is such that the number of north are constant.  Consequently, the scale factor of the DTASs simulating this zig-zag counter will be constant, and since $|T| = O(\log N)$, the simulated system will have tile complexity $O(\log N)$.
\end{proof}

\begin{figure}[htp]
\begin{center}
\includegraphics[width=2.5in]{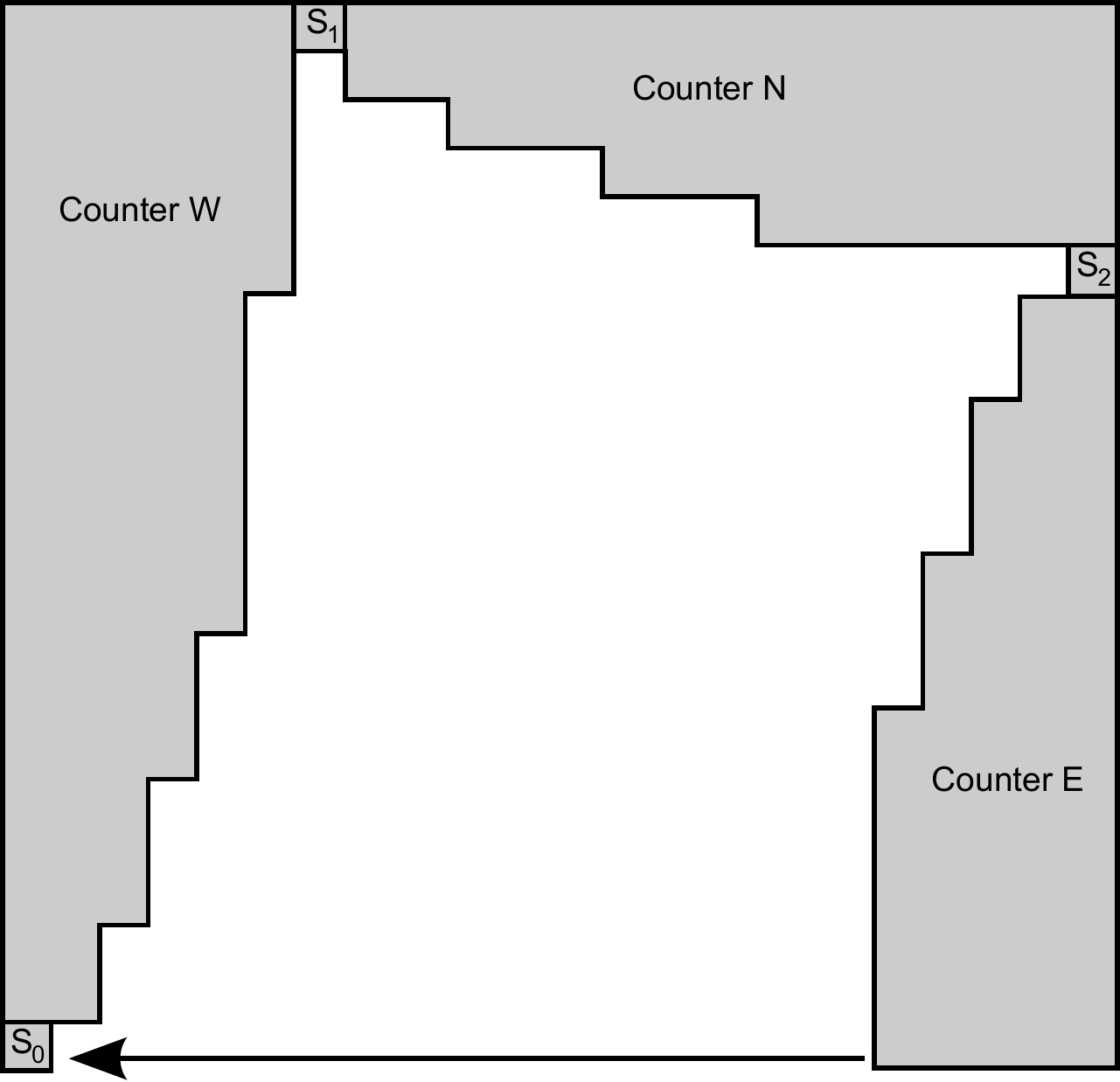}
\caption{Not to scale.}
\label{fig:squares}
\end{center}
\end{figure}

} 

\vspace{-12pt}
\section{Self-assembly of thin rectangles in the DaTAM}
\vspace{-8pt}
In this section, we study the self-assembly of thin rectangles in the DaTAM. As in \cite{AGKS05g}, we say that an $N \times k$ rectangle $R_{N,k} = \{0,\ldots,k-1\} \times \{0,\ldots,N-1\}$ is \emph{thin} if $k < \frac{\log N}{\log \log N - \log \log \log N}$. We say that the temperature $\tau \in \mathbb{N}$ \emph{tile complexity} of a shape $X \subseteq \Z^2$ in the DaTAM is the minimum number of unique (duple) tile types required to strictly self-assemble $X$, i.e., $K^\tau_{DSA}(X) = \min\{ |S \cup D| \mid X \textmd{ strictly self-assembles in } \mathcal{D} = (T,S,D,\sigma,\tau) \}$. In the aTAM, the lower bound for the tile complexity of an $N \times k$ rectangle is $\Omega\left(\frac{N^{1/k}}{k}\right)$ \cite{AGKS05g}. Perhaps not too surprisingly, duple tile types do not offer any asymptotic advantage when it comes to the self-assembly of thin rectangles, i.e., we have the following lower bound for the tile complexity of thin rectangles in the DaTAM.
\vspace{-5pt}
\begin{theorem}\label{thm:duples-thin-rectangles}
Let $N,k,\tau\in\N$. If $R_{N,k}$ is thin, then $K^{\tau}_{DSA}\left(R_{N,k}\right) = \Omega\left(\frac{N^{1/k}}{k}\right)$.
\end{theorem}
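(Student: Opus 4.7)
The plan is to adapt the lower bound argument of Aggarwal, Goldwasser, Kao and Schweller for thin rectangles in the aTAM to accommodate duples. Let $n = |S \cup D|$, fix a DTAS $\mathcal{D}=(T,S,D,\sigma,\tau)$ that strictly self-assembles $R_{N,k}$, and fix a terminal assembly $\alpha$ produced by some assembly sequence $\vec{\alpha}$. For each $0 \le i < k-1$, define the \emph{vertical boundary} $B_i$ between column $i$ and column $i+1$ to be the tuple consisting, for each of the $k$ rows $j$, of either (a) the ordered pair of glues on the abutting east/west edges of the two tiles in row $j$ crossing the cut, or (b) the identity of the horizontal duple tile type straddling the cut in row $j$, if one exists. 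The number of possible boundaries is at most $\bigl(|\Sigma^\ast\times\N|_{\text{used}} + |D|\bigr)^k = (cn)^k$ for an absolute constant $c$, since the total number of distinct glues used by $\mathcal{D}$ is $O(n)$.

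Next I would prove a splicing lemma: if two boundaries $B_{i_1}$ and $B_{i_2}$ with $i_1 < i_2$ are equal, then cutting $\alpha$ along the two boundaries and concatenating the left piece (columns $0,\dots,i_1$) with the right piece (columns $i_2+1,\dots,k-1$) yields a $\tau$-stable, $\mathcal{D}$-producible assembly whose shape is $R_{N-(i_2-i_1),k}$, contradicting strict self-assembly of $R_{N,k}$. To handle straddling duples cleanly, I assign each horizontal duple straddling a cut to the left piece, and reattach the missing right half by reusing the identical duple type on the matching cut. Producibility of the spliced assembly follows because each side of every boundary presents exactly the same glue/duple interface as in $\alpha$, so the union of the two truncated assembly subsequences (in any legal interleaving) is a valid $\mathcal{D}$-assembly sequence with the spliced limit.

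Combining the two observations, the $N$ vertical boundaries $B_0,B_1,\dots,B_{N-2}$ must be pairwise distinct (else a splice contradicts uniqueness), so $(cn)^k \ge N - 1$, which already gives $n = \Omega(N^{1/k})$. To recover the extra $1/k$ factor, I would refine the counting as in the aTAM bound: rather than allow arbitrary ordered $k$-tuples, note that in a directed growth across the cut the glues along a single boundary can be classified by which side (west or east) was placed first, and inside each directed cut the glue/duple data is determined by a length-$k$ word over an alphabet of size $O(n/k)$ after accounting for the seed and for the geometric constraint that each tile type may appear on at most one side of at most one row of the boundary in a fixed orientation. The resulting refined count is $O\bigl((n/k)\cdot n^{k-1}\bigr)$-style, yielding $N = O\bigl(k\cdot n^k\bigr)$ and hence $n = \Omega\bigl(N^{1/k}/k\bigr)$.

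The main obstacle I expect is the splicing step in the presence of duples. Unlike aTAM tiles, a duple straddling a cut is an indivisible unit, so the naive ``cut and paste'' operation is not immediately well defined. The bookkeeping above (assigning each straddling duple to one side and replacing its missing half from the matching boundary) must be shown to yield an assembly sequence consistent with $\mathcal{D}$, and the $\tau$-stability of the spliced assembly must be checked locally at each row of each cut — which follows because the glue/duple interfaces are identical to those present in $\alpha$. Everything else is a routine adaptation of the aTAM thin-rectangle argument, and the multiplicative constant $c$ introduced by counting duple types alongside glues is absorbed into the asymptotic $\Omega(N^{1/k}/k)$ bound.
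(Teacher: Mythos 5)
Your outline is in the right spirit---pigeonhole over the $N$ cross-sections along the long axis of $R_{N,k}$, then argue that a repeated cross-section lets you manufacture an illegal producible assembly---but it diverges from the paper's argument in a way that creates a real gap. The paper does not count static boundary glue/duple data; it counts \emph{ordered duplings}: for each of the $N$ rows, the sequence of (shape, tile type) pairs that cover that row, recorded \emph{in the order they were placed}. That order is load-bearing. Your splicing lemma, as stated, rests on the claim that if two boundaries present the same final glue/duple interface, then ``the union of the two truncated assembly subsequences (in any legal interleaving) is a valid $\mathcal{D}$-assembly sequence.'' This is false in general: the tile just left of a cut may have been placed \emph{after} (and by cooperating with) the tile just right of the cut, while at the other cut the dependency goes the other way. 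After splicing, the two boundary tiles each wait for the other, and no legal interleaving exists, even though the static glue data is identical. What is actually needed for a splice of this kind is equality of the \emph{window movie} (the time-ordered sequence of glue appearances along the cut), which is precisely what the paper's Lemma~\ref{lem:windowmovie} requires and what your boundary tuple $B_i$ does not record. A second tell that something is off: your base count $(cn)^k \ge N-1$ would yield $n = \Omega(N^{1/k})$, strictly stronger than the $\Omega(N^{1/k}/k)$ that is known for the aTAM (a special case of the DaTAM). If your splicing lemma were correct as stated, it would improve the Aggarwal--Goldwasser--Kao--Schweller bound; that it seems to is a symptom of the gap, not a bonus.

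The attempted ``refinement'' to recover the $1/k$ factor compounds the confusion. You try to \emph{shrink} the effective alphabet to $O(n/k)$ and obtain a count of roughly $n^k/k$, but that yields $n = \Omega((Nk)^{1/k})$, which is essentially $\Omega(N^{1/k})$, not $\Omega(N^{1/k}/k)$. In the paper (and in AGKS), the $1/k$ enters in the opposite direction: tracking the placement order of the up-to-$k$ tiles covering a row introduces a $k!$ factor in the count of equivalence classes, so the count becomes $|S\cup D|^k \cdot 5^k \cdot k!$ (the $5^k$ is the duple-shape factor), and requiring this to be at least $N$ gives $|S\cup D| \ge (N/(5^k k!))^{1/k} > N^{1/k}/(5k)$. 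In other words, the order information you omitted is not only what makes the pumping step sound---it is also exactly the source of the $1/k$ loss in the lower bound. To repair your proof, replace the static boundary tuple $B_i$ with the ordered data the paper uses (or with the window movie at each cut, if you want to argue via Lemma~\ref{lem:windowmovie}), and the counting and the $1/k$ factor will fall out correctly.
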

\vspace{-5pt}
The proof of Theorem~\ref{thm:duples-thin-rectangles} uses a straightforward counting argument. See Section~\ref{sec:thin-rectangles-proof} for the full proof.

\ifabstract
\later{
\section{Proof of Theorem~\ref{thm:duples-thin-rectangles}}
\label{sec:thin-rectangles-proof}

\begin{proof}
Our proof mimics that of Theorem 3.1 of \cite{AGKS05g}. For the sake of obtaining a contradiction, assume that $K^{\tau}_{DSA}\left(R_{N,k}\right) < \left(\frac{N}{5^k k!}\right)^{1/k}$. This means that there is a DTAS $\mathcal{D} = (T,S,D,\sigma,\tau)$ in which $R_{N,k}$ strictly self-assembles and $|S \cup D| < \left(\frac{N}{5^k k!}\right)^{1/k}$. Assume, for the sake of simplicity, that $\sigma$ places a single tile at the origin (cases where the seed is not placed at the origin can be handled easily).

Let $\vec{\alpha} = ( \alpha_j \mid 0 \leq j < l )$ be an assembly sequence in $\mathcal{D}$ with result $\alpha$. Note that, for any $0\leq n < N$, there are at most
\begin{eqnarray*}
|S \cup D|^k & = & \left( |S| + |D| \right)^k \\
             & < & \left( \left(\frac{N}{5^k k!}\right)^{1/k} \right)^k \\
             & = & \frac{N}{5^k k!}
\end{eqnarray*}
ways in which (duple) tile types from $S \cup D$ can be placed by $\vec{\alpha}$ in row $n$ of $\alpha$, i.e., there are $\frac{N}{5^k k!}$ \emph{duplings} for each row of $\alpha$. Let $R$ be the set of all unit squares or $1\times 2$ or $2 \times 1$ rectangles, i.e., $R = \left\{ R_{w,h} + \vec{c} \mid w,h \in \N, \; 2 \leq w + h \leq 3 \textmd{ and } \vec{c} \in \Z^2 \right\}$.  For each $0 \leq n < N$, define the sequence $O^{\vec{\alpha}}_n = \left((P_0, t_0),\ldots, (P_{m-1},t_{m-1})\right) \in \left(R \times (S \cup D)\right)^m$ such that, for all $0 \leq i < m$, $t_i$ is a duple if $|P_i| = 2$ (and a tile otherwise), $P_i$ has at least one point in row $n$ of $\alpha$ and $t_i$ is added to $P_i$ before or at the same time as $t_j$ is added to $P_j$ if $i < j$. We say that $O^{\vec{\alpha}}_n$ is an \emph{ordered dupling} of a row in $\alpha$ for $\vec{\alpha}$. Two ordered duplings, say $O^{\vec{\alpha}}_n = ((P_0,t_0),\ldots,(P_{m-1},t_{m-1})$ and $O^{\vec{\alpha}}_{n'} = ((P'_0,t'_0),\ldots,(P'_{m-1},t'_{m-1})$ are said to be equivalent if there exists $\vec{c} \in \Z^2$ such that, for all $0\leq i < l$, $P_i = P'_i + \vec{c}$ and $t_i = t'_i$.

The question is: for a given $\vec{\alpha}$ and some value $0 \leq n < N$, how many possible ordered duplings $O^{\vec{\alpha}}_n$ exist? For each $0 \leq n < N$, there are at most $5^k$ ways in which (duple) tile shapes -- not actual (duple) tile types from $S \cup D$ but rather just unit squares or $1\times 2$ or $2\times 1$ rectangles -- can be placed by $\vec{\alpha}$ to cover row $n$ of $\alpha$. To see this, observe that, for any point $\vec{x} \in \Z^2$, there are four ways a $2 \times 1$ or $1 \times 2$ rectangle (duple) can cover $\vec{x}$ and there is one way for a unit square (tile) to cover $\vec{x}$. Furthermore, there are at most $k!$ possible orderings in which (duple) tile types from $S \cup D$ can be added by $\vec{\alpha}$ to cover row $n$ of $\alpha$. Thus, there can be at most $5^k k!$ such ordered duplings $O^{\vec{\alpha}}_n$.

If there are at most $\frac{N}{5^k k!}$ duplings by $\vec{\alpha}$ for row $n$ of $\alpha$, and there are obviously only $N$ rows in $\alpha$, then it follows that there are at least two rows, indexed by $0 \leq n,n' < N$, such that $O^{\vec{\alpha}}_n$ and $O^{\vec{\alpha}}_{n'}$ are equivalent ordered duplings. Since there exist two equivalent ordered duplings, it is possible to construct an infinite (repeating) assembly sequence in $\mathcal{D}$, whence $R_{N,k}$, being a finite shape, cannot strictly self-assemble in $\mathcal{D}$ -- a contradiction. Thus, for every DTAS $\mathcal{D} = (T,S,D,\sigma,\tau)$ in which $R_{N,k}$ strictly self-assembles, we have
\begin{eqnarray*}
|S \cup D| & \geq & \left(\frac{N}{5^k k!}\right)^{1/k} \\
         & =    & \frac{N^{1/k}}{5 (k!)^{1/k}} \\
         & >    & \frac{N^{1/k}}{5(k^k)^{1/k}} \\
         & =    & \frac{N^{1/k}}{5k}
\end{eqnarray*}
and it follows that $K^{\tau}_{DSA}\left(R_{N,k}\right) = \Omega\left(\frac{N^{1/k}}{k}\right)$.
\end{proof}

} 
\vspace{-15pt}
\section{Mutually Exclusive Powers}
\vspace{-10pt}
In this section, we demonstrate a variety of shapes and systems in the DaTAM at $\tau=1$ and the aTAM at $\tau=2$ which can be self-assembled and simulated, respectively, by only one of the models.
\vspace{-15pt}
\subsection{A shape in the DaTAM but not the aTAM}\label{sec:shapeDaTAM}
\vspace{-8pt}
In this section, we show that there exists an infinite shape which can self-assemble in the DaTAM at $\tau=1$ but not in the aTAM at $\tau=2$. Figure~\ref{fig:scottShapeDTAS} shows a high-level sketch of a portion of this shape.
\vspace{-5pt}
\begin{theorem}\label{thm:DaTAM-shape}
There exists a shape $W \subset \mathbb{Z}^2$ such that there exists DTAS $\mathcal{D} = (T_\mathcal{D},S,D,\sigma,1)$ in the DaTAM which self-assembles $W$, but no TAS $\mathcal{T} = (T,\sigma',2)$ in the aTAM which self-assembles $W$.
\end{theorem}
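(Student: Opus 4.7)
The plan has two parts: first, exhibit a DTAS at temperature $1$ that self-assembles a carefully chosen shape $W$; second, prove by contradiction that no temperature-$2$ TAS can self-assemble the same $W$.

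For the DTAS construction, I would take $W$ to be the infinite shape sketched in Figure~\ref{fig:scottShapeDTAS}, which consists of an infinite scaffold punctuated at regular intervals by a small repeating motif. Each motif contains a pair of adjacent cells that a duple covers in a single placement step. The DTAS $\mathcal{D} = (T_\mathcal{D}, S, D, \sigma, 1)$ builds the scaffold using strength-$1$ glue attachments along a deterministic single-tile-wide path, and at each motif location it exposes a single strength-$1$ binding site to which a unique duple from $D$ attaches. The duple, by virtue of being a pre-joined $2 \times 1$ (or $1 \times 2$) rectangle, simultaneously places both of the motif's critical cells from that lone binding site. This is precisely the kind of cooperative-looking step that temperature-$1$ aTAM lacks but that $\mathcal{D}$ carries out for free.

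For the aTAM impossibility, I would suppose for contradiction that some TAS $\mathcal{T} = (T, \sigma', 2)$ self-assembles $W$, and focus on a distinguished cell $p_i$ inside each motif. Because $p_i$ sits at a geometric pinch point in $W$, in the $\tau = 2$ aTAM it can be filled only by (a) a single strength-$2$ glue from one of its two in-shape neighbors, or (b) two strength-$1$ glues cooperating from both in-shape neighbors. The shape $W$ is designed so that option (a) is impossible without creating spurious growth outside $W$: any glue strong enough to singly bind a tile at $p_i$ would also be exposed at a neighboring empty cell of $W^c$ at some other motif, forcing an attachment there and hence tiles outside $W$. Option (b) requires both neighbors of $p_i$ to be present, but by the motif's construction each of those neighbors is itself a pinch-point cell requiring cooperation from a pair that includes $p_i$, producing a cyclic dependency that no $\tau = 2$ assembly sequence can resolve.

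The main obstacle is rigorously ruling out every possible glue assignment of $\mathcal{T}$ around each $p_i$; this requires a careful case analysis over the finitely many tile types of $T$ combined with a pigeonhole argument across the infinitely many motifs, ensuring that some motif instance must exhibit the contradiction. The positive direction avoids these difficulties precisely because a single duple resolves the pinch-point in one step using only a strength-$1$ binding, exploiting the $2 \times 1$ geometry that the aTAM cannot reproduce.
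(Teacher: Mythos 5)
Your two-part plan (exhibit a DTAS, then rule out every TAS by contradiction) matches the paper's structure, and you correctly identify that the essential ingredient is a place where a single duple placement does work that a lone square tile cannot. But both halves of your argument diverge from what actually makes this shape hard, and the impossibility half as written would not go through.

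On the DTAS side, the shape in Figure~\ref{fig:scottShapeDTAS} does not consist of motifs ``each containing a pair of adjacent cells that a duple covers.'' Each \texttt{finger} is a \emph{long} single-tile-wide column built entirely out of duples, hanging down from the top of a vertical counter toward the \texttt{planter}. The counters are arranged to have even height, so the space available to the finger is \emph{odd}: after the duples tile as much as they can, exactly one cell remains, and no duple fits there. It is this parity mechanism --- duples always consume two cells, so an odd gap is inevitably left one cell short --- that lets the temperature-$1$ DTAS terminate the finger at a precise distance from the planter. There is no single ``motif cell'' filled by a unique duple; the precision comes from the length and parity of the column, not from a special local junction.

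On the aTAM side, the intended argument is a pumping argument, not a pinch-point/cyclic-dependency argument, and your version has a gap that cannot be repaired in the form you propose. The cell you call $p_i$ is in fact \emph{empty} in $W$ (it is the one-cell gap between finger and planter), so the question is never ``why can't $\mathcal{T}$ fill $p_i$?'' but rather ``why can't $\mathcal{T}$ \emph{avoid} filling $p_i$?''. The paper's answer: to build a finger, the TAS must grow a single-tile-wide column in which each new tile attaches to its sole upper neighbor via one strength-$\tau$ glue (the column is geometrically isolated, so cooperation along it is impossible and your option (b) never even arises). Since the counters grow arbitrarily tall, for some finger the column length exceeds $|T|$, so a tile type repeats; splicing in an extra copy of the repeated segment yields a valid assembly sequence whose finger is one or more cells too long, placing a tile in the gap or beyond --- a shape differing from $W$. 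Your pigeonhole remark gestures in this direction, but the cyclic-dependency reasoning and the claim that a repeated glue is ``exposed at a neighboring empty cell of $W^c$ at some other motif'' are not what forces spurious growth; it is the repetition \emph{within a single long finger} that lets you pump it past its intended endpoint. Without that, the case analysis over glue assignments you outline would not close, because a TAS is perfectly free to give every motif entirely distinct tile types up to the bound $|T|$, and nothing about local geometry alone prevents it.
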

\vspace{-3pt}

Here we give an intuitive overview of why the aTAM cannot simulate the shape depicted in Figure~\ref{fig:scottShapeDTAS}. See Section~\ref{sec:DaTAM-shape-proof} for the full proof. First, we call the shape in Figure~\ref{fig:scottShapeDTAS} $W$.

\begin{wrapfigure}{r}{0.5\textwidth}
\vspace{-15pt}
\begin{center}
	\includegraphics[width=2in]{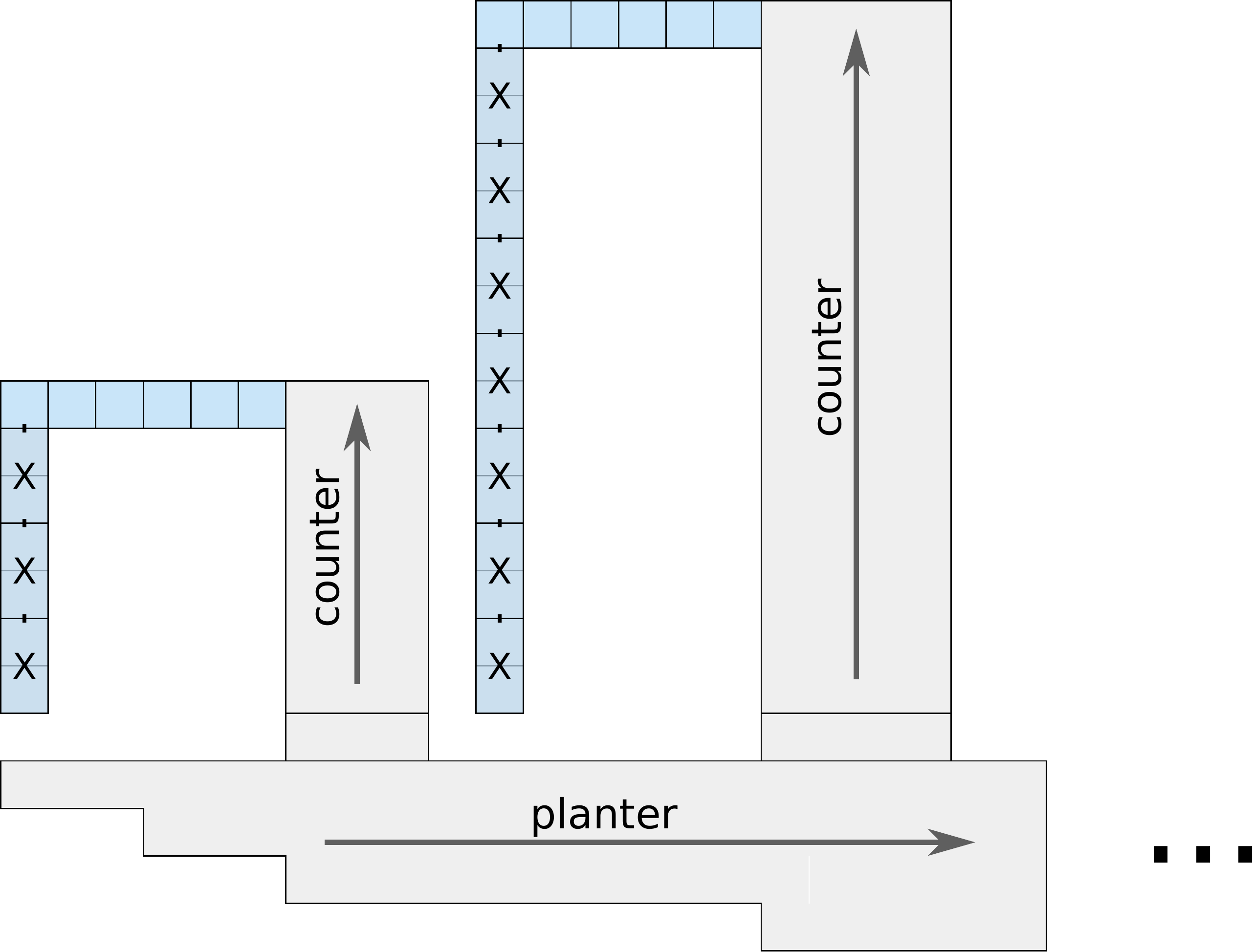}
	\caption{A high-level sketch of a portion of the infinite shape which can self-assemble in the DaTAM at $\tau=1$ but not in the aTAM at $\tau=2$. (Modules not to scale.)}
	\label{fig:scottShapeDTAS}
\end{center}
\vspace{-30pt}
\end{wrapfigure}
Since, by Theorem~\ref{thm:sim_zig}, DaTAM systems are capable of simulating compact zig-zag systems, $W$ assembles in the DaTAM as follows. A horizontal counter called the \texttt{planter} begins growth from a single tile seed and continues to grow indefinitely. The topmost tiles of the \texttt{planter} expose glues that allow vertical counters to grow. Each of these vertical counters is a finite subassembly whose height is an even number of tile locations and, from left to right, each successive counters grows to a height that is greater than the previous counter.
When a vertical counter finishes upward grow, a single tile wide path of $6$ tiles binds to the left of the counter. The leftmost tile of this single tile wide path exposes a south glue that allows for duples to attach. Equipped with matching north and south glues, these duples form a single tile wide path of duples, called a \texttt{finger}, that grows downward toward the \texttt{planter}. Since the height of each vertical counter is even and the first duple of a \texttt{finger} is placed $1$ tile location below this height, there are an odd number of tile locations for the duples of a \texttt{finger} to occupy. As a result, each finger is forced to cease growth exactly $1$ tile location away from the \texttt{planter}.
\begin{wrapfigure}{l}{0.5\textwidth}
\vspace{-25pt}
\begin{center}
\includegraphics[width=2.0in]{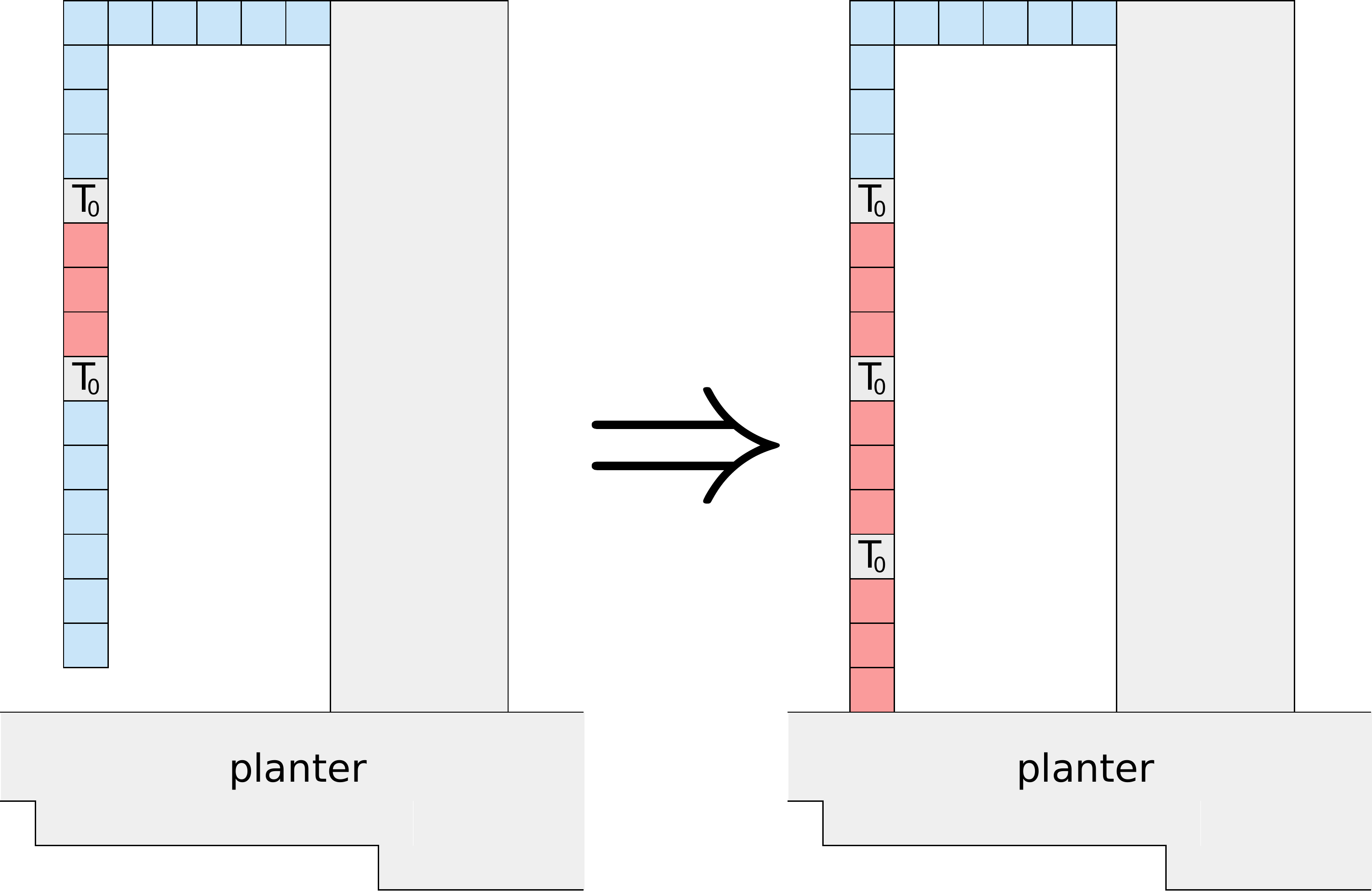}
\caption{Left: A \texttt{finger} containing two occurrences of a tile of type $T_0$. Right: A valid producible assembly that results in a shape that differs from $W$.}
\label{fig:scottShapeInvalid}
\end{center}
\vspace{-25pt}
\end{wrapfigure}

Since in an aTAM system, any tile of an assembly takes up a single location of the infinite grid-graph, it is impossible to grow the \texttt{finger} component of the shape $W$. This essentially follows from the fact that for a single tile wide line of length $l$ assembled in a TAS, if the number of tiles in $l$ is greater than the number of tile types in the TAS,  then at least two tiles of $l$ must have the same type. Therefore, by repeating the tiles between these two tiles of the same type, we can attempt to grow a line indefinitely. Hence, when a TAS attempts to grow a \texttt{finger} that is longer than the number of tile types in the TAS, we can always find an assembly sequence such that the line forming this \texttt{finger} places a tile one tile location above the tiles forming the \texttt{planter}. Figure~\ref{fig:scottShapeInvalid} depicts this invalid assembly. Therefore, no TAS can assemble $W$.

\ifabstract
\later{
\section{Proof of Theorem~\ref{thm:DaTAM-shape}}\label{sec:DaTAM-shape-proof}
\begin{proof}
To describe the shape $W$, we describe the shape as the domain of the terminal assembly of a directed DTAS $\mathcal{D} = (T_\mathcal{D}, S, D, \sigma, 1)$. $\mathcal{D}$ is based on a simulation of a temperature $2$ system $\overline{\mathcal{T}}$. Therefore, we first describe $\overline{\mathcal{T}}$. $\overline{\mathcal{T}}$ can be thought of as a system with many zig-zag components. First, from a single tile seed, a \texttt{planter} grows. The \texttt{planter} is a zig-zag counter that grows horizontally to the right and is used to initiate growth of vertical counters that each count to an even height. The initial portion of this zig-zag growth can be seen in Figure~\ref{fig:scottShapeTAS}.
The \texttt{planter} can be implemented as a standard $\log$ height horizontal zig-zag counter with the following modifications.

\begin{figure}[htp]
\centering
	\includegraphics[width=3.5in]{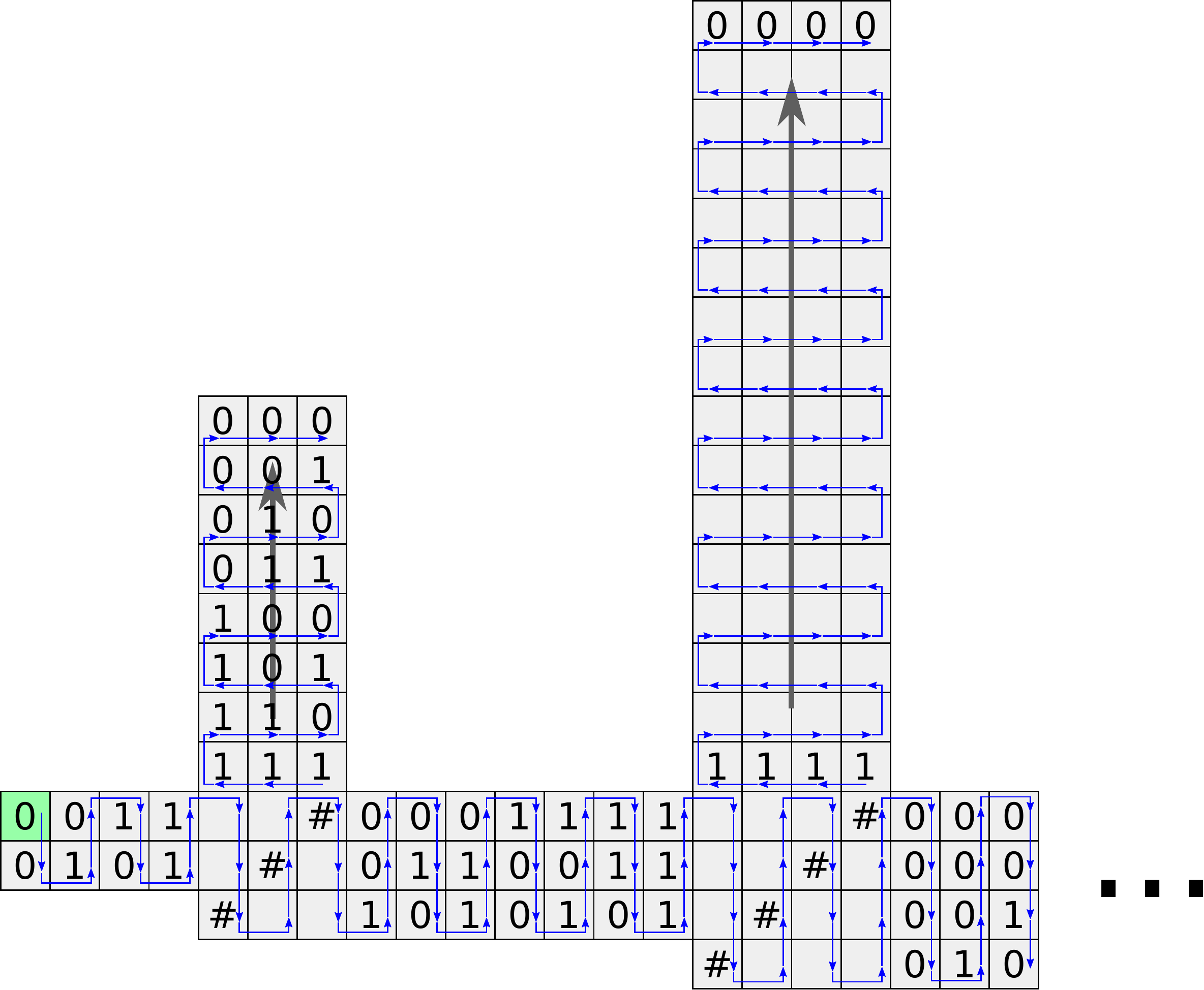}
	\caption{A depiction of a portion of a TAS that consists of a \texttt{planter} (the horizontal counter) and vertical counters that grow from the \texttt{planter}. Blue arrows indicate the order of zig-zag growth.}
	\label{fig:scottShapeTAS}
\end{figure}

For each $n\geq 1$, in a standard $\log$ height horizontal zig-zag counter, once a value of $2^n-1$ is reached an extra row (giving $n+1$ rows) would be added to the height of the counter, making room for the binary representation of $2^n$. With the \texttt{planter}, for each $n\geq 2$, when a value of $2^n-1$ is reached (signified by a column of tiles representing a string of all $1$'s) an $n+1\times n+1$ square is assembled. Figure~\ref{fig:scottShapeTAS} depicts these squares with tiles labeled ``$\#$" on the anti-diagonal. One can think of the assembly of these squares as the process of first placing a token tile (labeled ``$\#$'' in Figure~\ref{fig:scottShapeTAS}) in the southwest corner of the square, and then moving this token up one tile location as each addition column assembles during zig-zag growth. A fully assembled square will expose glue to its north and west. The north glues allow for the placement of a row of tiles representing a binary string of all $1$'s. The vertical counters count down from this value. The west glues of a square take two forms. First, if $n+1$ is odd, the west glues of an $n+1\times n+1$ square represent a binary string of all $0$'s. This effectively resets the horizontal counter, and now there are $n+1$ rows. If $n+1$ is even, the west glues of an $n+1\times n+1$ square allow for the growth of a column with height $n+1$ that represents a binary string of all $0$'s. This not only resets the horizontal counter, but also pads the counter with one extra column. This padding ensures that the zig-zag growth can properly continue. It is important to note that one vertical counter assemblies each time the horizontal counter is reset and that vertical counters count to even heights (in particular, they count to height that is a power of $2$).

Since the \texttt{planter} of $\overline{\mathcal{T}}$ is a compact zig-zag system and each vertical counter is also a zig-zag system, and moreover, vertical counters and the \texttt{planter} cannot interfere with each others' growth, it follows from Theorem~\ref{thm:sim_zig} that $\overline{\mathcal{T}}$ can be simulated by a DTAS $\overline{\mathcal{D}}$. Note that $\overline{\mathcal{D}}$ is directed and that its terminal assembly $\alpha$ also consists of a \texttt{planter} subassembly and vertical counters. Also, note that except for the tiles initiating the growth of a vertical counter, none of the north edges of the topmost tiles of $\overline{\mathcal{T}}$'s \texttt{planter} expose any glues, and therefore, the macrotiles of $\overline{\mathcal{D}}$ that represent these topmost tiles do not need to encode bits. Therefore, we can assume that all of the topmost tiles of these macrotiles are the same. Hence, we can ensure that the number of tiles from the topmost tile of a vertical counter of $\overline{\mathcal{D}}$ to the topmost tile of one of these macrotiles is some even number.  Now we modify $\overline{\mathcal{D}}$ to obtain a directed DTAS $\mathcal{D}$ that assemblies a shape $W$. $\mathcal{D}$ is the same as $\overline{\mathcal{D}}$ with extra modules called a \texttt{fingers}.

\begin{figure}[htp]
\centering
	\includegraphics[width=4in]{images/scottShape}
	\caption{A high-level sketch of a portion of the infinite shape which can self-assemble in the DaTAM at $\tau=1$ but not in the aTAM at $\tau=2$. The modules of this shape are not to scale.}
	\label{fig:scottShapeDTAS_append}
\end{figure}

Shown as blue tiles in Figure~\ref{fig:scottShapeDTAS_append}, a \texttt{finger} is a subassembly consisting of $6$ spacer\footnote{The choice of $6$ spacers is not important in the current proof. To prove Theorem~\ref{thm:aTAM_cannot_sim_DaTAM}, we reuse this shape, and $6$ spacer tiles are needed there.} tiles that attach to the west of the northwest  most tile of a vertical counter and then expose a single strength $1$ glue that allows duples to bind. Since these duples have matching north and south glues, a column of duples forms. Now, we have set up the vertical counters so that the column of duples that forms will end leaving a single tile wide gap between this column of duples and the \texttt{planter}. This follows from the fact that each vertical counter counts to an even height and the leftmost spacer tile of a \texttt{finger} takes up a single tile location, leaving an odd number of tile locations for assembling the column of duples.

$\mathcal{D}$ is a directed DTAS whose terminal assembly has a domain that defines a shape $W$. We will now use the terms \texttt{planter} and \texttt{finger} to denote the subsets of $W$ that correspond the domains of the relevant subassemblies (those subassemblies of $\mathcal{D}$ that make up the \texttt{planter} and \texttt{fingers}). We claim that $W$ cannot be assembled by any aTAM system which we prove by contradiction. Suppose that $\mathcal{T} = (T,\sigma',2)$ is an aTAM system that assembles $W$. We will show that $\mathcal{T}$ is capable of assembling shapes other than $W$. Note that $\mathcal{T}$ must assemble the \texttt{planter} of $W$ as well as the \texttt{fingers} of $W$. In order for $\mathcal{T}$ to correctly assemble the \texttt{fingers} of $W$, it must be the case that for any $n>0$, $\mathcal{T}$ can assemble a single tile wide column of tiles that bind to a single $\tau$-strength glue and leave a single tile location between the bottommost tile of this column and the \texttt{planter}. However, for $n>|T|$ note that some tile type $T_0$ of this column must repeat. Let $T_1, T_2, \dots , T_k$ denote the tile types between the two occurrences of type $T_0$. Now notice that it is a valid assembly sequence for tiles of types $T_0, T_1, T_2, \dots , T_k$ to be placed repeatedly in order
until the a tile is positioned adjacent to (i.e. immediately above) a tile of the \texttt{planter}. Such an assembly sequence would result in a shape differing from $W$ (by not leaving a single-tile-wide gap between the \texttt{finger} and the \texttt{planter}). Therefore, $\mathcal{T}$ does not assemble $W$. Figure~\ref{fig:scottShapeInvalid_append} gives a high-level picture of how the two occurrences of a tile of type $T_0$ can be used to grow an invalid shape.

\begin{figure}[htp]
\centering
\includegraphics[width=4.0in]{images/scottShapeInvalid}
\caption{Left: A \texttt{finger} containing two occurrences of a tile of type $T_0$. Right: A valid producible assembly that results in a shape that differs from $W$.}
\label{fig:scottShapeInvalid_append}
\end{figure}


\end{proof}

}
\vspace{-10pt}
\subsection{A shape in the aTAM but not the DaTAM}
\vspace{-8pt}
In this section, we give a high-level sketch of the proof that there exists a shape which can self-assemble in the aTAM at $\tau=2$ but not in the DaTAM at $\tau=1$.  Please see Section~\ref{sec:aTAM-shape-proof} for the full details.

\begin{wrapfigure}{r}{0.5\textwidth}
\vspace{-30pt}
\begin{center}
\includegraphics[width=2.5in]{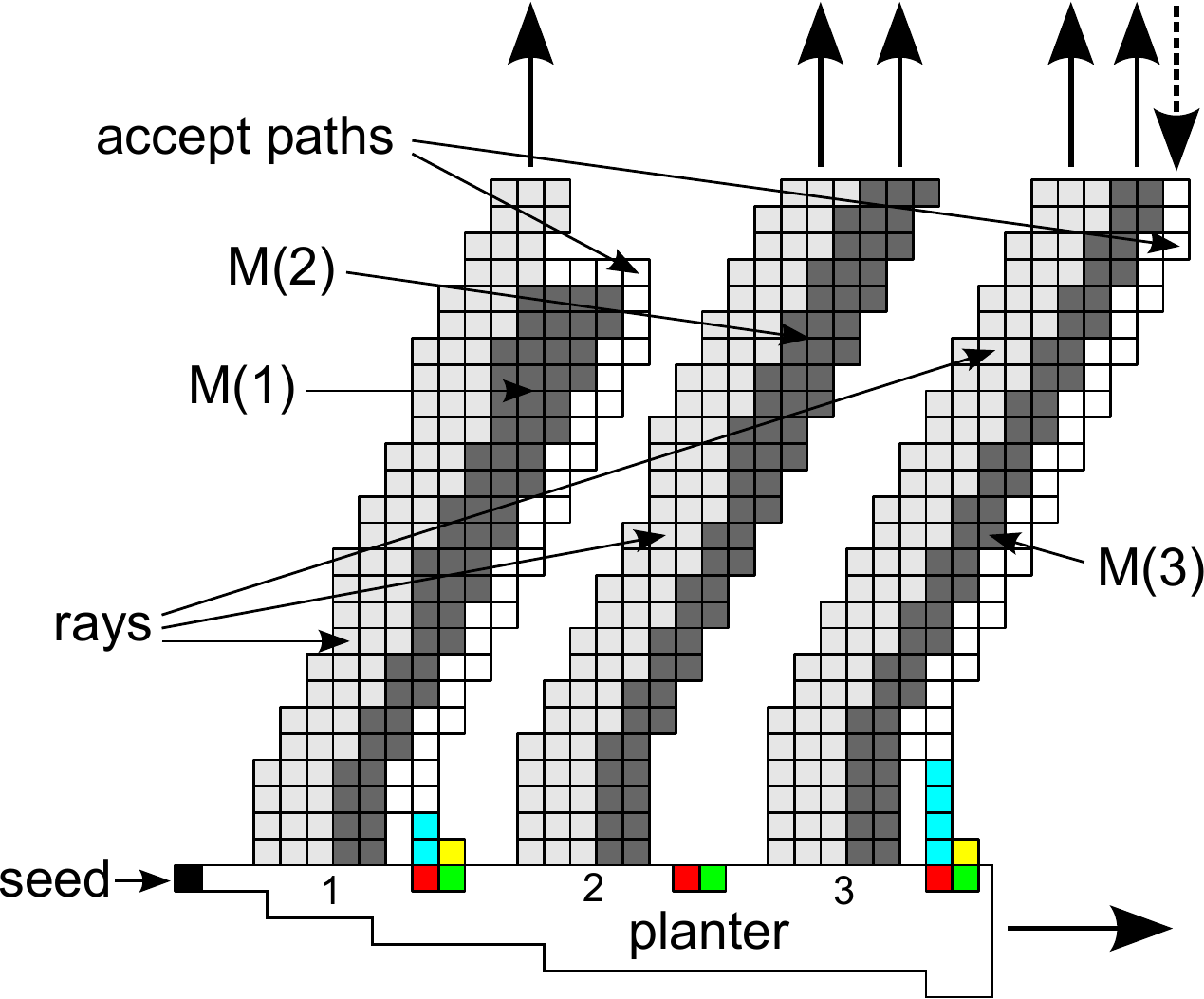}
\caption{Schematic depiction of a portion of the infinite shape which can self-assemble in the aTAM at $\tau=2$ but not in the DaTAM at $\tau=1$.}
\label{fig:TMs-and-rays}
\end{center}
\vspace{-30pt}
\end{wrapfigure}
\vspace{-8pt}
\begin{theorem}\label{thm:aTAM-shape}
There exists a shape $S \subset \mathbb{Z}^2$ such that there exists a TAS $\mathcal{T} = (T,\sigma,2)$ in the aTAM which self-assembles $S$, but no DTAS $\mathcal{D} = (T_{\mathcal{D}},S_{\mathcal{D}},D_{\mathcal{D}},\sigma',1)$ in the DaTAM which self-assembles $S$.
\end{theorem}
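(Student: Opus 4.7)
The plan is to construct a shape $S$ consisting of a main ``computational backbone'' from which single-tile-wide rays emanate, with each ray's length determined by the output of a Turing machine computation embedded in the backbone. The intuition is that the $\tau=2$ aTAM can use cooperation to terminate each ray at its precisely-computed length, while any $\tau=1$ DaTAM would be forced to grow a single-tile-wide chain whose eventual termination must rely on either a pumpable tile-sequence repetition or on duple-parity tricks, neither of which suffices to terminate at all required lengths. Concretely, I would define $S$ as an infinite horizontal planter together with, at each position $n \geq 1$, a bounded TM-simulation module and a vertical single-tile-wide ray terminating in free space at length $f(n)$, where $f : \N \to \N$ is a total computable function with $f(n) \to \infty$ chosen so that its values avoid any fixed parity pattern and their pairwise differences do not lie in any fixed arithmetic progression of small common difference.

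For the aTAM construction at $\tau=2$, I would combine standard log-height zig-zag counters (as in \cite{Winf98}) for the planter with the $\tau=2$ TM simulation of \cite{CookFuSch11} inside each module. The ray is grown alongside a small cooperating counter that decrements from $f(n)$ to zero, and the capping tile of the ray requires a cooperative bond with the zero-value tile of this counter, forcing termination at exactly length $f(n)$. Standard arguments then show that the resulting TAS uniquely self-assembles $S$.

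For the impossibility of self-assembling $S$ in the DaTAM at $\tau=1$, I would argue by contradiction. Suppose a DTAS $\mathcal{D} = (T_{\mathcal{D}}, S_{\mathcal{D}}, D_{\mathcal{D}}, \sigma', 1)$ self-assembles $S$. Each ray in $S$ is a single-tile-wide finite chain terminating in free space, so its growth in $\mathcal{D}$ is a sequence of tile and duple attachments, each via a single strength-$1$ glue. By a pumping argument analogous to the standard $\tau=1$ aTAM pumping/window-movie lemma, adapted to chains mixing tiles and duples, any such chain of length exceeding a polynomial in $|S_{\mathcal{D}} \cup D_{\mathcal{D}}|$ contains a repeated contiguous block that can be pumped to produce a strictly longer producible chain. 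For $n$ large enough that $f(n)$ exceeds this bound, $\mathcal{D}$ would be forced to admit terminal assemblies whose ray at position $n$ differs in length from $f(n)$, contradicting the assumption that $\mathcal{D}$ self-assembles $S$.

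The main obstacle is rigorously establishing the pumping lemma for $\tau=1$ DaTAM chains that mix tiles and duples, and designing $f$ and the local geometry of each ray so that the duple-parity termination trick exploited in Theorem~\ref{thm:DaTAM-shape} cannot be reused here. Two precautions seem to be needed: (i) isolate each ray's endpoint in free space, surrounding it by points not in $S$ so that no collision-based termination against an existing structure is available, and (ii) choose $f(n)$ so that its values do not fit any common step-size produced by pumping a pure-tile, pure-duple, or mixed sub-chain of bounded length. If these are in place, the DaTAM has no mechanism left to terminate the rays at the required lengths, yielding the impossibility.
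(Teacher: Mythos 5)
Your proposed shape $S$ does not work, for a reason that strikes at the heart of why the paper's proof has the structure it does. The two precautions you list at the end are mutually incompatible with the aTAM construction you describe, and without both of them the impossibility argument collapses.

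Specifically: your aTAM system terminates each ray by running a ``small cooperating counter'' alongside it and having the capping tile bind cooperatively to the counter's zero cell. But that counter is made of tiles, so it is necessarily part of $S$. Once the counter is part of $S$ and sits flush against the ray, a temperature-$1$ DaTAM can build $S$ as well: it simulates the counter (a compact zig-zag structure, hence DaTAM-simulable by Theorem~\ref{thm:sim_zig}) and then attaches each ray tile \emph{sideways} to the adjacent counter column using a single strength-$1$ glue, never needing to ``terminate on its own.'' The ray inherits its length from the counter's computed height, so $\mathcal{D}$ produces exactly $S$ and there is no contradiction. Conversely, if you enforce precaution (i) and truly isolate the ray's terminal segment in free space, then the aTAM at $\tau=2$ \emph{also} cannot build it: a single-tile-wide chain in free space attaches each tile by exactly one glue regardless of temperature, so the same pigeonhole/pumping argument you invoke against the DaTAM shows that the aTAM cannot terminate an isolated chain at lengths $f(n)\to\infty$ either. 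In short, a pumping argument alone cannot separate $\tau=2$ aTAM from $\tau=1$ DaTAM here, because it applies symmetrically to both; and your total computable $f$ gives the DaTAM (which is Turing-universal by Theorem~\ref{thm:sim_zig}) too much room to simply compute the answer.

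The paper gets around exactly this tension with two moves you are missing. First, the downward accept path is allowed to terminate by \emph{crashing into the planter}, so neither model needs to ``compute'' the stopping point of that chain. Second, and crucially, the distinguishing tile (the ``yellow'' tile) is placed cooperatively at the corner formed by the bottom of the accept path and a known planter position, and whether that corner exists at $x$-coordinate $f(n)+1$ is governed by membership of $n$ in a language $L$ that is computably enumerable but \emph{not decidable}. The pumping argument is then used only locally, to rule out the yellow tile binding to the accept path; ruling out its binding to the planter below requires the undecidability: a DaTAM that placed distinguishable tile types at the computably-located green/red positions according to whether a yellow tile should sit above them would, when run on a fair simulator, decide $L$. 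Replacing $L$ by a total computable $f$ removes the contradiction entirely. So the decomposition you want is crash-termination plus cooperative corner plus undecidability, not pumping plus isolation.
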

\vspace{-8pt}
See Figure~\ref{fig:TMs-and-rays} for a high-level sketch of a portion of the infinite shape, which is based on the shape used in the proof of Theorem 4.1 of \cite{BryChiDotKarSekTOC} (which in turn is based on that of Theorem 4.1 of \cite{jCCSA}).  Essentially, $\mathcal{T}$ assembles $S$ in the following way.  Beginning from the seed, it grows a module called the \texttt{planter} eastward.  The \texttt{planter} is a modified log-height counter which counts from $1$ to $\infty$, and for each number - at a well-defined location - places a binary representation of that number on its north side.  From each such location, modules called \texttt{rays} and Turing machine simulations begin.  Each \texttt{ray} grows at a unique and carefully defined slope so that it can direct the growth of its adjacent Turing machine simulation in such a way the no Turing machine simulation will collide with another \texttt{ray}, but it also potentially has infinite tape space for its computation.  The infinite series of Turing machine computations each run the same machine, $M$, on input $n$ where $n$ is the value presented by the \texttt{planter} at that location.  If and only if each computation halts and accepts, a path of tiles grows down along the right side of the computation until it reaches a position from which it grows a vertical path of tiles directly downward to crash into the \texttt{planter} (blue in Figure~\ref{fig:TMs-and-rays}).  It's important that the height of the vertical portions (blue) of the paths increase for each.  If and when a path places a tile adjacent to the \texttt{planter}, glue cooperation between the final tile of the path and a \texttt{planter} tile allow for the placement of a final tile (yellow in Figure~\ref{fig:TMs-and-rays}).  $S$ is the infinite shape resulting from the growth of all portions.

The reason that $S$ cannot assemble in the DaTAM at $\tau=1$ is that glue cooperation cannot be used to place the yellow tiles, so each must be able to attach to just a tile in the blue portion of a path or the \texttt{planter} tile in a green location.  It is impossible for all yellow tiles to be placed correctly because if they attach to 1) the blue portions of paths, since those get arbitrarily long, they must have repeating tile types which could be used to grow blue paths of the wrong height which allow yellow tiles to attach too far above the planter, or 2) the \texttt{planter} tiles, then the \texttt{planter} would have to be able to allow yellow tiles to attach exactly in all positions corresponding to halting and accepting computations, but the Turing machine being simulated accepts a language which is computably enumerable but not decidable, thus that is impossible.  Thus, no DaTAM system can assemble $S$.

\ifabstract
\later{
\section{Proof of Theorem~\ref{thm:aTAM-shape}}\label{sec:aTAM-shape-proof}
\begin{proof}
Let $L \subset \mathbb{N}$ be an infinite language which is computably enumerable but not decidable, and let $M$ be a Turing machine such that $L(M) = L$.  Let $\mathcal{T} = (T,\sigma,2)$ by a TAS in the aTAM defined as follows. Note that $\mathcal{T}$ is very similar to that of the proof of Theorem 4.1 of \cite{BryChiDotKarSekTOC}, with only a few minor differences which force glue cooperation for tile placements (yellow positions in Figure~\ref{fig:TMs-and-rays}) next to the end of each path of an accepting computation instead of growing short upward fingers to potentially crash with them.

The main functionality of $\mathcal{T}$ is based on that of the construction for the proof of Theorem 4.1 of \cite{jCCSA}, and thus begins growth from a single seed tile placed at the origin.  The construction can be thought of in a very modular way, with the module growing horizontally from the seed called the \texttt{planter}.  The \texttt{planter} is basically an augmented log-width (or in this case log-height) binary counter which counts from $1$ to $\infty$ as it grows infinitely far to the right, and at well-defined intervals it increments the counter values and rotates copies of them so that they are exposed via northward facing glues.

For every value of $n$, $1 < n < \infty$, the northward facing glues of the \texttt{planter} which represent $n$ initiate the upward growth of a \texttt{ray} and a simulation of the computation $M(n)$.  As in \cite{jCCSA}, based on the value of $n$, each \texttt{ray} grows infinitely far upward at a unique slope based on its value of $n$, with the slopes approaching $2$ as $n \rightarrow \infty$.  Each computation $M(n)$ is performed by a Turing machine simulation which is controlled by the corresponding \texttt{ray} in such a way that the $n$th \texttt{ray} periodically, based on its $n$, signals $M(n)$ to perform one step of the computation and grow its tape by one position to the right (i.e. most rows of upward growth simply copy the state of the computation and the tape upward without allowing a computational step).  The carefully designed slopes of the \texttt{ray}s ensures that each computation can potentially run infinitely long (in the case of a non-halting computation) and utilize a an infinite amount of tape, without any neighboring \texttt{ray}s and computations colliding (at the cost of increasing slowdown in each successive computation).

For each $M(n)$ which halts, the growth of the assembly simulating $M(n)$ also halts (while the growth of the $n$th ray continues), and if $M$ accepts $n$, then a single-tile-wide path grows down along the right side of the assembly simulating $M(n)$.  Our construction makes use of the same modification of \cite{jCCSA}'s construction as \cite{BryChiDotKarSekTOC} does, namely that each \texttt{ray}, before it begins to grow at a slope, grows a portion directly upward first, with the height of that portion increasing for every $n$.  Thus, once a downward growing ``accept path'' encounters the location just above and to the right of the vertical drop, the path then drops straight down until it crashes into the \texttt{planter}.  We call these portions of the paths $p_n$, and they are shown in blue in Figure~\ref{fig:TMs-and-rays}.  Locations where accept paths would (potentially - if $M(n)$ halts and accepts) crash are shown in Figure~\ref{fig:TMs-and-rays} in red.  Note that as in \cite{BryChiDotKarSekTOC}, the \texttt{planter} is modified so that before the growth of the $M(n)$ can begin, the corresponding red tile must be placed first, so that there is no nondeterminism at the red positions.  As opposed to that of \cite{BryChiDotKarSekTOC}, our system here is directed, so no tile can bind to the north of the red tiles.  However, the difference with our system is that the green tiles, immediately to the right of the red tiles (Figure~\ref{fig:TMs-and-rays}) have a strength-$1$ glue on their north, and the path of tiles growing in the final vertical stretch of the accept paths is formed from a single tile type which has strength-$2$ glues on its north and south and a strength-$1$ glue on its east.  At exactly the end of each accept path, a corner will be formed where a tile (shown as yellow in Figure~\ref{fig:TMs-and-rays}) can bind via cooperation.  With the \texttt{planter} continuing its count to infinity, and every \texttt{ray} and Turing machine simulation occurring as described, along with the correct accept paths, in the limit $\mathcal{T}$ forms a single terminal assembly, whose shape we call $S$.

It is crucial for our proof to note that the locations of the red and green tiles can be computed following the function $f$ from \cite{BryChiDotKarSekTOC}, which is a computable, roughly quadratic function (which is very similar to the $f$ defined in \cite{jCCSA} as $f(n) = {n+1 \choose 2} + (n+1)\lfloor \log n \rfloor + 6n - 2^{1+\lfloor \log n \rfloor} + 2$).  The red tiles occur at $(f(n),0)$ for $n > 0$, while the green tiles occur at $(f(n)+1,0)$.  Let $R$ denote the set of $x$-coordinates of the red tiles in $\mathcal{T}$, and $G$ those of the green tiles.  Furthermore, the locations of the yellow tiles are determined by an uncomputable function, as there is a yellow tile at exactly every location $(f(x)+1,1)$ where $M(x)$ halts and accepts, and $L(M)$ is computably enumerable but not decidable.  Let this set of $x$-coordinates be $Y$.

In order to show that no DaTAM system at temperature $1$ can assemble $S$, we assume the opposite and prove by contradiction.  Therefore, let $\mathcal{D} = (T_{\mathcal{D}},S_{\mathcal{D}},D_{\mathcal{D}},\sigma',1)$ be a DTAS.  Assume that $\mathcal{D}$ assembles $S$; that is, let $\alpha \in \termasm{D}$ be any terminal assembly of $\mathcal{D}$ and note that $\dom(\alpha) = S$.  We first show that the tiles in the positions shown in yellow (i.e. at $(y,1)$ for all $y \in Y$) cannot bind to tiles in the accept paths, and then that they also cannot bind to the \texttt{planter} below them, and therefore $S$ cannot be assembled.

Let $t = |S_{\mathcal{D}}| + |D_{\mathcal{D}}|$ denote the number of tile types in $\mathcal{D}$.  We first note that, regardless of the value of $t$, since it must be finite there is an infinite subset $L' \subseteq L$ such that for all $l \in L'$, $l > 2t$, which is the maximum distance that a line of $t$ unique tile types composed of duples and squares (i.e. all duples) could possibly extend.  Thus, for each $l \in L'$, the height of the vertical drop at the end of its accept path, $p_l$, ensures that some tile type must be repeated.  For any such $p_l$, let $y_0$ and $y_1$, with $y_0 < y_1$, denote the $y$-coordinates of two positions which receive the same tile type in $\alpha$. In $\mathcal{D}$, a valid assembly sequence would be one which places the same tiles as $\alpha$ in the same order as $\alpha$, until the first tile of $p_l$ is to be placed.  At that point, this new assembly sequence essentially skips the sequence of tiles along $p_l$ between $y_1$ and $y_0$ by growing the portion of $p_l$ which extends down from $y$-coordinates $y_0$ to $1$ directly from the tile at $y$-coordinate $y_1$.  If the tile in the yellow position next to $p_l$ had any glue binding with $p_l$ (as a square tile to its side or as a duple sticking out from $p_l$), the same extra position would be tiled to the east of $p_l'$ but above $y$-coordinate $1$, which is outside of $S$.  (See Figure~\ref{fig:spliced-path} for an example of how the tiles between $y_1$ and $y_0$ could be replaced with those extending from $y_0$ to break $S$.)  Therefore, the tiles at the yellow positions must have no glue binding to the paths $p_l$ for $l \in L'$.

\begin{figure}[htp]
\begin{center}
\includegraphics[width=3.0in]{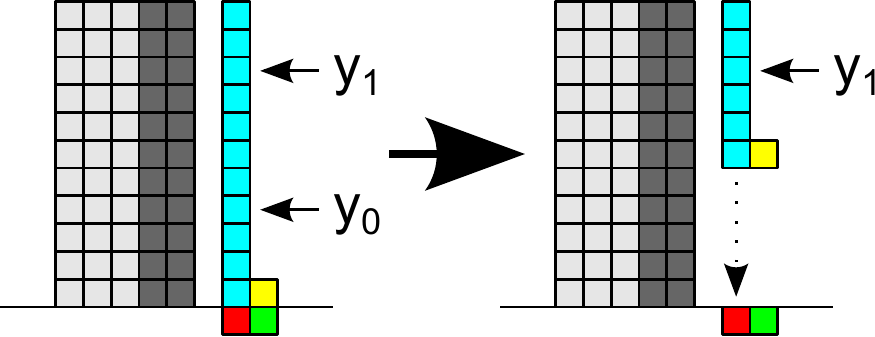}
\caption{If the tile at the end of $p_l$ binds to the tile to its left, there exists a valid assembly sequence which places a tile outside of $S$.}
\label{fig:spliced-path}
\end{center}
\end{figure}

Thus, for every $l \in L'$, the tile in the yellow position above the \texttt{planter} must not bind to $p_l$, but instead to the tile below it in the \texttt{planter} (shown as green in Figure~\ref{fig:TMs-and-rays}) or via a binding path from the red tile to its left.  However, for every value $n > min(L')$ where $n \not\in L$, the tiles in the corresponding green and red positions must not expose a north facing glue allowing a tile to bind to the north. Therefore, the tile types which assemble in those positions corresponding to $n \in L$ must not be the same tile types as those in the positions corresponding to $n \not\in L$.  Let $R',G' \subset S_{\mathcal{D}} \cup D_{\mathcal{D}}$ be the tile types which are able to bind to their north sides and allow the attachment of tiles in yellow positions.  Recall that the red and green positions have $x$-coordinates which are in the computable sets $R$ and $G$.  We can now simulate $\mathcal{D}$ on a ``fair'' simulator (i.e. one which maintains its set of frontier locations as a first-in-first-out queue, thus at each simulated time step adding a tile to a frontier location which has been able to receive a tile for the longest amount of time).  Since we are guaranteed that all positions whose $x$-coordinates are in $R$ and $G$ will receive tiles in this simulator, every time tiles are placed in both of a pair of adjacent red and green positions we can note their types (i.e. whether or not either is in $R'$ or $G'$) and thus from that know whether or not the corresponding computation $M(n)$ halts and accepts.  This makes $L$ decidable, which is a contradiction by the definition of $L$, and thus $\mathcal{D}$ must not build $S$.

\end{proof}

} 



\vspace{-10pt}
\subsection{A DaTAM system which cannot be simulated by the aTAM}\label{sec:aTAM_cannot_sim_DaTAM}
\vspace{-5pt}
In this section, we give a single directed DaTAM system $\mathcal{D}$ at $\tau=1$ which cannot be simulated by any aTAM system at $\tau=2$. The fact that the aTAM at temperature $2$ is incapable of simulating a single directed temperature $1$ DTAS shows that the addition of duples fundamentally changes the aTAM model.
The DTAS constructed in this section is similar to the system given in Section~\ref{sec:shapeDaTAM}. See Figure~\ref{fig:trentMod} for a depiction of a producible assembly of $\mathcal{D}$.  In order to show that the aTAM cannot simulate this DTAS, we use a technique used in~\cite{IUNeedsCoop}. This technique relies on the notion of a \emph{window movie}. Please see Section~\ref{sec:window-movie-details} for details of window movies.

\ifabstract
\later{
\section{Window movie details}\label{sec:window-movie-details}

We will start by stating the definitions of a window and window movie.

\begin{definition}
A \emph{window} $w$ is a set of edges forming a cut-set in the infinite grid graph.
\end{definition}

Often a window is depicted as paths (possibly closed) in the 2D plane. See Figure~\ref{fig:scottShapeWindows} for example. Given a window and an assembly sequence, one can observe the order and sequence that tiles attach across the window. This gives rise to the following definition.

\begin{definition}\label{def:windowMovie}
Given an assembly sequence $\vec{\alpha}$ and a window $w$, the associated {\em window movie} is the maximal sequence $M_{\vec{\alpha},w} = (v_{0}, g_{0}) , (v_{1}, g_{1}), (v_{2}, g_{2}), \ldots$ of pairs of grid graph vertices $v_i$ and glues $g_i$, given by the order of the appearance of the glues along window $w$ in the assembly sequence $\vec{\alpha}$.
Furthermore, if $k$ glues appear along $w$ at the same instant (this happens upon placement of a tile which has multiple  sides  touching $w$) then these $k$ glues appear contiguously and are listed in lexicographical order of the unit vectors describing their orientation in $M_{\vec{\alpha},w}$.
\end{definition}

In~\cite{IUNeedsCoop}, a lemma called the window movie lemma is shown. This lemma provides a means of obtaining a valid assembly sequence based on two different assembly sequences, $\vec{\alpha}$ and $\vec{\beta}$, with the property that for a window $w$ and a translation $w^\prime$ of $w$, $M_{\vec{\alpha},w} = M_{\vec{\beta},w'}$. For the formal statement of this lemma, see Lemma~\ref{lem:windowmovie2}. The valid assembly sequence obtained from the lemma is intuitively a splicing together of $\vec{\alpha}$ and $\vec{\beta}$ along a window.
Here we are concerned with windows defined by a single closed rectangular path. We call such windows \emph{closed rectangular windows}. For a closed rectangular window $w$, let $H(w)$ denote the vertical height of the rectangle defining $w$ and let $W(w)$ denote the horizontal width of the rectangle defining $w$. Finally, for a translation vector $\vec{c}$, integer height $h$, and a closed rectangular window $w$, let $T^h_{\vec{c}}(w)$ be the transformation of $w$
such that $W(T^h_{\vec{c}}(w))=W(w)$ and $H(T^h_{\vec{c}}(w)) = h$ obtained by first resizing the height of $w$ while keeping the top edge of $w$ fixed and then translating the resized rectangle by $\vec{c}$. The following lemma is analogous to the window movie lemma found in~\cite{IUNeedsCoop}, only it pertains to closed rectangular windows.

\begin{lemma}[Closed rectangular window movie lemma]
\label{lem:windowmovie}
Let $\vec{\alpha} = (\alpha_i \mid 0 \leq i < l)$ and $\vec{\beta} = (\beta_i \mid 0 \leq i < m)$, with
$l,m\in\Z^+ \cup \{\infty\}$,
be assembly sequences in $\mathcal{T}$ with results $\alpha$ and $\beta$, respectively.
Let $w$ be a closed rectangular window that partitions~$\alpha$ into two configurations~$\alpha_I$ and $\alpha_E$, and let $w' = T^h_{\vec{c}}(w)$ for a height $h \geq H(w)$ and a translation vector $\vec{c}$. Also suppose that $w^\prime$ partitions~$\beta$ into two configurations $\beta_I$ and $\beta_E$.
Define $\alpha_E$, $\beta_E$ to be the subconfigurations of $\alpha$ and $\beta$ containing the seed tiles of $\alpha$ and $\beta$, respectively.

Then if $(v+\vec{c}, g) \in M_{\vec{\alpha},w} \iff (v,g)\in M_{\vec{\beta},w'}$, it is the case that the assembly $\beta_E \alpha'_I = \beta_E \cup \alpha'_I$, where $\alpha'_I=\alpha_I+\vec{c}$, is also producible.
\end{lemma}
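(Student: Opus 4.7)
The plan is to construct an explicit $\mathcal{T}$-assembly sequence $\vec{\gamma}$ whose result is the configuration $\beta_E \cup \alpha'_I$, following the interleaving strategy of the window movie lemma in~\cite{IUNeedsCoop} and adapting it to the height-changing transformation $T^h_{\vec{c}}$. The sequence $\vec{\gamma}$ is formed by playing $\vec{\beta}$ and a $\vec{c}$-translated copy of $\vec{\alpha}$ in parallel, with the two component subsequences synchronized so that cross-cut glue events occur in the order specified by the common window-movie structure. Whenever $\vec{\beta}$ places a tile of $\beta_E$, $\vec{\gamma}$ places that tile; whenever $\vec{\alpha}$ places a tile of $\alpha_I$ at position $p$, $\vec{\gamma}$ places the same tile at $p + \vec{c}$; all placements of $\beta_I$ and $\alpha_E$ are omitted.

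The main verification is that every step of $\vec{\gamma}$ is $\tau$-legal. A tile $t$ of $\beta_E$ placed by $\vec{\gamma}$ received its binding strength in $\vec{\beta}$ from neighbors of two kinds: tiles in $\beta_E$, placed earlier in the same relative order by $\vec{\gamma}$; and tiles in $\beta_I$ across an edge of $w'$, contributing entries to $M_{\vec{\beta},w'}$. The hypothesis that the two window movies agree under translation by $\vec{c}$ guarantees that every such entry is matched by a corresponding entry in $M_{\vec{\alpha},w}$ produced by a tile of $\alpha_I$. That $\alpha_I$-tile, translated by $\vec{c}$, sits exactly across $w'$ from $t$ in $\vec{\gamma}$ and presents the same glue label and strength, so $t$ attaches with strength at least $\tau$. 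The symmetric argument applies to each tile of $\alpha_I$ placed by $\vec{\gamma}$, whose cross-cut support originally came from $\alpha_E$ but now comes from the matching tile of $\beta_E$ across $w'$.

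The final ingredient is that the union $\beta_E \cup \alpha'_I$ is a well-defined configuration. Here the hypothesis $h \geq H(w)$ is essential: since $w'$ is obtained from $w$ by enlarging the height downward (keeping the top edge fixed) and then translating by $\vec{c}$, the interior of $w'$ contains the translated region $\dom(\alpha_I) + \vec{c}$, while $\beta_E$ lies entirely on the seed side of the cut. Since $w'$ is a cut-set of the grid graph, its two sides are disjoint subsets of $\Z^2$, which rules out any tile conflict between $\beta_E$ and $\alpha'_I$. Combining the interleaving argument with this geometric disjointness yields $\vec{\gamma}$ with result $\beta_E \cup \alpha'_I$, as required.

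The step I expect to demand the most care is the synchronization of glue events in the presence of duples: a single duple attachment in either $\vec{\alpha}$ or $\vec{\beta}$ can contribute several entries to its window movie at the same instant, listed in the lexicographic order prescribed by the window movie definition, and the proof must confirm that the matched instant in the other sequence exhibits the same simultaneous entries in the same order. An additional subtlety is ensuring that no duple placed in $\vec{\beta}$ straddles $w'$ in a way that occupies a cell where $\alpha'_I$ will later deposit a tile; the interleaving construction avoids this because such duples would lie partly in $\beta_E$ and partly in $\beta_I$, and the matching forces their placement to coincide in time with the deposit of the corresponding $\alpha'_I$-tile. These bookkeeping points, while routine, constitute the only substantive departure from the argument in~\cite{IUNeedsCoop}.
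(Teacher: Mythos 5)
Your proof follows essentially the same strategy as the paper's: after reducing to $\vec{c}=\vec{0}$, build a merged assembly sequence that keeps steps from $\vec{\beta}$ landing in $\beta_E$ and steps from $\vec{\alpha}$ (translated) landing in $\alpha_I$, synchronized by the shared window-movie ordering, and then verify $\tau$-stability by noting that each attached tile's cross-cut support is supplied by the matching tile on the other side, while non-collision follows from $h \geq H(w)$ placing $\dom(\alpha_I)+\vec{c}$ strictly inside $w'$ and $\beta_E$ strictly outside. The paper formalizes the merge as explicit pseudocode (Figure~\ref{fig:algo-seq}) and you describe it in prose, but the interleaving, the two-case validity check (bond strength versus position conflict), and the role of the containment $\dom(\alpha_I)+\vec{c}\subseteq\mathrm{int}(w')$ are the same. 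One small mismatch: your discussion of duples straddling the window is not needed here, since Lemma~\ref{lem:windowmovie} is applied to an aTAM (square-only) simulator $\mathcal{T}$ in the proof of Theorem~\ref{thm:aTAM_cannot_sim_DaTAM}; the paper defers the duple case to the separate Lemma~\ref{lem:windowmovie2}. This extra care does no harm, but it is addressing a situation that does not arise for this lemma.
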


The proof of Lemma~\ref{lem:windowmovie} is similar to the proof of the window movie lemma
which can be found in~\cite{IUNeedsCoop}. Despite this similarity, because we refer to the proof of Lemma~\ref{lem:windowmovie} in later proofs, we give the proof of Lemma~\ref{lem:windowmovie} here. Before proceeding, we first define some notation that will be useful for this section of the paper.

For an assembly sequence $\vec{\alpha} = (\alpha_i \mid 0 \leq i < l)$, we write $\left| \vec{\alpha} \right| = l$ (note that if $\vec{\alpha}$ is infinite, then $l = \infty$). We write $\vec{\alpha}[i]$ to denote $\vec{x} \mapsto t$, where $\vec{x}$ and~$t$ are such that $\alpha_{i+1} = \alpha_i + \left(\vec{x} \mapsto t\right)$, i.e., $\vec{\alpha}[i]$ is the placement  of tile type $t$ at position~$\vec{x}$, assuming that $\vec{x} \in \partial_{t}\alpha_i$. We define $\vec{\alpha} = \vec{\alpha} + \left(\vec{x} \mapsto t\right) = (\alpha_i \mid 0 \leq i < k + 1)$, where $\alpha_{k} = \alpha_{k-1} + \left(\vec{x} \mapsto t\right)$ if $\vec{x} \in \partial_{t}^{\tau}\alpha_i$ and undefined otherwise, assuming $\left| \vec{\alpha} \right| > 0$. Otherwise, if $\left| \vec{\alpha} \right| = 0$, then $\vec{\alpha} = \vec{\alpha} + \left(\vec{x} \mapsto t \right) = (\alpha_0)$, where $\alpha_0$ is the assembly such that $\alpha_0\left(\vec{x}\right) = t$ and is undefined at all other positions. This is our notation for appending steps to the assembly sequence $\vec{\alpha}$: to do so, we must specify a tile type $t$ to be placed at a given location $\vec{x} \in \partial_t\alpha_{i-1}$. If $\alpha_{i+1} = \alpha_i + \left(\vec{x} \mapsto t\right)$, then we write $Pos\left(\vec{\alpha}[i]\right) = \vec{x}$ and $Tile\left(\vec{\alpha}[i]\right) = t$. For a movie window $M = (v_0,g_0), (v_1,g_1), \ldots$, we write $M[k]$ to be the pair $\left(v_{k-1},g_{k-1}\right)$ in the enumeration of $M$ and $Pos\left(M[k]\right) = v_{k-1}$, where $v_{k-1}$ is a vertex of a grid graph.

\begin{proof}
We give a constructive proof by giving an algorithm for constructing an assembly sequence yielding $\beta_E \alpha_I'$. Let $\vec{\alpha}$ and $\vec{\beta}$ be the assembly sequences of $\alpha$ and $\beta$, respectively.
Intuitively, the algorithm performs a lossy merge of $\vec{\alpha}$ and $\vec{\beta}$, ignoring assembly sequence steps of $\vec{\alpha}$ (respectively, $\vec{\beta}$) that place tiles in $\alpha_E'$ ($\beta_I$), where $\alpha'_E=\alpha_E+\vec{c}$.
Without loss of generality, and for notational simplicity, let~$w^\prime = T^h_{\vec{c}}$, where $\vec{c}$ is the zero vector. Notice that since $\vec{c}=\vec{0}$, $\alpha_I'=\alpha_I$.  Let~$M$ be the sequence of steps in the window movie $M_{\vec{\alpha},w}$ and note that $M_{\vec{\alpha},w} = M_{\vec{\beta},w'}$.
The algorithm in Figure \ref{fig:algo-seq} describes how to produce a new valid assembly sequence $\vec{\gamma}$.

\begin{figure}[t]
\begin{algorithm}[H]
Initialize $i$, $j$, $k = 0$ and $\vec{\gamma}$ to be empty

\While{$i<|\vec{\alpha}|$  { \bf or }  $j<|\vec{\beta}|$}{  
  \If{$Pos(M[k]) \in \dom{\alpha_I}$}{ 
    \While{$i < |\vec{\alpha}|$ and $Pos(\vec{\alpha}[i])\neq Pos(M[k])$}{
      \If{$Pos(\vec{\alpha}[i]) \in \dom{\alpha_I}$}{$\vec{\gamma} = \vec{\gamma} + \vec{\alpha}[i]$}
      $i = i + 1$
    }
    \If{$i<|\vec{\alpha}|$}{
      $\vec{\gamma} = \vec{\gamma} + \vec{\alpha}[i]$

      $i = i + 1$
    }
  }
  \ElseIf{$Pos(M[k]) \in \dom{\beta_E}$}{
    \While{$j<|\vec{\beta}|$ and $Pos(\vec{\beta}[j])\neq Pos(M[k])$}{
      \If{$Pos(\vec{\beta}[j]) \in \dom{\beta_E}$}{
        $\vec{\gamma} = \vec{\gamma} + \vec{\beta}[j]$}

        $j = j + 1$

    }
    \If{$j<|\vec{\beta}|$}{
      $\vec{\gamma} = \vec{\gamma} + \vec{\beta}[j]$

      $j = j + 1$
    }
  }
  \ElseIf{$k\geq |M|$}{
    \If{$i<|\vec{\alpha}|$}{
      $\vec{\gamma} = \vec{\gamma} + \vec{\alpha}[i]$

      $i = i + 1$
    }
    \If{$j<|\vec{\beta}|$}{
      $\vec{\gamma} = \vec{\gamma} + \vec{\beta}[j]$

      $j = j + 1$
    }
  }

  $k = k + 1$
}
\Return $\vec{\gamma}$

\end{algorithm}
\caption{The algorithm to produce a valid assembly sequence $\vec{\gamma}$.}
\label{fig:algo-seq}
\end{figure}

If we assume that  the assembly sequence  $\vec{\gamma}$ ultimately produced by the algorithm is valid, then the result of $\vec{\gamma}$  is indeed  $\beta_E\alpha_I$, since for every tile in~$\alpha_I$ and $\beta_E$, the algorithm adds a step to the sequence $\vec{\gamma}$ involving the addition of this tile to the assembly. However, we need to prove that  the assembly sequence $\vec{\gamma}$  is valid,  it may be the case that either: 1. there is insufficient bond strength between the tile to be placed and the existing neighboring tiles, or 2. a tile is already present at this location.
Case 2 is a non-issue, as locations in $\alpha_I$ and $\beta_I$ only have tiles from $\alpha_I$ placed in them, and locations in $\alpha_E$ and $\beta_E$ only have tiles from $\beta_E$ placed in them, and the tile locations contained in the finite portion of the grid graph bounded by the rectangular window $w^\prime$ contain the tile locations bounded by $w$.
Case 1 is more difficult, and is where the remainder of the proof is spent.

Formally, we claim the following: at each step of the algorithm, the current version of $\vec{\gamma}$ at this step is a valid assembly sequence whose result is a producible subassembly of $\beta_E\alpha_I$.
Note that the outer loop of the algorithm iterates through all steps of $\vec{\alpha}$ and $\vec{\beta}$, such that at any point of adding $\vec{\alpha}[i]$ (or $\vec{\beta}[j]$) to $\vec{\gamma}$, all steps of the window movie occurring before $\vec{\alpha}[i]$ ($\vec{\beta}[j]$) in $\vec{\alpha}$ ($\vec{\beta}$) have occurred.
Similarly, all tiles in $\alpha_I$ (or $\beta_E$) added to $\alpha$ ($\beta$) before step $i$ ($j$) in the assembly sequence have occurred.

So if the $Tile\left(\vec{\alpha}[i]\right)$ that is added to the subassembly of $\alpha$ produced after $i-1$ steps can bind at a location in $\alpha_I$ to form a $\tau$-stable assembly, the same tile added to the producible assembly of  $\vec{\gamma}$  must also bond to the same location in  $\vec{\gamma}$, as the neighboring glues consist of (i) an identical set of glues from tiles in the subassembly of $\alpha_I$ and (ii) glues on the side of the window movie containing~$\alpha_E$.  Similarly, the tiles of $\beta_E$ must also be able to bind.

So the assembly sequence of $\vec{\gamma}$ is valid, i.e.\ every addition to $\vec{\gamma}$ adds a tile to the assembly to form a new producible assembly.
Since we have a valid assembly sequence, as argued above, the finished producible assembly is~$\beta_E\alpha_I$.

\end{proof}

Notice that Lemma~\ref{lem:windowmovie} gives a producible assembly where the subassembly $\alpha_I$, the subassembly contained in the smaller rectangular window, replaces the growth of the subassembly $\beta_I$, the subassembly contained in the larger rectangular window. It is interesting to note that it is possible to replace $\alpha_I$ by $\beta_I$ in a special case.

\begin{corollary}
\label{cor:windowmoviecrash}
For $\alpha_E$ and $\beta_I'$ as in Lemma~\ref{lem:windowmovie},
if $\dom(\beta'_I)\cap \dom(\alpha_E) = \emptyset$, then $\alpha_E\beta'_I = \alpha_E \cup \beta'_I$, where $\beta'_I = \beta_I - \vec{c}$, is also producible.
\end{corollary}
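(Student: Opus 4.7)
The plan is to adapt the algorithm given in the proof of Lemma~\ref{lem:windowmovie} with the roles of the two sides of the window interchanged: rather than splicing $\beta_E$ with $\alpha'_I$, we splice $\alpha_E$ with $\beta'_I$. As in the lemma's proof, I would first reduce to the simplifying case $\vec{c} = \vec{0}$, so that $\beta'_I = \beta_I$ and the window $w'$ is a vertically extended version of $w$ sharing the same top edge.

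The key observation is that the hypothesis $(v+\vec{c}, g) \in M_{\vec{\alpha},w} \iff (v,g) \in M_{\vec{\beta},w'}$ is symmetric between $\alpha$ and $\beta$. Thus the sequence of glues that $\alpha_E$ contributes on the inward-facing side of $w$ agrees in label, strength, and position with the glues that $\beta_E$ contributes on the inward-facing side of $w'$. Consequently, every tile addition originally made by $\vec{\beta}$ inside $\beta_I$ that relied on an incoming glue from $\beta_E$ can equivalently be made with that glue supplied by $\alpha_E$ instead, and symmetrically for tiles of $\alpha_E$ that relied on glues from $\alpha_I$.

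I would then construct a merged assembly sequence $\vec{\gamma}$ using the mirror of the algorithm in Figure~\ref{fig:algo-seq}: keep only the steps of $\vec{\alpha}$ that place tiles in $\alpha_E$ and only the steps of $\vec{\beta}$ that place tiles in $\beta_I$, interleaving them according to the shared schedule imposed by the window movie. Whenever the next window-movie entry lies at a position in $\dom \alpha_E$, I advance the retained steps of $\vec{\alpha}$ up to and including the one that posts that glue; whenever it lies in $\dom \beta'_I$, I advance the retained steps of $\vec{\beta}$. Every tile added from $\alpha_E$ then attaches in $\vec{\gamma}$ to exactly the same multiset of labels and strengths it attached to in $\vec{\alpha}$, since its $\alpha_E$-side neighbors are already present by scheduling and its cross-window neighbors' glues are reproduced by $\beta'_I$ via the bijection; the symmetric statement holds for additions from $\beta'_I$. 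Hence every addition is $\tau$-stable.

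The main obstacle, and the reason the disjointness hypothesis is needed, is that $w'$ may be strictly taller than $w$, so $\beta'_I$ can contain tiles at positions lying outside the rectangle bounded by $w$ (specifically, below it). Such positions could in principle coincide with positions already occupied in $\dom \alpha_E$, which would turn an otherwise valid step of $\vec{\gamma}$ into an attempt to place a tile on top of an existing one. The condition $\dom(\beta'_I) \cap \dom(\alpha_E) = \emptyset$ rules out exactly this failure mode, so $\vec{\gamma}$ is a well-defined, valid assembly sequence, and its result is the producible assembly $\alpha_E \cup \beta'_I$.
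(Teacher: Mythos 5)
Your proposal is correct and follows essentially the same route as the paper: reduce to $\vec{c}=\vec{0}$, run the merge algorithm from Lemma~\ref{lem:windowmovie} with the roles of the two sides swapped (keep $\alpha_E$-steps from $\vec{\alpha}$ and $\beta_I$-steps from $\vec{\beta}$, scheduled by the shared window movie), argue $\tau$-stability of each addition exactly as in the lemma since the glues crossing the window agree, and invoke the disjointness hypothesis to rule out the only new failure mode---a collision where $\beta'_I$, being taller than $\alpha_I$, would place a tile at a position already occupied in $\alpha_E$. Your explanation of why the disjointness hypothesis is genuinely needed here (and was automatic in the lemma's direction) is a nice elaboration, but the underlying argument is the paper's.
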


\begin{proof}
Under the assumption that $\dom(\beta'_I)\cap \dom(\alpha_E) = \emptyset$, we may mimic the proof of Lemma~\ref{lem:windowmovie} to show that $\alpha_E\beta'_I$ is a valid producible assembly. Once again, without loss of generality, we make the assumption that $\vec{c} = \vec{0}$ and so we let $M = M_{\alpha,w} = M_{\beta,w'}$, and note that $\beta'_I = \beta_I$. In the algorithm given in Figure~\ref{fig:algo-seq}, substitute $\alpha_E$ for $\beta_E$ and $\beta_I$ for $\alpha_I$. The resulting algorithm yields an assembly sequence  $\vec{\gamma}$. If we assume that $\vec{\gamma}$ is valid, then the result of $\vec{\gamma}$ is indeed $\beta_E\alpha_I$, since for every tile in~$\alpha_E$ and $\beta_I$, the algorithm adds a step to the sequence $\vec{\gamma}$ involving the addition of this tile to the assembly. However, we need to prove that the assembly sequence $\vec{\gamma}$ is valid, it may be the case that either: 1. there is insufficient bond strength between the tile to be placed and the existing neighboring tiles, or 2. a tile is already present at this location. We can rule out case 2 from the assumption that $\dom(\beta_I)\cap \dom(\alpha_E) = \emptyset$, since the algorithm only adds tiles from $\beta_I$ or $\alpha_E$. Case 1 can be ruled out just as in the analogous case in the proof of Lemma~\ref{lem:windowmovie}.
So the assembly sequence of $\vec{\gamma}$ is valid, i.e.\ every addition to $\vec{\gamma}$ adds a tile to the assembly to form a new producible assembly.
Since we have a valid assembly sequence, as argued above, the finished producible assembly is~$\alpha_E\beta_I$.
\end{proof}

} 

\vspace{-2pt}
\begin{theorem}\label{thm:aTAM_cannot_sim_DaTAM}
There exists a single directed DaTAM system $\mathcal{D} = (T_\mathcal{D},S,D, \sigma, \tau)$ such that $\mathcal{D}$ cannot be simulated by any temperature $2$ aTAM system.
\end{theorem}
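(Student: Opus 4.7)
The plan is to take $\mathcal{D}$ to be (a slight adaptation of) the directed DTAS of Theorem~\ref{thm:DaTAM-shape} — the planter together with an infinite family of vertical counters of strictly increasing even heights, each topped by a duple \texttt{finger} that descends to within exactly one cell above the planter. Assume for contradiction that some temperature-$2$ aTAM system $\mathcal{U} = (U,\sigma_U,2)$ simulates $\mathcal{D}$ at some scale factor $m$ under some representation $R$.

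First I would observe that, because $\mathcal{D}$ is directed and produces duple fingers of unbounded length, and because the definition of simulation demands equivalent productions, $\mathcal{U}$ must produce macrotile chains of unbounded height that simulate these fingers. Write $F_1, F_2, \ldots$ for the $\mathcal{U}$-subassemblies simulating the successive fingers of $\mathcal{D}$. Each $F_n$ occupies an $m$-wide strip of macrotile columns (plus the permitted fuzz) and has height growing without bound in $n$.

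Next I would apply the closed rectangular window movie lemma (Lemma~\ref{lem:windowmovie}). For each $n$, fix a horizontal closed rectangular window $w_n$ that slices $F_n$ cleanly, with $w_n$ placed sufficiently high above the planter that its interior lies strictly outside the fuzz zone of the planter macrotiles. Since $|U|$ is finite, and the window has bounded width, there are only finitely many distinct window movies $M_{\vec{\alpha}_n, w_n}$ up to translation. By pigeonhole there exist $i < j$ and a vertical translation $\vec{c}$ with $\lVert \vec{c}\rVert$ arbitrarily large such that $M_{\vec{\alpha}_i, w_i}$ and $M_{\vec{\alpha}_j, w_j}$ agree under the translation. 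Lemma~\ref{lem:windowmovie} then splices these two assembly sequences into a producible assembly $\gamma \in \prodasm{\mathcal{U}}$ in which the interior portion of $F_j$ is relocated to sit in the position that should be occupied by the (shorter) finger $F_i$.

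Finally I would argue that $R^*(\gamma)$ is not a producible assembly of $\mathcal{D}$, contradicting $\mathcal{U} \Leftrightarrow_R \mathcal{D}$. The key point is that the spliced finger in $\gamma$ has length that does not match the height of any counter produced by $\mathcal{D}$: either it overshoots, so non-empty macrotiles lie at positions whose images under $R^*$ overlap the planter (no $\mathcal{D}$-producible assembly has a finger touching the planter), or it undershoots, so non-empty macrotiles lie at positions whose images under $R^*$ are strictly above every finger tip of $\mathcal{D}$ (no $\mathcal{D}$-producible assembly has tiles dangling in empty space above a short finger).

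The main obstacle will be handling the relaxed simulation rules introduced for duples — fuzz allowed out to Manhattan distance $2$, and empty macrotile locations permitted to represent half-duples. To neutralize this slack I would (i) choose the windows $w_n$ vertically far enough from the planter that the fuzz zones of planter macrotiles and finger macrotiles cannot overlap in the spliced assembly, and (ii) ensure that the counter heights chosen by $\mathcal{D}$'s \texttt{planter} are sparse enough (e.g.\ strictly more than $2m$ apart in consecutive differences) that any vertical translation $\vec{c}$ arising from pigeonhole yields a finger length that is not within the fuzz tolerance of any counter height used by $\mathcal{D}$. With these two choices, the spliced assembly genuinely contains non-empty macrotile positions that have no valid preimage in any producible $\mathcal{D}$-assembly, and the contradiction closes.
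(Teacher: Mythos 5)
Your overall strategy — window movie lemma applied to the finger simulations, followed by a splice that yields a finger of the wrong length — is indeed the engine of the paper's proof. But there is a genuine gap in the case analysis, and it is exactly the case the paper introduces new machinery to handle. You argue that the spliced assembly either overshoots (finger macrotiles overlapping the planter, so $R^*$ maps to a never-producible shape) or undershoots. But when the elongated finger would overlap the planter, the splicing algorithm underlying Lemma~\ref{lem:windowmovie}/Corollary~\ref{cor:windowmoviecrash} does not actually place overlapping tiles: it halts at the first \emph{collision}, i.e., when a tile from one side of the window cannot be placed because a tile from the other side already occupies that location. The resulting producible assembly therefore has the finger stopping somewhere near the planter rather than overlapping it, and its image under $R^*$ can perfectly well be a valid \emph{partially grown} configuration of $\mathcal{D}$ (a finger that simply hasn't finished descending). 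So neither productions nor fuzz are violated, and your contradiction does not close. Your undershoot branch has the same problem in reverse: a too-short finger maps to a partial finger, which is also producible in $\mathcal{D}$.

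This is precisely why the paper modifies the system from Section~\ref{sec:shapeDaTAM} before attempting the impossibility proof: it adds a horizontal $8$-tile line $S_1,\ldots,S_8$ growing leftward from the base of each vertical counter, with $S_6$ landing in the one-cell gap between the finger tip and the planter and $S_7,S_8$ landing past it. In the collision case, the crashed finger plus the surrounding planter and counter macrotiles separate the plane into two components with $S_1,\ldots,S_5$ on one side and $S_8$ on the other, so the simulator cannot enforce that $S_8$'s macrotile forms only after $S_1,\ldots,S_5$'s — an equivalent-\emph{dynamics} violation that works even when equivalent productions do not. Without some such device for forcing an ordering across the gap, the collision case remains open. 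A smaller issue: you run the pigeonhole across windows around \emph{different} fingers, but those windows have different ambient contexts (different columns of the planter and different counters passing through the boundary), so it is not clear the window movies can match up to a pure vertical translation; the paper instead places both windows on the \emph{same} finger and uses the resize-and-translate operator $T^h_{\vec{c}}$ built into Lemma~\ref{lem:windowmovie}, which cleanly avoids this.
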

\vspace{-2pt}
\begin{wrapfigure}{r}{0.5\textwidth}
\vspace{-20pt}
\begin{center}
\includegraphics[width=2in]{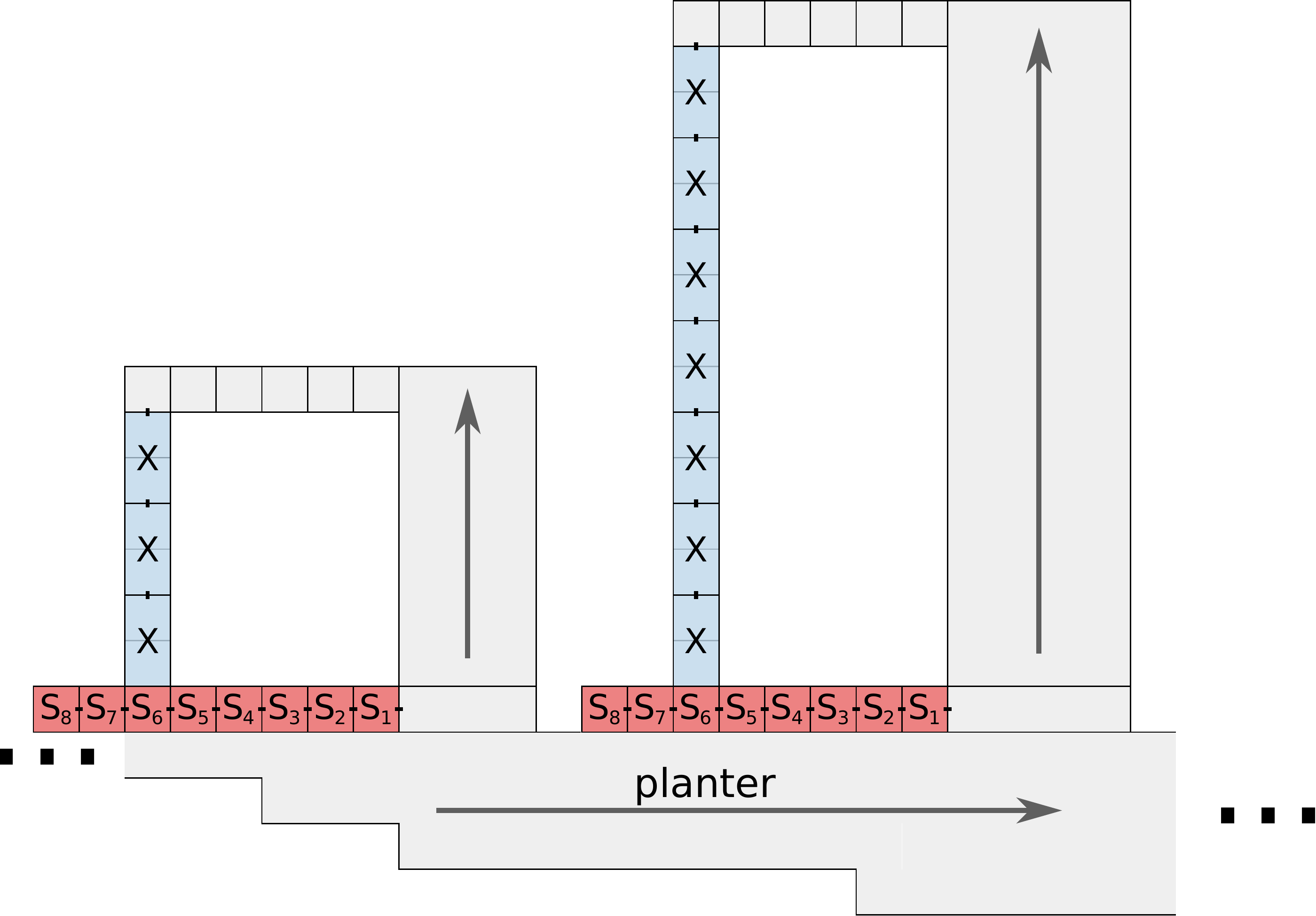}
\caption{A portion of a producible assembly of the temperature $1$ DaTAM system which cannot be simulated in the aTAM at $\tau=2$.}
\label{fig:trentMod}
\end{center}
\vspace{-30pt}
\end{wrapfigure}

To prove Theorem~\ref{thm:aTAM_cannot_sim_DaTAM}, we prove that there is no TAS that can simulate the DTAS, $\mathcal{D}$, described as follows. First, the system is identical to the DTAS described in Section~\ref{sec:shapeDaTAM} with one exception. Just as with the DTAS in Section~\ref{sec:shapeDaTAM}, $\mathcal{D}$ grows a \texttt{planter}, vertical counters and \texttt{fingers}. In addition to these subassemblies, $\mathcal{D}$ grows an $8$ tile long, single tile wide line, $l$, of tiles from the base of each vertical counter. As seen in Figure~\ref{fig:trentMod}, $l$, consisting of tiles $S_1$, $S_2$, $\dots$, $S_8$, grows to the left of each vertical counter and extends past the single tile wide gap between a \texttt{finger} and the \texttt{planter}. The intuitive idea behind the proof of Theorem~\ref{thm:aTAM_cannot_sim_DaTAM} is that for any aTAM system, $\mathcal{T}$, that attempts to simulate $\mathcal{D}$, it must be able to simulate the growth of \texttt{fingers}. Therefore, for any $n>0$, $\mathcal{T}$ must be able to grow a subassembly that simulates a \texttt{finger} consisting of $n$ duples. This subassembly must have a constant width based on the block replacement scheme used in the simulation with block size $m$ say, and a length of roughly $2nm$. We show that any such system $\mathcal{T}$ capable of such growth must also grow a simulated \texttt{finger} that crashes into the simulation of the \texttt{planter}. To show this, we use a window movie lemma (similar to Lemma 3.1 in~\cite{IUNeedsCoop}). The lemma shown here holds for closed rectangular windows. (For details and a formal statement of the window movie lemma used here, see Section~\ref{sec:window-movie-details}.)
Then, since the simulated \texttt{planter}, \texttt{finger}, and vertical counter separate the infinite grid-graph into two disjoint sets, there is no way to ensure that a subassembly representing the tile labeled $S_8$ of $\mathcal{T}$ grows \emph{only} after the subassemblies representing the tiles labeled $S_1$, $S_2$, $\dots$, $S_5$ grow. In other words, there is no way to ensure that $\mathcal{T}$ and $\mathcal{D}$ have equivalent dynamics, and therefore $\mathcal{T}$ does not simulate $\mathcal{D}$.
For the full proof Theorem~\ref{thm:aTAM_cannot_sim_DaTAM}, please see Section~\ref{sec:aTAM_cannot_sim_DaTAM-proof}.

\ifabstract
\later{
\section{Proof of Theorem~\ref{thm:aTAM_cannot_sim_DaTAM}}\label{sec:aTAM_cannot_sim_DaTAM-proof}
\begin{proof}

We first describe $\mathcal{D}$. As with the DaTAM system given in Section~\ref{sec:shapeDaTAM}, a horizontal zig-zag counter, called the $\mathtt{planter}$, seeds the growth of vertical counters with the property that each successive vertical counter grows taller than the previous and each counter grows to an even height. See Figure~\ref{fig:trentMod_append} for a depiction of the $\mathtt{planter}$ and vertical counters. Once a vertical counter completes growth, a single tile wide path of length $6$ grows horizontally to the left of the vertical counter. The leftmost tile of this path of tiles exposes a glue on its south edge that allows duples (labeled $X$ in Figure~\ref{fig:trentMod_append}) to attach. These duples are allowed to bind until the duple nearest to the $\mathtt{planter}$ is a single tile location away from a tile in the $\mathtt{planter}$ subassembly, at which point there is not enough space to add another duple. We call such a column of duples a $\mathtt{finger}$.

\begin{figure}[htp]
\centering
\includegraphics[width=3in]{images/trentMod}
\caption{A portion of a producible assembly of the temperature $1$ DaTAM system which cannot be simulated in the aTAM at $\tau=2$.}
\label{fig:trentMod_append}
\end{figure}

At any time before, during, or
after the formation of such a path of duples, a single tile wide path of tiles (labeled $S_i$ for $1\leq i\leq 8$ in Figure~\ref{fig:trentMod_append}) of length $8$ grows from the first row of
the vertical counter. It is important to note that once this path of tiles has completely assembled, the third to leftmost tile (labeled $S_6$ in Figure~\ref{fig:trentMod_append}) is placed at the tile location between the $\mathtt{planter}$ and the bottom duple on a $\mathtt{finger}$.

Now, for the sake of contradiction, suppose that $\mathcal{T} = (T,\sigma',2)$ is a temperature $2$ aTAM system that simulates $\mathcal{D}$, and let $R: \mathcal{A}^{T} \rightarrow \mathcal{A}^{T_\mathcal{D}}$ be the representation function. Consider the assembly sequence in $\mathcal{D}$ and a \texttt{finger} where each duple with label $X$ that can be placed is placed prior to the attachment of any $S_i$ labeled tiles. Notice that for any length $n>0$ we can find a \texttt{finger}, $\alpha_n$, that consists of more than $n$ duples. Let $\alpha_n^\prime$ be a subassembly of $\mathcal{T}$ that represents $\alpha_n$ under $R$, that is, let $\alpha_n^\prime$ be a subassembly of $\mathcal{T}$ that maps to $\alpha_n$ under $R|_{\alpha_n^\prime}$. Notice that $m$-plus supertiles of $\alpha_n^\prime$ must be able to form in $\mathcal{T}$ prior to the formation of an $m$-plus supertile that represents $S_1$, otherwise $\mathcal{T}$ and $\mathcal{D}$ do not have equivalent dynamics. Also, let $\rho_n$ be the subassembly of $\mathcal{D}$ consisting of all of the tiles of the $\texttt{planter}$, and let $\rho_n^\prime$ be a subassembly of $\mathcal{T}$ that represents $\rho_n$ under $R$. Finally, let $\psi_n$ be the subassembly of $\mathcal{D}$ consisting of all of the tiles of the vertical counter, and let $\psi_n'$ be a subassembly of $\mathcal{T}$ that represents $\psi_n$ under $R$.

Let $h(n)$ denote the height of $\psi'_n$, and let $\beta_n$ be an assembly of $\mathcal{T}$ such that $\alpha_n'$, $\rho_n^\prime$, and $\psi_n'$ are subassemblies of $\beta_n$. Moreover, let $\vec{\beta_n}$ be an assembly sequence that results in $\beta_n$.
We will consider closed rectangular windows that surround a \texttt{finger} simulation in $\mathcal{T}$. To ensure that we can find such windows, first let $\vec{\beta}_n'$ be the maximal subsequence of $\vec{\beta}_n$ with result $\beta_n'$ such that $\vec{\beta}_n'$ is obtained from $\vec{\beta}_n$ by ``rewinding'' $\vec{\beta}_n$ just to the point where there is at least a two tile wide horizontal gap between the simulated \texttt{finger} and the simulated \texttt{planter}. Let $\alpha_n''$ denote the largest subassembly of $\beta_n'$ such that $\dom(\alpha_n'') \subseteq \dom(\alpha_n')$. $\alpha_n''$ can be thought of as the ``rewound'' simulated \texttt{finger} $\alpha_n'$ in $\beta_n'$.
We will use $\vec{\beta}_n'$ for both assembly sequences in Corollary~\ref{cor:windowmoviecrash}; our choice of windows used in the corollary will differ. Next we will be more specific about our choice of $n$.

For $\vec{\beta}_n'$ fixed, note that there are two identical window movies obtained by considering closed rectangular windows, $w$ and $w^\prime$, that cut $\alpha_n''$ horizontally along the top edge of their defining rectangles. See Figure~\ref{fig:scottShapeWindows} for an example of such a $w$ and $w^\prime$. We can also only consider windows obtained from horizontal cuts that are at least a distance of $d$ tile locations apart, where $.9h(n) < d < h(n)$. For reasons that become clear later, we can also choose $n$ such that $.8h(n) > m\log(n) + 4m$, the width of the simulated \texttt{planter} and fuzz.

\begin{figure}[ht]
\centering
	\begin{minipage}[t]{0.3\linewidth}
	\centering
	\includegraphics[width=1.5in]{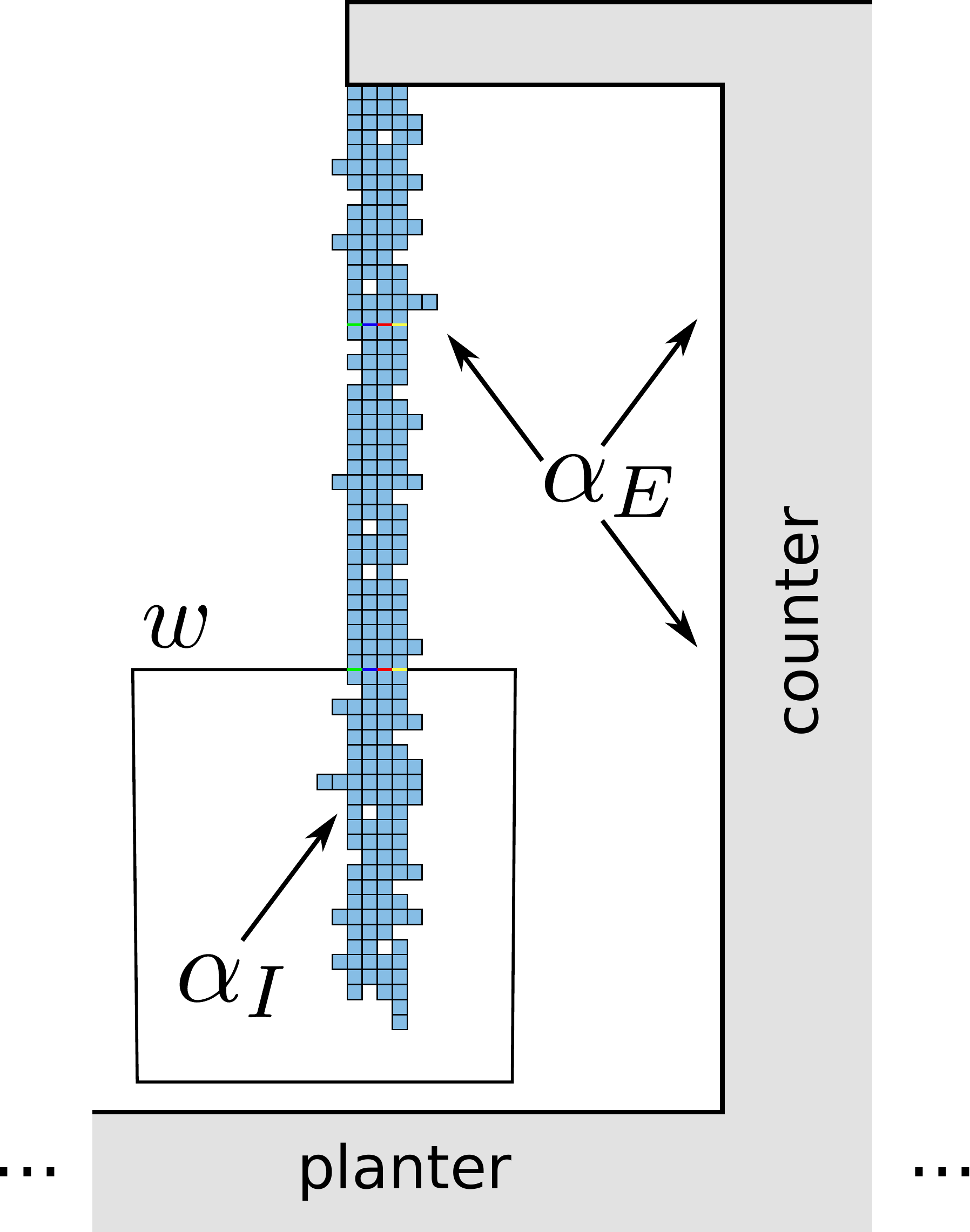}
	\caption*{(a)}
	\label{fig:scottShapeWindow1}
	\end{minipage}
\quad
	\begin{minipage}[t]{0.3\linewidth}
	\centering
	\includegraphics[width=1.5in]{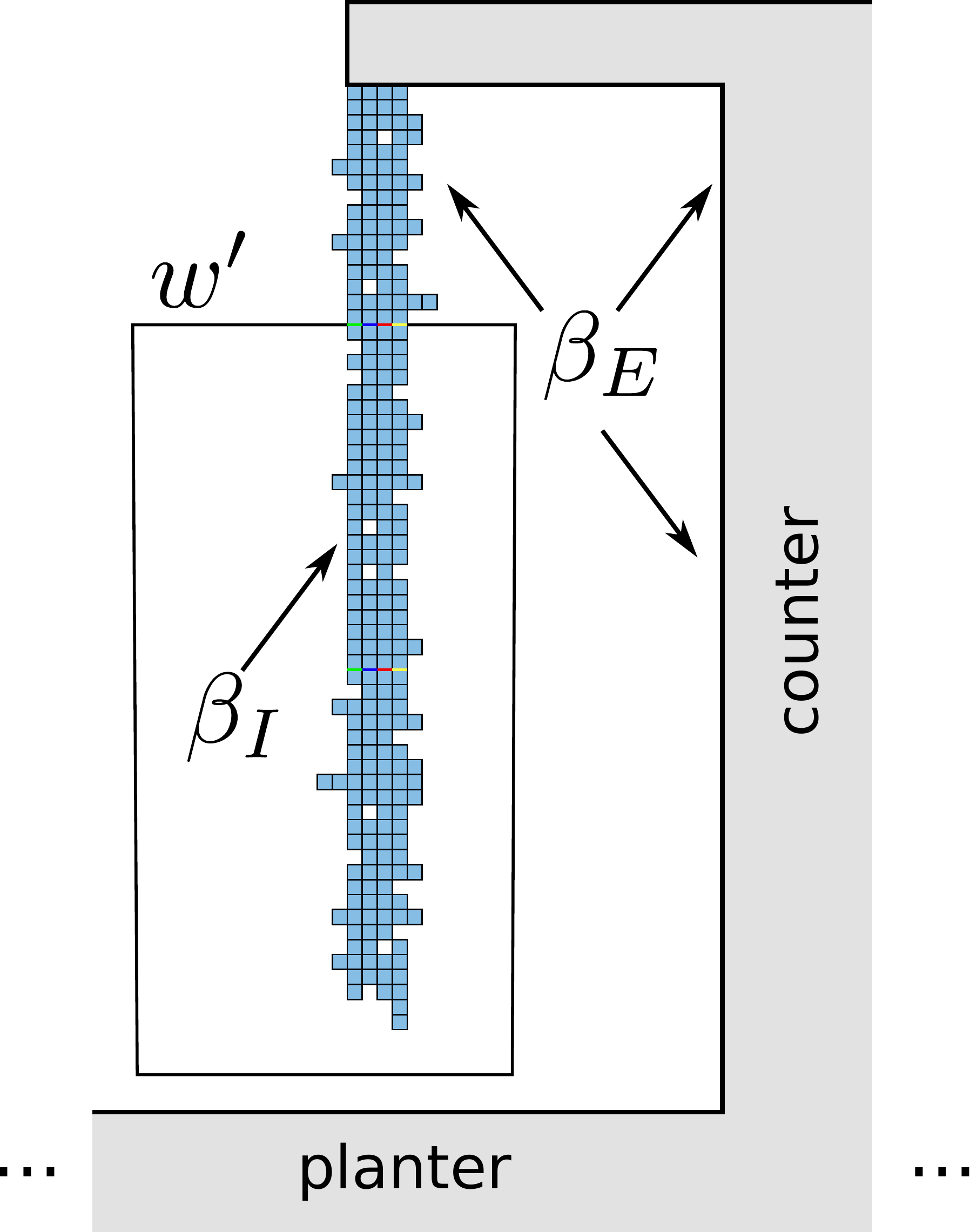}
	\caption*{(b)}
	\label{fig:scottShapeWindow2}
	\end{minipage}
\quad
	\begin{minipage}[t]{0.3\linewidth}
	\centering
	\includegraphics[width=1.5in]{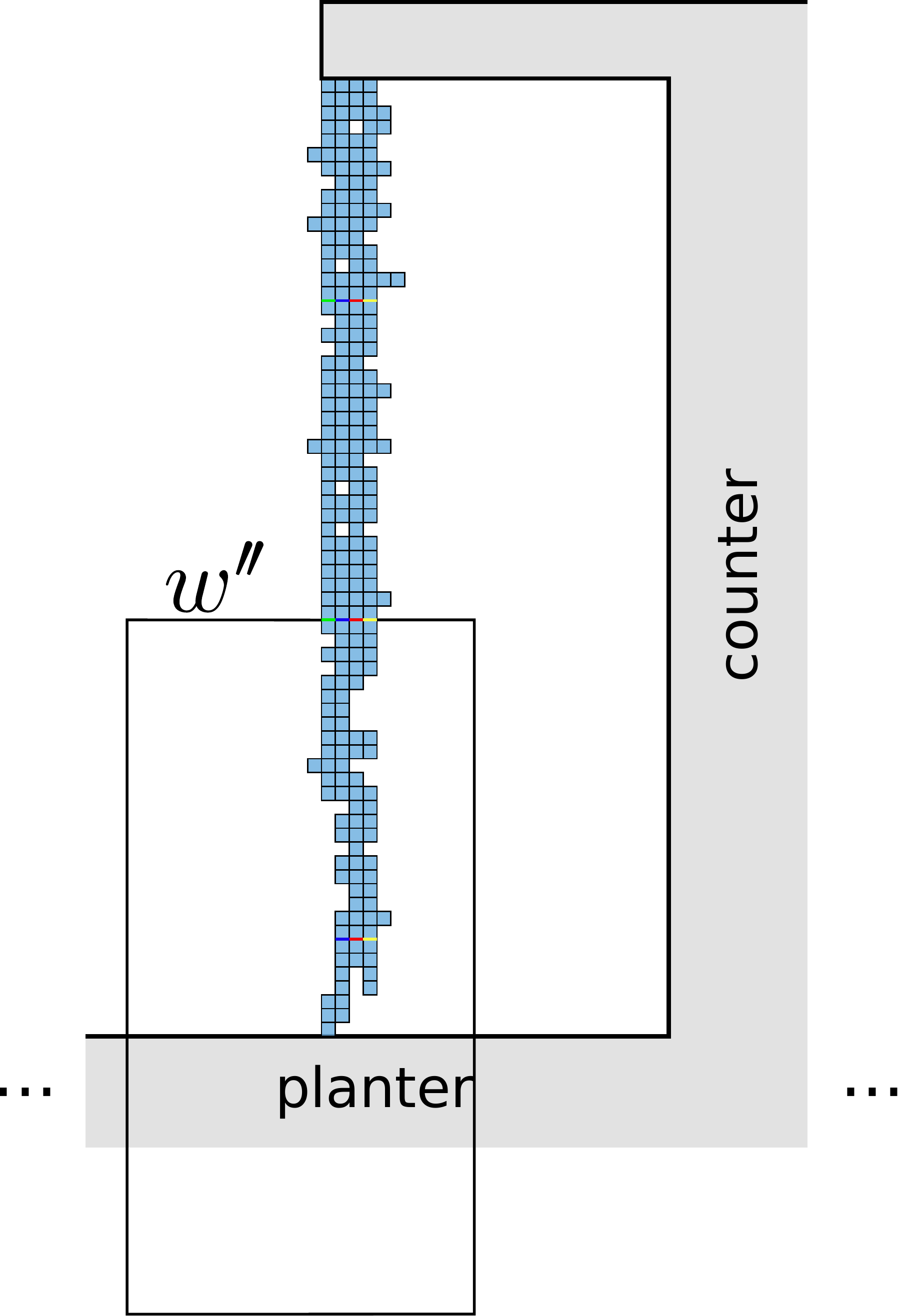}
	\caption*{(c)}
	\label{fig:scottShapeCrash}
	\end{minipage}
	\caption{(a) and (b): An example of an assembly formed by $\mathcal{T}$ simulating $\mathcal{D}$ and the identical window movies defined by the windows $w$ and $w^\prime$. (c): A schematic picture of a valid assembly in $\mathcal{T}$ in the case where $\dom(\alpha_E) \cap \dom(\beta_I') \neq \emptyset$. $w''$ is a vertical shift of $w'$.}
	\label{fig:scottShapeWindows}
\end{figure}

The fact that we can find $w$ and $w^\prime$ follows by the pigeonhole principle. To see this, first note that we can pick $n$ arbitrarily large and therefore, $.1h(n) = h(n) - .9h(n)$ can be made arbitrarily large. Hence there are an arbitrary number of closed rectangular windows that cut $\alpha_n''$ horizontally along the top edge of their defining rectangles such that these cuts are of distance $d$ apart. Then, since $m$ and the number of glues of $T$ (the tile set for $\mathcal{T}$) are constants, and $\alpha_n''$ can only be $5m$ tiles wide, for $n$ sufficiently large, there exist two such closed rectangular windows, $w$ and $w'$, with matching window movies that cut $\alpha_n''$ horizontally along the top edge of their defining rectangles such that these cuts are of distance $d$ apart.

Then, using windows $w$ and $w^\prime$, define $\alpha_E$ to be the tiles outside of $w$ and define $\beta_I$ as the tiles inside of $w'$. Also, for $\vec{c}$ and $h$ such that $w'=T_{\vec{c}}^h(w)$, let $\beta_I' = \beta_I - \vec{c}$. Then, if $\dom(\alpha_E) \cap \dom(\beta_I') = \emptyset$, then by Corollary~\ref{cor:windowmoviecrash}, $\alpha_E\beta_I'$ is a valid producible assembly in $\mathcal{T}$.
Now notice that $\alpha_E\beta_I'$ is similar to $\beta_n'$ except $\alpha_E\beta_I'$ has an ``extended finger'' subassembly $\gamma$ that is at least $2d > 2*.9h(n) = 1.8h(n)$ tiles tall. $\gamma$ contains a tile located $1.8h(n) - h(n) = .8h(n)$ tiles below the base of a vertical counter. Now since $.8h(n) > m\log(n) + 4m$, $\gamma$ also contains a tile at a location farther
than $2m$ tile locations below the \texttt{planter} of the simulated system. This contradicts the fact that \emph{fuzz} (See Section~\ref{sec:tasimdtas} for the definition of fuzz.) is only allowed at a distance at most $2m$ away from an $m$-plus supertile representing a tile in an assembly of $\mathcal{D}$. Therefore, it must be the case that $\dom(\alpha_E) \cap \dom(\beta_I') \neq \emptyset$.

If $\dom(\alpha_E) \cap \dom(\beta_I') \neq \emptyset$, then, we can apply the algorithm used in the proof of Corollary~\ref{cor:windowmoviecrash} up to the point where either some tile of $\alpha_E$ cannot be placed due to the prior placement of a tile of $\beta_I'$ or vice versa. In either case, a valid assembly $\gamma$ is produced such that there is a path of adjacent tile locations in the domain of the assembly which divide the plane into two regions $R_1$ and $R_2$ (i.e. a ``collision'' where a tile of $\alpha_E$ is adjacent to a tile of $\beta_I'$)  such that $m$-plus supertiles that form representations of $S_1$, $S_2$, $\dots$, $S_5$ have domains contained in $R_1$, while the $m$-plus supertile that forms a representation of $S_8$ has a domain contained in $R_2$. Figure~\ref{fig:scottShapeWindows} gives a high-level sketch of this situation (where $R_1$ is the enclosed region to the right of the finger). Now, either an $m$-plus supertile that represents $S_8$ assembles in $\mathcal{T}$ or it does not. If it does not, then $\mathcal{T}$ does not have equivalent production to $\mathcal{D}$. If it does assemble, since the plane is now divided into two disjoint regions, there is no way to ensure that the assembly of $m$-plus supertiles that represent $S_1$, $S_2$, $\dots$, $S_5$ complete before the assembly of an $m$-plus supertile that represents $S_8$. Therefore, an $m$-plus supertile representing $S_8$ could assemble before an $m$-plus supertile representing $S_1$ assembles and hence, $\mathcal{T}$ does not have equivalent dynamics to $\mathcal{D}$. In either case, $\mathcal{T}$ does not simulate $\mathcal{D}$.

\end{proof}

} 
\vspace{-14pt}
\subsection{An aTAM system which cannot be simulated by the DaTAM}\label{sec:DaTAM_cannot_sim_aTAM}
\vspace{-5pt}
In Section~\ref{sec:aTAM_cannot_sim_DaTAM} we showed the aTAM can't simulate every DaTAM system. Here we show the converse; the DaTAM can't simulate all aTAM systems. The particular aTAM system that we show can't be simulated by the DaTAM is the same given in~\cite{IUNeedsCoop} that is used to show that temperature $1$ aTAM systems cannot simulate every temperature $2$ aTAM system. Intuitively, this shows that cooperation, which is possible for temperature $2$ aTAM systems, cannot be simulated using duples when temperature is restricted to $1$. (See Figure~\ref{fig:fingerFlagpole_overview} for the tile set.)

\begin{theorem}\label{thm:DaTAM_cannot_sim_aTAM}
There exists a temperature $2$ aTAM system $\mathcal{T} = (T,\sigma, 2)$ such that $\mathcal{T}$ cannot be simulated by any temperature $1$ DaTAM system.
\end{theorem}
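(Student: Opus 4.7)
The plan is to adapt the window-movie argument from~\cite{IUNeedsCoop} to handle duples, applying the closed rectangular window movie lemma (Lemma~\ref{lem:windowmovie}) developed in the previous subsection. Let $\mathcal{T}$ be the Finger-Flagpole aTAM system from~\cite{IUNeedsCoop}, in which a seed deterministically grows a vertical ``flagpole'' and a horizontal ``finger,'' and a distinguished ``cap'' tile attaches cooperatively to the top of the flagpole and a specific terminal tile of the finger. The cooperative cap placement is essential: without it, the distinctive L-shaped terminal assembly cannot complete. Suppose for contradiction that some temperature-$1$ DaTAM system $\mathcal{D}=(T_{\mathcal{D}},S_{\mathcal{D}},D_{\mathcal{D}},\sigma',1)$ simulates $\mathcal{T}$ at scale factor $m$ under representation function $R$, using $|S_{\mathcal{D}}|+|D_{\mathcal{D}}|$ (duple) tile types.

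First I would inspect the simulation of finger growth. Because $\mathcal{T}$ is directed and grows an arbitrarily long finger in its unique assembly sequence, $\mathcal{D}$ must produce arbitrarily many $m$-macrotile-wide ``finger blocks'' in some order consistent with the dynamics of $\mathcal{T}$. The critical observation is that in a temperature-$1$ DaTAM system every attachment is via a single bond (a duple merely places two pre-glued tiles in one step, bonding via one single-strength glue on each of its two halves), so information in the finger propagates along paths of constant width. Thus any macrotile block representing a finger tile is assembled by growth that crosses its bounding box along bounded-width channels.

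The key step is to apply Lemma~\ref{lem:windowmovie} to two closed rectangular windows $w,w'$ that each cut transversely across the finger at two different positions far from the cap. By pigeonhole, after the finger exceeds a length that is a function only of $m$, $|T_{\mathcal{D}}|$, and the glue set (accounting for the finitely many ways a duple may straddle the window), two such windows must induce identical window movies $M_{\vec{\alpha},w}=M_{\vec{\alpha},w'}$. I would then invoke Lemma~\ref{lem:windowmovie} to splice the assembly sequence: taking the interior $\alpha_I$ bounded by the shorter of the two windows and attaching it to the exterior $\beta_E$ relative to the longer window produces a valid producible assembly of $\mathcal{D}$ in which the simulated finger is shortened (or extended, by swapping roles) while the flagpole and cap subassemblies are untouched. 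Mapped through $R^*$, this yields a producible assembly of $\mathcal{T}$ in which the cap sits at the wrong relative offset from the finger tip; no such assembly is producible in $\mathcal{T}$, contradicting equivalent productions.

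The main obstacle will be handling duples that straddle the window edges, since a duple crossing $w$ contributes glue information on both sides and must be reproduced identically at $w'$ for the splice to remain valid. I would resolve this by having the definition of window movie include the duple-orientation at each crossing (which is already implicit in the position-and-glue pair $(v_i,g_i)$), so that matching window movies also match the pattern of duple crossings; the total number of distinct window movies across an $m$-wide transverse slice is still bounded by a constant depending only on $m$ and the tile set, so the pigeonhole argument goes through once the finger is sufficiently long. A secondary care point is verifying that the spliced assembly still maps cleanly under $R^*$ (no illegal fuzz is introduced), which follows because the splice only alters the interior of finger blocks whose representation is already determined, and the windows are chosen to lie strictly inside the simulated finger.
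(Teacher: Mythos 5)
Your overall strategy is the same as the paper's: take the finger/flagpole system of~\cite{IUNeedsCoop}, pigeonhole over window movies along an arm to find two matching cuts, then splice to change the arm's length and force an illegal configuration in the simulator. You also correctly anticipate the duple-straddling issue and that the window-movie definition must account for it, which matches what the paper does in Section~\ref{sec:DaTAM_cannot_sim_aTAM_append}.

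However, there are two concrete problems with your execution. First, you invoke Lemma~\ref{lem:windowmovie}, the \emph{closed rectangular} window movie lemma, but that lemma is tailored to the geometry of the previous theorem: it resizes a box's height while keeping the top edge fixed, which is the right tool when a vertical finger dangles into a region bounded on three sides by the planter and counter. For the free-standing arms of the finger/flagpole system, the paper instead uses the standard (non-closed) window movie lemma, Lemma~\ref{lem:windowmovie2}, together with Corollary~\ref{cor:windowmovie2} on bond-forming submovies. Second, and more substantively, your splice picture --- ``the simulated finger is shortened \ldots while the flagpole and cap subassemblies are untouched'' --- is not what either lemma actually produces. With a closed rectangular window around a segment of the arm, keeping the cap fixed while shrinking the segment inside disconnects the cap from the seed, so the result is not an assembly; and if the cap lies inside the larger window it is simply excised. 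With the standard window movie lemma (what the paper uses), the cut separates the seed side from the far side of one arm, and the splice \emph{translates the far side} --- including the keystone, flagpole, and flag macrotile blocks --- toward the seed, while the other arm stays put. The contradiction then comes from the resulting mismatch between the two arms. The paper rewinds to the assembly $\beta^*$ just before the first tile $t$ is placed outside the keystone block, splices, and then observes that $t$ can still attach but now lands at diagonal fuzz or fuzz more than one macrotile away from any represented tile, violating the cleanliness condition. Your alternative route --- arguing that the keystone block still resolves under $R$ and hence $R^*$ of the spliced assembly is not producible in $\mathcal{T}$, violating equivalent productions --- can be made to work, but you should make that the explicit mechanism and drop the claim that the cap is ``untouched''; it is precisely the translation of the cap relative to the unspliced arm that generates the contradiction. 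Also note that $\mathcal{T}$ is not deterministic: the arm lengths are nondeterministic and the keystone binds cooperatively only when the two fingers happen to meet, which is exactly what lets the splice produce an arm-length mismatch that $\mathcal{T}$ can never repair.
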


Here we give a brief overview of the TAS, $\mathcal{T}$, that we show cannot be simulated by any DTAS and provide a sketch of the proof. Please see Section~\ref{sec:DaTAM_cannot_sim_aTAM_append} for the full details.

\begin{figure}[htb]
\begin{center}
\vspace{-20pt}
\includegraphics[width=3.7in]{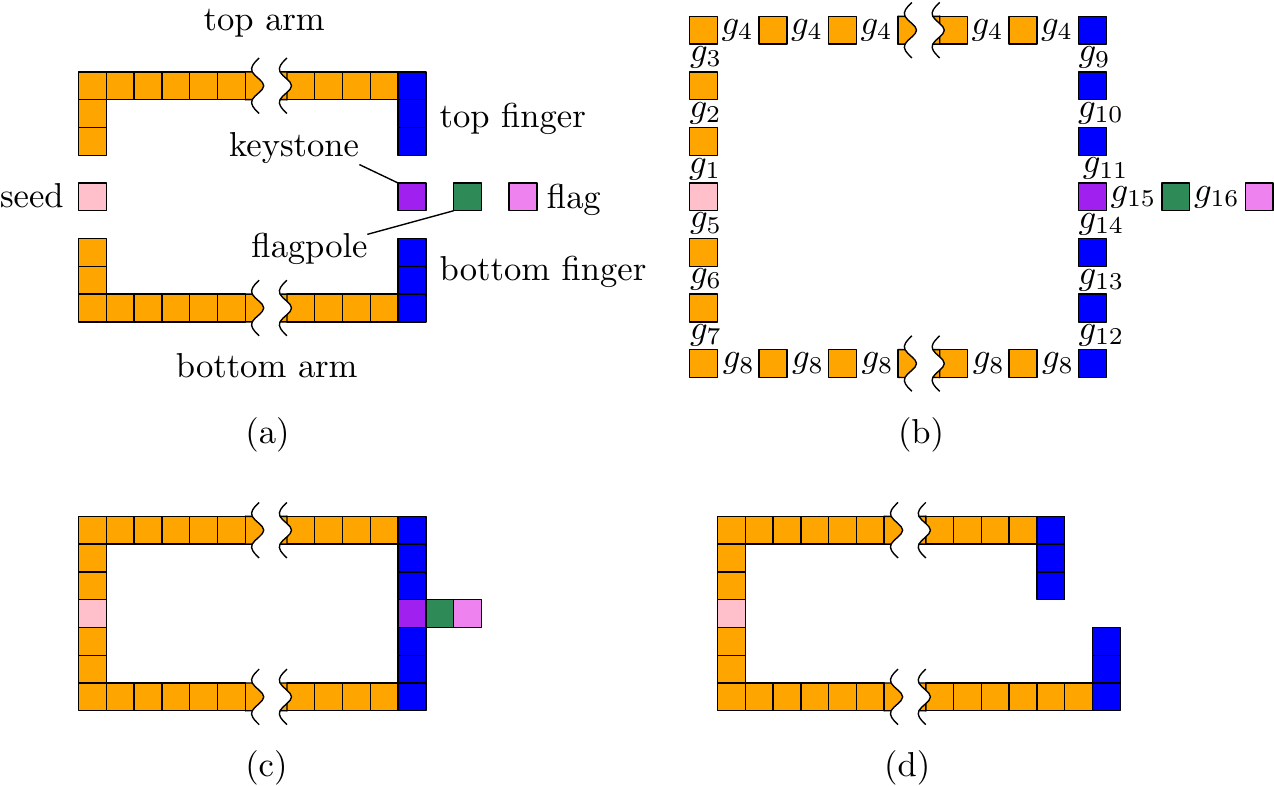}
\caption{(Figure taken from \cite{IUNeedsCoop}) (a) An overview of the tile assembly system $\mathcal{T} = (T,\sigma,2)$.~$\mathcal{T}$ runs at temperature 2 and its tile set $T$ consists of 18 tiles. (b) The glues used in the tileset $T$. Glues $g_{11}$ and $g_{14}$ are strength 1, all other glues are strength~2.  Thus the keystone tile binds with two ``cooperative'' strength~1 glues. Growth begins from the pink seed tile $\sigma$: the top and bottom arms are one tile wide and grow to arbitrary, nondeterministically chosen, lengths. Two blue figures grow as shown. (c) If the fingers happen to meet then the keystone, flagpole and flag tiles are placed, (d) if the fingers do not meet then growth terminates at the finger ``tips''.}
\label{fig:fingerFlagpole_overview}
\vspace{-30pt}
\end{center}
\end{figure}

See Figure~\ref{fig:fingerFlagpole_overview} for an overview of the TAS $\mathcal{T}$. The proof that there is no DTAS that simulates $\mathcal{T}$ is briefly described as follows. For any DTAS, $\mathcal{D}$, that attempts to simulate $\mathcal{T}$, it is shown that $\mathcal{D}$ is capable of an invalid assembly sequence. Intuitively, the idea is that when an arm (the bottom arm say) is sufficiently long, an assembly sequence in $\mathcal{D}$ of a subassembly $\alpha$ that represents this arm must contain repetition. Using a window movie lemma similar to Lemma 3.3 in~\cite{IUNeedsCoop}, this repetition is removed to produce an assembly in $\mathcal{D}$ that is essentially equivalent to removing a section of tiles from $\alpha$ and splicing together the exposed ends along matching glues.
This results in a shorter arm $\alpha'$ that still attempts to grow a keystone and flagpole, and hence leads to an invalid simulation of $\mathcal{T}$.  Technical details of this proof are in Section~\ref{sec:DaTAM_cannot_sim_aTAM_append}.

\ifabstract
\later{
\section{Technical details from Section~\ref{sec:DaTAM_cannot_sim_aTAM}}\label{sec:DaTAM_cannot_sim_aTAM_append}

Let $\mathcal{T} = (T, \sigma, 2)$ denote the system with $T$ and $\sigma$ given in Figure~\ref{fig:fingerFlagpole_overview_append}. The glues in the various tiles are all unique with the exception of the common east-west glue type used within each arm to induce non-deterministic and independent arm lengths.
Glues are shown in part (b) of Figure~\ref{fig:fingerFlagpole_overview_append}.
Note that cooperative binding happens at most once during growth, when attaching the keystone tile to two arms of identical length.
All other binding events are noncooperative and all glues are strength $2$ except for $g_{11}, g_{14}$ which are strength $1$.

\begin{figure}[htb]
\begin{center}
\vspace{-20pt}
\includegraphics[width=4in]{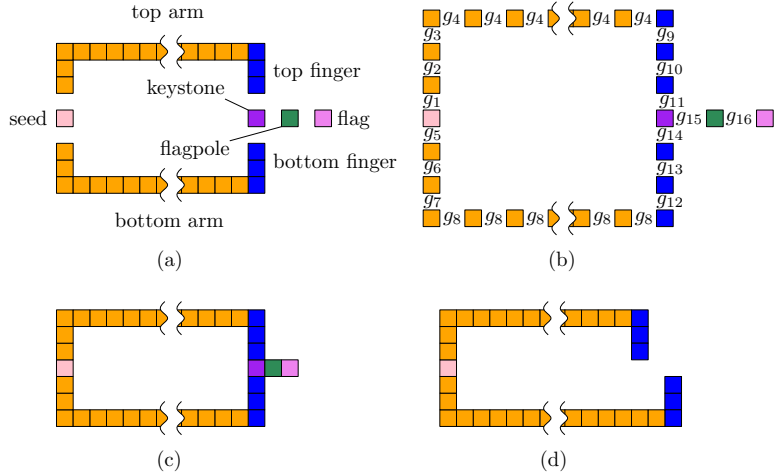}
\caption{(Figure taken from \cite{IUNeedsCoop}) (a) An overview of the tile assembly system $\mathcal{T} = (T,\sigma,2)$.~$\mathcal{T}$ runs at temperature 2 and its tile set $T$ consists of 18 tiles. (b) The glues used in the tileset $T$. Glues $g_{11}$ and $g_{14}$ are strength 1, all other glues are strength~2.  Thus the keystone tile binds with two ``cooperative'' strength~1 glues. Growth begins from the pink seed tile $\sigma$: the top and bottom arms are one tile wide and grow to arbitrary, nondeterministically chosen, lengths. Two blue figures grow as shown. (c) If the fingers happen to meet then the keystone, flagpole and flag tiles are placed, (d) if the fingers do not meet then growth terminates at the finger ``tips'':  the keystone, flagpole and flag tiles are not placed.}
\label{fig:fingerFlagpole_overview_append}
\vspace{-15pt}
\end{center}
\end{figure}

Recall that simulation of an aTAM system $\mathcal{T}$ by a simulating system $\mathcal{D}$ requires assembly in $\mathcal{D} = (T_\mathcal{D}, S, D, \sigma, \tau)$ of $m \times m$ supertiles that represent the tiles of~$\mathcal{T}$, and that are placed with the same dynamics (i.e.\ tile placement ordering, modulo rescaling) as~$T$.
In particular, $\mathcal{D}$ must simulate the creation of a terminal assembly with a flag by placing all of the supertiles in both arms first, then the keystone supertile, flagpole supertile, and finally flag supertile.
Though $\mathcal{D}$ is permitted to place tiles in {\em fuzz} supertile regions (i.e.\ adjacent to supertile regions with a non-empty represented tile type), $\mathcal{D}$ cannot put tiles in the flag supertile region before placing tiles that represent the flagpole tile.
That is, any assembly sequence of $\mathcal{D}$ placing a tile in the flag supertile region \emph{must} have already simulated an assembly sequence placing the flagpole tile, which in turn \emph{must} have already simulated an assembly sequence placing the keystone tile, and so on.

\begin{wrapfigure}{r}{0.4\textwidth}
\centering
	\includegraphics[width=2in]{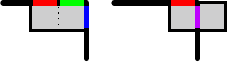}
	\caption{Two ways that a duple may be placed that touches a window (thick line).}
	\label{fig:dupleWindow}
\end{wrapfigure}

To show that the aTAM system $\mathcal{T}$ depicted in Figure~\ref{fig:fingerFlagpole_overview_append} cannot be simulated by a DaTAM system, we once again use a window movie lemma, only here the lemma is applied to dupled systems. First, we must observe that the definition of a window movie, Definition~\ref{def:windowMovie}, still holds for DaTAM systems. For a window $w$, consider the case where a duple binds to an assembly with sides that touch $w$. There are two cases to consider. (1) $w$ cuts the duple in half such that one tile of the duple lies on one side of $w$ and the other tile of the duple lies on the other side of $w$. (2) The duple touches $w$, but is not cut in half by $w$. See Figure~\ref{fig:dupleWindow}. Note that in either case, Definition~\ref{def:windowMovie} still makes sense. In case (2), we add all glues and tile locations of the duple that touch the window simultaneously to the window movie according to the definition. In case (1), we add two tile location/glue pairs to the window movie simultaneously. Therefore, the statement of the definition of a window movie for DaTAM systems is the same as for aTAM systems.  The statements of the window movie lemma found in~\cite{IUNeedsCoop} also holds for the DaTAM. The exact statement of this lemma is as follows.

\begin{lemma}[Window movie lemma]
\label{lem:windowmovie2}
Let $\vec{\alpha} = (\alpha_i \mid 0 \leq i < l)$ and $\vec{\beta} = (\beta_i \mid 0 \leq i < m)$, with
$l,m\in\Z^+ \cup \{\infty\}$,
be assembly sequences in $\mathcal{T}$ with results $\alpha$ and $\beta$, respectively.
Let $w_1$ be a window that partitions~$\alpha$ into two configurations~$\alpha_L$ and $\alpha_R$, and $w_2 = w_1 + \vec{c}$ be a translation of $w_1$ that partitions~$\beta$ into two configurations $\beta_L$ and $\beta_R$.
Furthermore, define $M_{\vec{\alpha},w_1}$, $M_{\vec{\beta},w_2}$ to be the respective window movies for $\vec{\alpha},w_1$ and $\vec{\beta},w_2$, and define $\alpha_L$, $\beta_L$ to be the subconfigurations of $\alpha$ and $\beta$ containing the seed tiles of $\alpha$ and $\beta$, respectively.
Then if $M_{\vec{\alpha},w_1} = M_{\vec{\beta},w_2}$, it is the case that  the following two assemblies are also producible:
(1) the assembly $\alpha_L \beta'_R = \alpha_L \cup \beta'_R$ and
(2) the assembly $\beta'_L \alpha_R = \beta'_L \cup \alpha_R$, where $\beta'_L=\beta_L-\vec{c}$ and $\beta'_R=\beta_R-\vec{c}$.
\end{lemma}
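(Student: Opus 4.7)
The plan is to mimic the constructive proof of Lemma~\ref{lem:windowmovie} (the closed rectangular case), lifting it from closed rectangular windows to arbitrary cut-set windows and from singleton-only assemblies to DaTAM assemblies that may contain duples. I would focus on producing $\alpha_L \cup \beta'_R$; the construction of $\beta'_L \cup \alpha_R$ is symmetric, obtained by swapping the roles of $\vec{\alpha}$ and $\vec{\beta}$. For notational convenience assume $\vec{c} = \vec{0}$, so that $w_1 = w_2 = w$ and $\beta'_L = \beta_L$, $\beta'_R = \beta_R$; the general case follows by first translating $\vec{\beta}$ by $-\vec{c}$.

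First I would build the assembly sequence $\vec{\gamma}$ by a lossy merge of $\vec{\alpha}$ and $\vec{\beta}$ synchronized against the common window movie $M = M_{\vec{\alpha},w} = M_{\vec{\beta},w}$, using essentially the algorithm of Figure~\ref{fig:algo-seq}. Walking through $M$ in order, at each entry $(v_k,g_k)$, if $v_k$ lies on the $\alpha_L$ side of $w$, advance $\vec{\alpha}$ and append to $\vec{\gamma}$ each intermediate step whose position lies in $\dom \alpha_L$, up to and including the step that produces $g_k$ at $v_k$; symmetrically advance $\vec{\beta}$ when $v_k$ lies on the $\beta_R$ side. After $M$ is exhausted, append any remaining steps of $\vec{\alpha}$ placing in $\dom \alpha_L$ and of $\vec{\beta}$ placing in $\dom \beta_R$. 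By construction, the result of $\vec{\gamma}$ is $\alpha_L \cup \beta_R$.

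Next I would verify that $\vec{\gamma}$ is a valid assembly sequence by the same local-environment induction used in Lemma~\ref{lem:windowmovie}: when $\vec{\gamma}$ places a tile or duple at a position of $\alpha_L$, its neighbors in $\vec{\gamma}$ consist of (i) tiles of $\alpha_L$ placed earlier in $\vec{\alpha}$, whose glue pattern is identical to what $\vec{\alpha}$ itself saw at the corresponding moment, and (ii) glues across $w$ supplied by $\beta_R$, which, by the equality $M_{\vec{\alpha},w}=M_{\vec{\beta},w}$ and the fact that the merge respects the order of $M$, are precisely the glues that $\vec{\alpha}$ relied on at that step; hence the $\tau$-stable bond condition carries over. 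A symmetric argument covers steps added from $\vec{\beta}$, and termination for infinite $\vec{\alpha},\vec{\beta}$ is handled via finite prefix approximations exactly as in the proof of Lemma~\ref{lem:windowmovie}.

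The main new obstacle, beyond what is handled in the closed-rectangular aTAM-style proof, is a duple that straddles $w$. Such a duple contributes a pair of simultaneous entries to $M$, one for each of its halves, and must be placed as a single atomic step in $\vec{\gamma}$. Because $M_{\vec{\alpha},w} = M_{\vec{\beta},w}$ records the positions and glues of both halves (including the internal duple glue that crosses $w$), the straddling duples in $\vec{\alpha}$ and $\vec{\beta}$ align in position and agree on all the glues exposed on $w$; the merge can use whichever of the two duples is contributed by the sequence currently being advanced, and the resulting placement is consistent with $\alpha_L \cup \beta_R$ as a set. The assembly $\beta'_L \cup \alpha_R$ is then produced by running the symmetric merge. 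Once the duple-straddling case is handled carefully, the rest of the argument is a direct generalization of the proof of Lemma~\ref{lem:windowmovie}.
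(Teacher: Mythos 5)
Your proposal follows essentially the same route as the paper: the paper's entire proof of Lemma~\ref{lem:windowmovie2} is the single remark that it is ``analogous to the proof of the window movie lemma for aTAM systems found in~\cite{IUNeedsCoop},'' and you spell out the same lossy-merge construction (the algorithm of Figure~\ref{fig:algo-seq} used for Lemma~\ref{lem:windowmovie}) together with the same local-environment induction. So the broad strategy matches, and you are filling in detail the paper leaves implicit.

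However, your handling of the duple that straddles $w$ --- exactly the one genuinely new feature of the DaTAM version, and exactly what the paper elides --- contains a gap. Suppose $\vec{\alpha}$ places a duple $(x_\alpha, y_\alpha, d)$ straddling $w$ with $x_\alpha$ on the $\alpha_L$ side and $y_\alpha$ on the $\alpha_R$ side, and $\vec{\beta}$ places a duple $(x_\beta, y_\beta, d)$ straddling $w$ at the corresponding location. The window movie for a case-(1) straddling duple records only tile location/glue pairs, and the only glue crossing $w$ is the internal duple glue $Glue_d(x)=Glue_{-d}(y)$. Hence $M_{\vec{\alpha},w}=M_{\vec{\beta},w}$ forces $Glue_d(x_\alpha)=Glue_d(x_\beta)$ but does \emph{not} force $x_\alpha=x_\beta$ or $y_\alpha=y_\beta$. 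The assembly $\alpha_L\cup\beta'_R$ then asks for $x_\alpha$ on one side of $w$ and $y_\beta$ on the other; but your merge places one of the two \emph{whole} duples, say $(x_\alpha,y_\alpha,d)$, whose result has $y_\alpha$ rather than $y_\beta$ at the $R$-side location. So the claim that ``the resulting placement is consistent with $\alpha_L\cup\beta_R$ as a set'' is not justified, and in fact $\alpha_L\cup\beta'_R$ need not even be producible: take $S=\emptyset$, $D=\{(A,B,d),(C,D,d)\}$ with the two duples sharing the internal glue but $B\neq D$; the window movies match, yet there is no way to place $A$ next to $D$. A correct version either (i) strengthens the hypothesis so that window movies additionally record the duple type of any duple placed across $w$ (after which the straddling duples in $\vec{\alpha}$ and $\vec{\beta}$ are forced to be identical and your merge goes through verbatim), or (ii) restricts attention to windows that no duple straddles, which is in fact how the lemma is deployed in the proof of Theorem~\ref{thm:DaTAM_cannot_sim_aTAM} via the bond-forming submovies of Corollary~\ref{cor:windowmovie2}. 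As written, your duple-straddling paragraph asserts the conclusion rather than establishing it.
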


The proof of Lemma~\ref{lem:windowmovie2} is analogous to the proof of the window movie lemma for aTAM systems found in~\cite{IUNeedsCoop}.
Like Lemma~\ref{lem:windowmovie}, Lemma~\ref{lem:windowmovie2} can be strengthened by relaxing the requirement that the window movies $M_{\vec{\alpha}, w_1} = M_{\vec{\beta}, w_2}$ match and only considering bond-forming submovies.

\begin{corollary}
\label{cor:windowmovie2}
The statement of Lemma~\ref{lem:windowmovie2} holds if the window movies $M_{\vec{\alpha},w_1}$ and $M_{\vec{\beta},w_2}$ are replaced by their bond-forming submovies ${\cal B}\left(M_{\vec{\alpha},w_1}\right)$ and ${\cal B}\left(M_{\vec{\beta},w_2}\right)$.
\end{corollary}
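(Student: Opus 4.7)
The plan is to adapt the constructive proof of Lemma~\ref{lem:windowmovie2} with essentially no structural change, exploiting the fact that only bond-forming events across the window matter for $\tau$-stability. Recall that the bond-forming submovie ${\cal B}(M_{\vec{\alpha},w_1})$ is the subsequence of $M_{\vec{\alpha},w_1}$ consisting of those $(v,g)$ pairs for which $g$ actually bonds with a matching glue on a tile placed on the opposite side of $w_1$ in $\alpha$. Thus the hypothesis ${\cal B}(M_{\vec{\alpha},w_1}) = {\cal B}(M_{\vec{\beta},w_2})$ (up to the translation $\vec{c}$) is exactly that every across-window bond in $\alpha$ has a matching across-window bond in $\beta$, and vice versa, in the same order.

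Given this, I would run the interleaving algorithm from the proof of Lemma~\ref{lem:windowmovie2} verbatim, except that synchronization between $\vec{\alpha}$ and $\vec{\beta}$ is driven by the entries of the bond-forming submovies rather than the entries of the full window movies. Entries of $M_{\vec{\alpha},w_1}$ that do not appear in ${\cal B}(M_{\vec{\alpha},w_1})$ correspond to glues exposed at the window that form no bond; these may safely differ between $\vec{\alpha}$ and $\vec{\beta}$ because they contribute no strength to any tile and, in the original sequences, did not match anything across the window. As in Lemma~\ref{lem:windowmovie2}, the algorithm outputs assembly sequences whose results are $\alpha_L\beta'_R$ and $\beta'_L\alpha_R$, respectively.

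The main technical step, and the place I expect to spend the most care, is verifying that each placement in the interleaved sequence is $\tau$-stable. For a tile (or duple) $t$ placed by the $\vec{\alpha}$-portion of the sequence, its bonds with neighbors in $\alpha_L$ are inherited directly from $\vec{\alpha}$, while each bond $t$ formed \emph{across} $w_1$ in $\vec{\alpha}$ corresponds to an entry of ${\cal B}(M_{\vec{\alpha},w_1})$ whose matching glue, by hypothesis, is present in $\beta'_R$ at the translated position. Hence $t$'s total binding strength in the splice is at least its binding strength in $\vec{\alpha}$, and therefore at least $\tau$; any ``extra'' bond created because $\beta'_R$ happens to expose a glue matching a previously non-bonding glue of $t$ only increases stability and is harmless. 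The symmetric argument handles placements from $\vec{\beta}$. The principal subtlety specific to the DaTAM is the case of a duple straddling the window as in Figure~\ref{fig:dupleWindow}(1): such a duple contributes a strength-$\tau$ bond across the window, so it must appear in the bond-forming submovie; by hypothesis the matching pair of simultaneous entries appears in ${\cal B}(M_{\vec{\beta},w_2})$, so this duple placement can be replayed in the spliced sequence in the same way the original algorithm already handles simultaneous entries. With all placements thus validated, the two spliced assemblies are producible, establishing the corollary.
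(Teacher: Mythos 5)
The paper states this corollary without proof, merely observing (just after Lemma~\ref{lem:windowmovie2}) that it can be strengthened to bond-forming submovies ``in the manner of'' the analogous corollary in \cite{IUNeedsCoop}. Your proposal supplies exactly the argument the paper leaves implicit and is correct: you rerun the interleaving construction synchronized on bond-forming events, observe that non-bonding window glues contribute zero strength and so their discrepancies are harmless, note that accidental extra matches in the splice only increase stability, and correctly isolate the DaTAM-specific case of a window-straddling duple and point out that the window-movie definition's handling of simultaneous entries already covers it.
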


Using this corollary, we can now prove the following Theorem~\ref{thm:DaTAM_cannot_sim_aTAM}.

\begin{proof}
For the sake contradiction, suppose that $\mathcal{D} = (T_\mathcal{D}, S,D,\sigma,\tau)$ is a DaTAM system that simulates $\mathcal{T}$ with representation function $R: \mathcal{A}^{T_\mathcal{D}} \rightarrow \mathcal{A}^T$. Moreover, suppose that $m\in \mathbb{N}$ is the size of the macrotiles in $\mathcal{D}$ that represent (under R) tiles in $T$ and that $g$ is the number of glues of tiles in $T_\mathcal{D}$. We will show that $\mathcal{D}$ is capable of producing an invalid assembly sequence. For any $d \in \mathbb{N}$, it must be the case that $\mathcal{D}$ can simulate the production of the assembly $\alpha_d$ in $\termasm{T}$ where the top and bottom arms of $\alpha_d$ are $d$ tiles wide. Note that for every $d$, $\alpha_d$ is of the form (c) in Figure~\ref{fig:fingerFlagpole_overview_append}. Now consider windows as depicted in Figure~\ref{fig:fingerFlagpoleSim} that cut an arm of some $\alpha_d$ vertically. (In the Figure~\ref{fig:fingerFlagpoleSim} the bottom arm is the one being cut.) Let $\vec{\beta}_d$ be an assembly sequence of $\mathcal{D}$ such that under $R$, $\vec{\beta}_d$ gives a valid assembly sequence $\vec{\alpha}_d$ of $\mathcal{T}$ whose unique limiting assembly is $\alpha_d$. Notice that since $m$ and $g$ are fixed constants, for $d$ sufficiently large, there exists two such window movies $w_1$ and $w_2$ such that $w_2$ is a horizontal translation $w_1$ and the window movies, $M_{\vec{\alpha}_d, w_1}$ and $M_{\vec{\alpha}_d, w_2}$, are equal. Figure~\ref{fig:fingerFlagpoleSim} gives an example of this. Notice that we can also choose $w_1$ and $w_2$ so that the distance between them is at least $3m$.

\begin{figure}[htb]
\begin{center}
\includegraphics[width=4.5in]{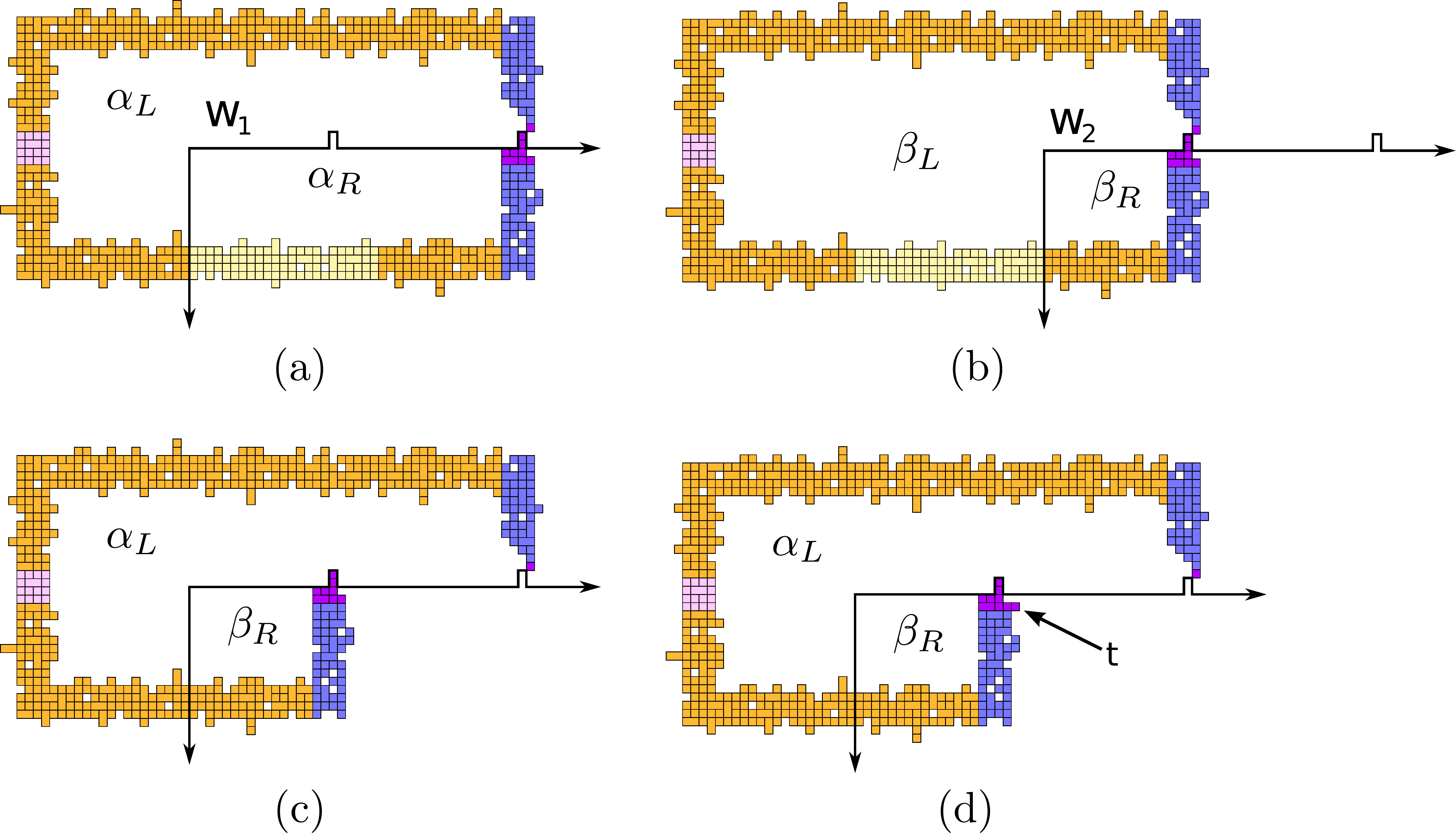}
\caption{An example of an assembly formed by $\mathcal{D}$ simulating $\mathcal{T}$ and the identical bond-forming submoviews $w_1$ and $w_2$ ((a) and (b)), and the resulting producible assembly constructed via Corollary~\ref{cor:windowmovie2} (c), and the production of an invalid simulation assembly by the valid placement of a single tile $t$ (d).}
\label{fig:fingerFlagpoleSim}
\end{center}
\end{figure}

In the assembly sequence $\vec{\beta}_d$, consider the assembly $\beta^*$ just prior to the binding of the first tile or duple $t$ that satisfies the condition that $t$
occupies a tile location outside to the north, east, or west of the $m\times m$ block of tiles that maps to the keystone tile under $R$.
Now, $w_1$ (respectively $w_2$) divides $\beta^*$ into configurations $\alpha_L$ and $\alpha_R$ (respectively $\beta_L$ and $\beta_R$).
By Corollary~\ref{cor:windowmovie2}, $\alpha_L\beta_R$ (depicted in Figure~\ref{lem:windowmovie2}(c)) is a valid assembly in $\mathcal{D}$. Without loss of generality, we may assume that $t$ binds due to a temperature $1$ exposed glue of either $\alpha_L$ or $\beta_R$. In other words $t$ may bind to $\alpha_L\beta_R$. In the assembly $\alpha_L\beta_R$, the $m\times m$ block, $B$, of tiles to the north of the bottom finger is technically \emph{fuzz} (see Section~\ref{sec:prelims} for details about fuzz) and the presence of tiles in this block does not give an invalid producible assembly in $\mathcal{D}$ since this block can be mapped to the empty tile under $R$. In fact, for the simulation to be valid, $B$ must map to the empty tile under $R$. However, notice that when $t$ binds, it binds outside of $B$ and yields an invalid assembly sequence in $\mathcal{D}$. In particular, if $t$ binds to the east or west of $B$, this results in \emph{diagonal} fuzz, which is not permitted by the definition of simulation, and if $t$ binds to the north of $B$, then this results in fuzz that is at a distance greater than $m$ from any macrotiles representing a tile in $T$, which is not allowed by the definition of simulation. In either case, this contradicts the assumption that $\mathcal{D}$ simulates $\mathcal{T}$.
\end{proof}

} 

\bibliographystyle{splncs03}
\vspace{-15pt}
\bibliography{tam}
\vspace{-15pt}
\ifabstract
\newpage
\appendix
\magicappendix
\fi

\end{document}